\documentclass[12pt,a4paper]{amsart}
\usepackage[OT1]{fontenc}
\usepackage[utf8]{inputenc}
\usepackage[sc]{mathpazo}
\usepackage{amsmath,amssymb,amsfonts,mathrsfs}
\usepackage{mathabx}
\usepackage{graphicx}
\usepackage{soul}
\usepackage{pdfpages}
\usepackage[parfill]{parskip}
\usepackage{todonotes}
\usepackage{enumerate}

\numberwithin{equation}{section}
\usepackage[margin=2.9cm]{geometry}
\usepackage{epstopdf}

\numberwithin{equation}{section}

\newtheorem{theorem}{Theorem}[section]

\newtheorem{remark}[theorem]{Remark}
\newtheorem{corollary}[theorem]{Corollary}

\newtheorem{lemma}[theorem]{Lemma}
\newtheorem{proposition}[theorem]{Proposition}
\newtheorem{assumption}[theorem]{Assumption}

\theoremstyle{nonumberplain}


\newcommand{\RR}{\mathbb{R}}

\newcommand{\EE}{\mathbb{E}}

\newcommand{\E}[1]{\mathbb{E}\left[#1\right]}                     
\newcommand{\set}[1]{\left\{#1\right\}}
\newcommand{\Rplus}{\mathbb{R}_{\geq 0 }}




\renewcommand{\epsilon}{\ensuremath\varepsilon}


\renewcommand{\phi}{\ensuremath{\varphi}}

\begin{document}

\title[A comparison principle between rough and non-rough Heston models]{A comparison principle between rough and non-rough Heston models -- with applications to the volatility surface}

\author[M. Keller-Ressel]{Martin Keller-Ressel}
\address{TU Dresden, Institute of Mathematical Stochastics}
\email{martin.keller-ressel@tu-dresden.de}
\author[A. Majid]{Assad Majid}
\address{TU Dresden, Institute of Mathematical Stochastics}
\email{assad.majid@tu-dresden.de}
\begin{abstract}We present a number of related comparison results, which allow to compare moment explosion times, moment generating functions and  critical moments between rough and non-rough Heston models of stochastic volatility. All results are based on a comparison principle for certain non-linear Volterra integral equations. Our upper bound for the moment explosion time is different from the bound introduced by Gerhold, Gerstenecker and Pinter (2018) and tighter for typical parameter values. The results can be directly transferred to a comparison principle for the asymptotic slope of implied volatility between rough and non-rough Heston models. This principle shows that the ratio of implied volatility slopes in the rough vs.\ the non-rough Heston model increases at least with power-law behavior for small maturities. 
\end{abstract}
\maketitle
\section{Introduction} 
It is well-known that classic stochastic volatility models are not able to accurately reproduce all features of the observed implied volatility surface. In particular for short maturities, it is frequently seen that Markovian diffusion-driven models, such as the Heston model \cite{heston1993closed}, produce a smile which is flatter and less skewed  than the implied smile of observed market data \cite{bakshi1997empirical}. While adding jumps to the stock price dynamics can mitigate some of these deficiencies (cf. \cite{bates1996jumps, jacquier2013large}) a recently emerging alternative is given by \textit{rough volatility models} \cite{gatheral2018volatility}. In such models, volatility is modeled by a non-Markovian stochastic process comparable to fractional Brownian motion with low Hurst index (e.g. $H \approx 0.1$). Apart from the more realistic behavior of implied volatility, the rough volatility approach is also supported by econometric analyses of volatility time series \cite{gatheral2018volatility, fukasawa2019volatility}. While the behavior of implied volatility in rough models has mainly been studied by numerical computation, analytic results have been obtained e.g. in \cite{forde2017asymptotics, guennoun2018asymptotic, forde2019small} (short- and/or long-maturity asymptotics) and \cite{fukasawa2017short, bayer2019short} (short-time asymptotics of at-the-money skew).

Here, we focus on wing asymptotics (small- and large-strike) of implied volatility in the rough Heston model of \cite{eleuch2019characteristic} (see also \cite{eleuch2018perfect}), which is becoming increasingly popular due to its tractability and its connections with affine processes (see \cite{jaber2017affine, gatheral2018affine, keller-ressel2018affine}).\\
Starting with the results of \cite{lee2004moment} (see also \cite{benaim2009black}) it has become well-understood that wing asymptotics of implied volatility are intimately connected to moment explosions in the underlying stochastic model (see also \cite{friz2010encyclopedia}). Consequently, a first study of moment explosions in the rough Heston model has been undertaken by Gerhold, Gerstenecker and Pinter \cite{gerhold2018moment}. The authors derive a lower and upper bound for moment explosion times and a method for their numerical approximation (valid in a certain parameter range). We build on the results of \cite{gerhold2018moment} and derive a new upper bound for the moment explosion time in the rough Heston model (Thm.~\ref{thm:heston_upper_bound}). Our new bound is usually tighter than the upper bound of \cite{gerhold2018moment} and, more importantly, given by a transformation of the classic Heston explosion time, thus allowing for direct comparison between rough and non-rough Heston models. The result rests on a comparison principle for non-linear Volterra integral equations and leads to a number of further comparison results: For the moment generating functions of rough and non-rough Heston model (Thm.~\ref{thm:mgf_comparison}), for their critical moments (Thm.~\ref{thm:crit_mom_bound}) and finally for the implied volatility slope in the wings of the smile. We highlight Theorem~\ref{thm:implied_vola_bound}, which concerns the asymptotic slope of left-wing implied volatility ($AIVS_\alpha^-(T)$) in a (negative-leverage) rough Heston model in dependency on maturity $T$ and the roughness parameter $\alpha = H + \tfrac{1}{2}$. This slope can be lower bounded by the time-changed and rescaled slope of a non-rough Heston model ($AIVS_1^-(T)$) as
	\begin{equation*}
		AIVS_{\alpha}^- (T) \ge \frac{T^{\alpha -1}}{\alpha \Gamma(\alpha )} AIVS_1^-\left(\frac{T^{\alpha}}{\alpha \Gamma(\alpha)}\right)
	\end{equation*}
	for all $T$ smaller than a certain threshold $\mathfrak{T}'_\alpha$. 
Slightly weaker results that also apply to the right wing are finally given in Theorem~\ref{thm:AIVS_smallT}.

\section{Preliminaries}
\subsection{The rough Heston model}

We consider the rough Heston model \cite{eleuch2019characteristic, eleuch2018perfect} for a risk-neutral asset-price $S$ with spot variance $V$, given by
\begin{subequations}\label{eq:rough_Heston}
\begin{align}
dS_t &= S_t \sqrt{V_t} dW_t\\
V_t &= V_0 + \int_0^t \kappa_\alpha(t-s) \lambda(\theta(s) - V_s)ds + \eta \int_0^t \kappa_\alpha(t-s) \sqrt{V_s} dB_s
\end{align}
\end{subequations}
where $V_0$, $\lambda$ and $\eta$ are positive,  $\theta \in L^1_\text{loc}(\Rplus, \Rplus)$, $(B, W)$ are Brownian motions with constant correlation $\rho \in (-1,1)$ and $\kappa_\alpha$ is the power-law kernel
\[\kappa_\alpha(t) = \frac{1}{\Gamma(\alpha)} t^{\alpha-1}, \qquad \alpha \in (\tfrac{1}{2},1].\]
It was shown in \cite[Thm.~2.1]{eleuch2019characteristic} that $V$ is H\"older-continuous with exponent in $[0,\alpha - \tfrac{1}{2})$ and therefore that $\alpha$ controls the `roughness' of the variance process $V$. Other important properties of the rough Heston model, such as the decay of at-the-money implied volatility slope are also linked to the parameter $\alpha$, cf. \cite[Sec.~5.2]{eleuch2019characteristic}.\\
In the particular case $\alpha = 1$, the kernel becomes constant, i.e., $\kappa_1 \equiv 1$ and $V$ can be written in the familiar SDE form
\begin{equation}\label{eq:nonrough_Heston}
dV_t = \lambda (\theta(t) - V_t) dt + \eta \sqrt{V_t} dB_t.
\end{equation}
In this case $(S,V)$ becomes an extended Heston model with time-varying mean reversion level, as considered by \cite[Ex.~3.4]{buehler2006volatility} in the context of variance curve models. If also $\theta$ is constant, the Heston model of \cite{heston1993closed} (`classic Heston model') is recovered.

\subsection{The moment generating function of the rough Heston model}

Our comparison principle rests on the moment generating function of $X = \log S$, which has been studied in \cite{eleuch2018perfect} and \cite{gerhold2018moment}.\footnote{Related results for more general kernels $\kappa$ and multivariate generalizations (`affine Volterra processes') can be found in \cite{jaber2017affine, gatheral2018affine}} Define the \emph{moment explosion time} of the $\alpha$-rough Heston model
\begin{equation}\label{eq:moment_explosion}
T_\alpha^*(u) := \sup \set{t \ge 0: \E{e^{uX_t}} < \infty}, \qquad u \in \RR
\end{equation}
and the quadratic polynomial
\begin{equation}\label{eq:Rdef}
R(u,w) := \tfrac{1}{2}u(u-1) + w (\rho \eta u - \lambda) +\frac{\eta^2}{2}w^2,
\end{equation}
which can be considered the `symbol' of the Heston model in the sense of \cite{hoh1998symbolic}. Moreover denote the Riemann-Liouville left-sided fractional integral and derivative operators by
\begin{equation*}
	I^{\alpha} f(t) := \frac{1}{\Gamma(\alpha)} \int_0^t (t-s)^{\alpha - 1} f(s) \, ds \quad \text{and} \quad D^{\alpha} f(t) := \frac{d}{d t} I_t^{1-\alpha} f(t)
\end{equation*}
The moment generating function of the rough Heston model is given by the following result:
\begin{theorem}[\cite{eleuch2018perfect, gerhold2018moment}]\label{thm:mgf1}
In the rough Heston model, the log-price $X = \log S$ satisfies 
\begin{equation}\label{eq:rough_Heston_char_func}
	\EE \left[ e^{u X_t} \right] = \exp \left( \lambda \int_0^t \theta(t-s) \psi_{\alpha}(s,u)ds  + V_0 I^{1-\alpha}_t \psi_{\alpha}(t,u) \right)\end{equation}
for all $u \in  \RR$, $t \in [0,T_\alpha^*(u))$ and $\psi_{\alpha}(\cdot,u)$ solves the fractional Riccati equation
\begin{equation}\label{eq:frac_Riccati}
	 D^{\alpha} \psi_{\alpha}(t,u) = R(u,\psi_{\alpha}(t,u)), \quad I^{1-\alpha} \psi_{\alpha}(0,u) = 0.
\end{equation}
\end{theorem}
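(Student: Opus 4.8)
The plan is to follow the Markovian-approximation strategy of \cite{eleuch2019characteristic}. First I would approximate the singular power-law kernel $\kappa_\alpha$ by a sequence of completely monotone kernels $\kappa^n(t) = \sum_{i=1}^{n} c_i^n e^{-\gamma_i^n t}$, $c_i^n, \gamma_i^n > 0$, chosen so that $\kappa^n \to \kappa_\alpha$ in $L^1_{\mathrm{loc}}(\Rplus)$ and in a sense strong enough to ensure stability of the associated stochastic Volterra equation. Substituting $\kappa^n$ for $\kappa_\alpha$ in \eqref{eq:rough_Heston} makes the variance decompose as $V^n = V_0 + \sum_{i=1}^n V^{n,i}$ with $dV^{n,i}_t = -\gamma_i^n V^{n,i}_t\,dt + c_i^n\bigl(\lambda(\theta(t) - V^n_t)\,dt + \eta\sqrt{V^n_t}\,dB_t\bigr)$ and $V^{n,i}_0 = 0$, so that $(X^n, V^{n,1},\dots,V^{n,n})$ is a (time-inhomogeneous, if $\theta$ is non-constant) affine process on $\RR \times \Rplus^{n}$: its drift and squared diffusion coefficient are affine in $V^n$, which in turn is affine in the factors.

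Second, I would read off from the affine transform formula that $\EE\left[e^{u X^n_t}\right] = \exp(\phi^n(t,u))$, where the dual functions $\psi^n_i(\cdot,u)$ solve the ordinary Riccati system
\begin{equation*}
\partial_t \psi^n_i(t,u) = -\gamma_i^n \psi^n_i(t,u) + R\bigl(u,\Psi^n(t,u)\bigr), \quad \psi^n_i(0,u) = 0, \quad i = 1,\dots,n,
\end{equation*}
with $\Psi^n(t,u) := \sum_{i=1}^n c_i^n \psi^n_i(t,u)$, and $\phi^n(t,u) = \lambda\int_0^t \theta(t-s)\Psi^n(s,u)\,ds + V_0\int_0^t R(u,\Psi^n(s,u))\,ds$. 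The quadratic $R$ of \eqref{eq:Rdef} appears here because $\tfrac12 u(u-1) + \rho\eta u w + \tfrac{\eta^2}{2}w^2 - \lambda w = R(u,w)$. Multiplying the equation for $\psi^n_i$ by $c_i^n$, summing over $i$ and integrating collapses the system to the single scalar Volterra equation $\Psi^n(t,u) = \int_0^t \kappa^n(t-s)\,R(u,\Psi^n(s,u))\,ds$.

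Third comes the double passage to the limit. On a compact interval $[0,T] \subset [0,T_\alpha^*(u))$ I would establish \emph{a priori} bounds on $\Psi^n$ uniform in $n$ together with equicontinuity, extract by Arzel\`a--Ascoli a subsequential uniform limit $\psi$, and let $n\to\infty$ in the Volterra equation using $\kappa^n\to\kappa_\alpha$ in $L^1_{\mathrm{loc}}$ and continuity of $R$, obtaining $\psi(t,u) = \int_0^t \kappa_\alpha(t-s)\,R(u,\psi(s,u))\,ds$, which is the integrated form of \eqref{eq:frac_Riccati} (apply $I^{1-\alpha}$, use $I^{1-\alpha}I^\alpha = I^1$, and differentiate). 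Local Lipschitz continuity of $R$ gives uniqueness of this Volterra--Riccati solution, so $\psi = \psi_\alpha(\cdot,u)$ and the full sequence converges. Simultaneously, stability of the stochastic Volterra equation under $\kappa^n\to\kappa_\alpha$ yields $V^n\to V$ and hence $X^n_t\to X_t$ in distribution; together with a uniform integrability bound $\sup_n\EE\left[e^{(u+\epsilon)X^n_t}\right]<\infty$ for some small $\epsilon>0$ (legitimate because $t<T_\alpha^*(u)$) this gives $\EE\left[e^{uX^n_t}\right]\to\EE\left[e^{uX_t}\right]$. Since also $\phi^n(t,u)\to\lambda\int_0^t\theta(t-s)\psi_\alpha(s,u)\,ds + V_0\int_0^t R(u,\psi_\alpha(s,u))\,ds$, and $\int_0^t R(u,\psi_\alpha(s,u))\,ds = \int_0^t D^\alpha\psi_\alpha(s,u)\,ds = I^{1-\alpha}_t\psi_\alpha(t,u)$ by the initial condition in \eqref{eq:frac_Riccati}, the limit of $\phi^n(t,u)$ is exactly the exponent in \eqref{eq:rough_Heston_char_func}. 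Equating the two limits of $\EE\left[e^{uX^n_t}\right]$ proves the identity, and the construction of $\psi$ delivers the asserted solution of \eqref{eq:frac_Riccati}.

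I expect the real difficulty to lie in those two limiting arguments rather than in the affine and fractional-calculus bookkeeping: obtaining bounds on $\Psi^n$ uniform in $n$ on intervals exhausting $[0,T_\alpha^*(u))$ — equivalently, ruling out that the moment-explosion times of the approximating models collapse below $T_\alpha^*(u)$ — and securing the uniform integrability needed to pass the limit inside the expectation. Both would be controlled by a monotonicity property of the Volterra--Riccati equation in the kernel: in the parameter range where explosion occurs one has $R(u,0)\ge 0$, the solution stays nonnegative, and a monotone choice $\kappa^n\uparrow\kappa_\alpha$ forces $0\le\Psi^n(\cdot,u)\le\psi_\alpha(\cdot,u)$ on $[0,T]$, which supplies the uniform bound (in the remaining ranges $T_\alpha^*(u)=\infty$ and everything is globally bounded). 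A more self-contained alternative would bypass the approximation entirely and verify directly — through the martingale property, in the `adjusted forward process' formulation of affine Volterra processes, of $\exp(uX_t+\cdots)$ built from a solution of \eqref{eq:frac_Riccati} — that this process is a true martingale on $[0,T_\alpha^*(u))$, trading the approximation for a delicate localization.
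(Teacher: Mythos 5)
The paper offers no proof of Theorem~\ref{thm:mgf1}; it is imported verbatim from \cite{eleuch2018perfect, gerhold2018moment}, so there is nothing in-paper to compare against. Taken on its own terms, your sketch is a credible reconstruction, but two of its pieces deserve scrutiny.

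First, a point of attribution and architecture: \cite{eleuch2019characteristic} derives the characteristic function via a scaling limit of nearly unstable Hawkes processes, not via a sum-of-exponentials Markovian lift. What you describe is closer to the multi-factor approximation of Abi Jaber and El Euch, and in spirit to the affine Volterra route of \cite{eleuch2018perfect} (your ``self-contained alternative'' at the end). The affine bookkeeping you carry out is correct — the factor decomposition, the Riccati system $\partial_t \psi^n_i = -\gamma_i^n\psi^n_i + R(u,\Psi^n)$ with $\psi^n_i(0)=0$, its collapse after convolving with $c_i^n$ to the scalar Volterra equation with kernel $\kappa^n$, and the identity $\int_0^t R(u,\psi_\alpha(s,u))\,ds = I^{1-\alpha}_t\psi_\alpha(t,u)$ — but note the state space is \emph{not} $\RR \times \Rplus^n$: the individual factors $V^{n,i}$ start at $0$ and are pushed negative whenever $V^n > \theta$, so only the sum $V^n = V_0 + \sum_i V^{n,i}$ is non-negative. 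You can still justify the exponential-affine formula by a direct It\^o/martingale verification, but invoking the canonical affine-process transform theory for $\RR \times \Rplus^n$ is not available off the shelf.

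Second, and more seriously, the uniform-in-$n$ bound $0 \le \Psi^n \le \psi_\alpha$ that underpins both the Arzel\`a--Ascoli step and the uniform integrability: the kernel-comparison argument you gesture at is clean in case (A), where $R(u,\cdot)$ is non-negative and increasing on $\Rplus$ and the standard Volterra comparison works (if $\kappa^n \le \kappa_\alpha$ and $\Psi^n(s)\le\psi_\alpha(s)$ for $s<\tau$ then $\Psi^n(\tau)\le\int_0^\tau \kappa_\alpha(\tau-s)R(u,\psi_\alpha(s))\,ds = \psi_\alpha(\tau)$). But explosion also occurs in case (B), where $R(u,\cdot)$ first decreases, so $\Psi^n \le \psi_\alpha$ does not imply $R(u,\Psi^n)\le R(u,\psi_\alpha)$ pointwise and the induction breaks. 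This is precisely the obstruction that forces the paper's own Theorem~\ref{thm:comparison}(b) to replace $R$ by its increasing lower envelope $\overline R$, but that trick yields comparison against an auxiliary ODE solution $\overline\psi_1$, not against $\psi_\alpha$ with a larger kernel, so it does not supply the uniform bound you need on compact subsets of $[0,T_\alpha^*(u))$ in case (B). Either you need a separate kernel-monotonicity result for Volterra equations with non-monotone $R$ (which does not follow from what is in the paper), or you should restrict this part of the argument to case (A) and handle case (B) by the alternative martingale route you mention at the end. Finally, the existence of a monotone exponential approximation $\kappa^n \uparrow \kappa_\alpha$ is asserted without construction; it is available because $\kappa_\alpha$ is completely monotone (discretize the Bernstein measure monotonically), but it is a non-trivial ingredient and should be stated.
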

In the case of the non-rough Heston model (i.e, $\alpha = 1$) the operator $I^{1-\alpha}$ vanishes, $D^{\alpha}$ becomes an ordinary derivative and the fractional Riccati equation \eqref{eq:frac_Riccati} turns into the familiar Riccati ODE. Moreover, the solution $\psi_1$ and the moment explosion time $T^*_1(u)$ are explicitly known (cf. \cite{andersen2007moment, keller-ressel2011moment}) in the classic Heston case. The above theorem is complemented by the following result:
\begin{theorem}[\cite{gerhold2018moment}]\label{thm:mgf2}
The fractional Riccati equation \eqref{eq:frac_Riccati} is equivalent to the Riccati-Volterra integral equation
\begin{equation}\label{eq:Volterra_Riccati}
\psi_\alpha(t,u) = \int_0^t \kappa_\alpha(t-s) R(u,\psi_\alpha(s,u))ds
\end{equation}
and $T^*_\alpha(u) = \hat T_\alpha(u)$, where
\begin{equation}\label{eq:Tstar}
\hat T_\alpha(u) := \sup \set{t \ge 0: \psi_\alpha(t,u) < \infty}.
\end{equation}
\end{theorem}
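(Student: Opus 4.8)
The statement bundles two essentially independent facts, and I would prove each separately. The first, the equivalence of the fractional Riccati equation \eqref{eq:frac_Riccati} with the Riccati--Volterra equation \eqref{eq:Volterra_Riccati}, is a pure fractional-calculus manipulation. Since $\int_0^t \kappa_\alpha(t-s) g(s)\,ds = I^\alpha g(t)$, equation \eqref{eq:Volterra_Riccati} reads $\psi_\alpha(\cdot,u) = I^\alpha\!\left[R(u,\psi_\alpha(\cdot,u))\right]$. Starting from \eqref{eq:frac_Riccati}, I would apply $I^\alpha$ to both sides and use the composition rule $I^\alpha D^\alpha f(t) = f(t) - \kappa_\alpha(t)\,(I^{1-\alpha}f)(0^+)$ (which follows from $D^\alpha f = \tfrac{d}{dt}I^{1-\alpha}f$, the semigroup property $I^\alpha I^{1-\alpha} = I^1$, and the Leibniz-type formula for $\tfrac{d}{dt}I^\alpha$); the initial condition $(I^{1-\alpha}\psi_\alpha)(0,u)=0$ kills the boundary term and produces \eqref{eq:Volterra_Riccati}. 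Conversely, applying $D^\alpha$ to \eqref{eq:Volterra_Riccati} and using $D^\alpha I^\alpha = \mathrm{id}$ gives back the fractional ODE, while $I^{1-\alpha}\psi_\alpha(t,u) = I^{1-\alpha}I^\alpha\!\left[R(u,\psi_\alpha(\cdot,u))\right](t) = \int_0^t R(u,\psi_\alpha(s,u))\,ds$ vanishes at $t=0$, recovering the initial condition. These identities require only that $s\mapsto R(u,\psi_\alpha(s,u))$ be locally integrable, which holds on $[0,\hat T_\alpha(u))$ because there $\psi_\alpha(\cdot,u)$ is finite and continuous (indeed, from the Volterra form, continuous with $\psi_\alpha(0,u)=0$) and $R$ is a polynomial.

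For the explosion-time identity $T^*_\alpha(u) = \hat T_\alpha(u)$ I would argue by two inequalities. For $\hat T_\alpha(u) \le T^*_\alpha(u)$: fix any $t < \hat T_\alpha(u)$. Then $\psi_\alpha(\cdot,u)$ is finite and continuous on the compact set $[0,t]$, hence so is $s \mapsto I^{1-\alpha}_s\psi_\alpha(s,u)$, and since $\theta \in L^1_{\mathrm{loc}}$ the right-hand side of \eqref{eq:rough_Heston_char_func} is a finite real number; therefore $\EE[e^{uX_t}] < \infty$, so $t \le T^*_\alpha(u)$, and taking the supremum over such $t$ gives the bound. For the reverse inequality $T^*_\alpha(u) \le \hat T_\alpha(u)$: Theorem~\ref{thm:mgf1} asserts that on the entire interval $[0,T^*_\alpha(u))$ the function $\psi_\alpha(\cdot,u)$ solves the fractional Riccati equation and makes \eqref{eq:rough_Heston_char_func} an equality of finite numbers; in particular $\psi_\alpha(t,u)<\infty$ for every $t<T^*_\alpha(u)$, which is exactly the statement $T^*_\alpha(u) \le \hat T_\alpha(u)$. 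Combining the two gives equality. (For $u \in [0,1]$ one has $R(u,0)\le 0$, there is no moment explosion, and both sides equal $+\infty$; the substance of the statement concerns $u \notin [0,1]$.)

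The manipulations above are routine, but they rest on three analytic facts about $\psi_\alpha$ that I regard as the real work and the main obstacle. First, local well-posedness: existence and uniqueness of a continuous solution of \eqref{eq:Volterra_Riccati} on a small interval $[0,\delta]$, which follows from a contraction/Picard argument for weakly singular Volterra equations using $\kappa_\alpha \in L^1_{\mathrm{loc}}(\Rplus)$ and the local Lipschitz property of $R(u,\cdot)$. Second, a continuation principle identifying a maximal interval of existence $[0,\hat T_\alpha(u))$. Third — and this is the delicate point — the fact that at the right endpoint the solution genuinely explodes, $\psi_\alpha(t,u)\to\infty$ as $t\uparrow\hat T_\alpha(u)$, rather than merely failing to extend. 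For the quadratic $R$ with positive leading coefficient $\tfrac{\eta^2}{2}$ and the nonnegative kernel $\kappa_\alpha$, one expects monotone blow-up: when $u\notin[0,1]$ one has $R(u,0)=\tfrac12 u(u-1)>0$, which forces $\psi_\alpha(\cdot,u)$ to be nondecreasing (and nonnegative), and a comparison with a scalar problem — or a Grönwall-type estimate on $\int_0^t \kappa_\alpha(t-s)\psi_\alpha(s,u)^2\,ds$ — shows the solution cannot stay bounded up to $\hat T_\alpha(u)$. This monotone-blow-up property is precisely what makes ``$\psi_\alpha(\cdot,u)$ finite on $[0,t]$'' equivalent to ``$\psi_\alpha(\cdot,u)$ extends as a solution to $[0,t]$'', and hence what legitimises $\hat T_\alpha(u)$ as the object one later transforms into, and compares with, the explicitly known classic Heston explosion time $T^*_1(u)$.
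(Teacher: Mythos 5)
The paper does not prove this theorem itself; it is cited from \cite{gerhold2018moment}, with a further pointer to \cite[Thm.~3.10]{kilbas2006theory} for the fractional-calculus equivalence. So there is no ``paper's proof'' to match; your attempt has to be assessed on its own.

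Your treatment of the equivalence between \eqref{eq:frac_Riccati} and \eqref{eq:Volterra_Riccati} is essentially correct: the composition identity $I^\alpha D^\alpha f(t) = f(t) - \kappa_\alpha(t)\,(I^{1-\alpha}f)(0^+)$ and $D^\alpha I^\alpha = \mathrm{id}$ do the job exactly as you describe, and the regularity needed is available on $[0,\hat T_\alpha(u))$.

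However, there is a genuine gap in your proof of $\hat T_\alpha(u) \le T_\alpha^*(u)$. You fix $t < \hat T_\alpha(u)$, observe that the right-hand side of \eqref{eq:rough_Heston_char_func} is a finite number, and conclude ``therefore $\EE[e^{uX_t}] < \infty$.'' That conclusion would require the identity \eqref{eq:rough_Heston_char_func} to \emph{hold} at time $t$, but Theorem~\ref{thm:mgf1}, as stated, asserts it only for $t \in [0, T_\alpha^*(u))$. You are in the business of showing that $t < \hat T_\alpha(u)$ implies $t < T_\alpha^*(u)$, so you cannot presuppose $t < T_\alpha^*(u)$ in order to invoke the identity — that is circular. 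Finiteness of the candidate expression $\exp(\lambda\int\theta\,\psi_\alpha + V_0 I^{1-\alpha}\psi_\alpha)$ tells you nothing about $\EE[e^{uX_t}]$ until you know the two are equal. This is precisely the hard direction: one has to show that, as long as the Riccati--Volterra solution stays finite, the moment generating function remains finite (and the identity continues to hold). In \cite{gerhold2018moment} and in the affine-process literature this is handled by a separate argument — typically a truncation/localisation step (stopping the process, applying the identity on the stopped version, and passing to the limit using the finiteness of $\psi_\alpha$ to control the limit), or a martingale/change-of-measure argument showing the associated stochastic exponential is a true martingale, or at minimum a lower-semicontinuity argument for $t \mapsto \EE[e^{uX_t}]$ (via Fatou and path continuity of $X$) combined with a left-limit comparison at $T_\alpha^*(u)$. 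None of that is in your write-up; without it, the inequality $\hat T_\alpha(u) \le T_\alpha^*(u)$ is unproved. Your discussion of local well-posedness, continuation, and genuine blow-up of $\psi_\alpha$ at $\hat T_\alpha(u)$ is the right list of prerequisites and is reasonable in outline, but it does not substitute for the missing martingale/truncation step, which is where the probabilistic content of the theorem actually sits.
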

For the equivalence of \eqref{eq:frac_Riccati} and \eqref{eq:Volterra_Riccati} see also \cite[Thm.~3.10]{kilbas2006theory}. The second part of the theorem states that the functions $t \mapsto \E{e^{uX_t}}$ and $t \mapsto \psi_\alpha(t,u)$ blow up at exactly the same time. Therefore, the solution $\psi_\alpha$ of \eqref{eq:Volterra_Riccati} contains all relevant information needed for the analysis of both moments and moment explosions in the rough Heston model.

\subsection{Calibration to the forward variance curve}\label{sec:calibration}
Given a stochastic volatility model with spot variance process $V$, the associated \emph{forward variance curve} is given by
\begin{equation}
\xi(T) := \E{V_T},
\end{equation}
and represents the market expectation of future variance. It is well-understood that forward variance is closely linked to the prices of variance swaps and other volatility-dependent products and that for these products the forward variance curve has a similar role as the forward curve for interest rates. In the rough Heston model \eqref{eq:rough_Heston}, it is known from \cite[Prop.~3.1]{eleuch2018perfect} (see also \cite{keller-ressel2018affine}) that the forward variance curve is given by 
\begin{equation}\label{eq:fw_rough_Heston}
\xi(T) = V_0 \left(1 - \int_0^T r_{\alpha,\lambda}(s) ds\right) + \int_0^T \theta(T-s) r_{\alpha,\lambda}(s) ds,
\end{equation}
where $r_{\lambda,\alpha}$  is the so-called resolvent of $\lambda \kappa_\alpha$, given by
\begin{equation}\label{eq:resolvent}
r_{\lambda,\alpha}(t) = \begin{cases} \lambda t^{\alpha-1} E_{\alpha,\alpha}(-\lambda t^\alpha), &\qquad \alpha \in (0,1)\\ \lambda e^{-\lambda t} &\qquad \alpha=1\end{cases}
\end{equation}
with $E_{\alpha,\alpha}$ denoting the Mittag-Leffler function, cf. \cite{haubold2011mittag}. Given a variance curve $\xi$ of suitable regularity, equation~\eqref{eq:fw_rough_Heston} can be inverted and solved for $\theta$, with solution
\begin{equation}\label{eq:theta_calibrated}
\theta(t) = \tfrac{1}{\lambda} D^\alpha \left(\xi(t) - V_0\right) + \xi(t),
\end{equation}
see also \cite[Rem.~3.2]{eleuch2018perfect}. We refer to a model with this choice of $\theta(.)$ as \textit{calibrated} to a given forward variance curve. For a calibrated model, the moment generating function \eqref{eq:rough_Heston_char_func} can be expressed in terms of the variance curve $\xi(.)$ instead of $\theta(.)$ and written as
\begin{equation}\label{eq:mgf_alternative}
	\EE \left[ e^{u X_t} \right] = \exp \left(\int_0^t \xi(t-s) \left(R(u,\psi_\alpha(s,u)) + \lambda \psi_\alpha(s,u) \right) ds \right),
\end{equation}
for all $u \in \RR$, $t \in [0,T_*(u))$, see \cite{keller-ressel2018affine}. 

\section{A comparison principle for Riccati-Volterra equations}

Our comparison results for moments, moment explosion times and implied volatilities in the rough Heston model will all be derived from comparison results for the Volterra-Riccati integral equation \eqref{eq:Volterra_Riccati}. In fact, the comparison results in this section are obtained for the more general Volterra integral equation
\begin{equation}\label{eq:Volterra_Riccati_general}
\psi_\kappa(t,u) = \int_0^t \kappa(t-s) R(u,\psi_\kappa(s,u))ds,
\end{equation}
where only the following assumptions on the kernel $\kappa$ are imposed:
\begin{assumption}\label{ass:kernel}The kernel $\kappa$ 
\begin{itemize}
\item is non-negative and decreasing, and 
\item satisifies $\int_0^T \kappa(s) ds < \infty$ for all $T > 0$.
\end{itemize}
\end{assumption}
Clearly, this assumption includes the power-law kernels $\kappa_\alpha$ for all $\alpha \in (0,1]$. The slight abuse of notation that we have introduced should not cause any confusions: $\psi_\kappa$ denotes the solution of \eqref{eq:Volterra_Riccati_general} for a general kernel $\kappa$; $\psi_\alpha$ for the power-law kernel $\kappa_\alpha$; and $\psi_1$ for the plain Heston case $\kappa_1 \equiv 1$.

The properties of the solution $\psi_\kappa$, in particular its maximal life-time, crucially depend on the nature of $R(u,w)$. As in \cite{gerhold2018moment}, we distinguish between the following cases, illustrated in Figure~\ref{fig:cases}
\begin{itemize}
\item[(A)] $R(u,0)>0$ and $\partial_w R(u,0) \ge  0$,
\item[(B)] $R(u,0)>0$ and $\partial_w R(u,0) <  0$ and $R(u,\cdot)$ has no roots,
\item[(C)] $R(u,0)>0$ and $\partial_w R(u,0) <  0$ and $R(u,\cdot)$ has positive roots,
\item[(D)] $R(u,0) \le 0$.
\end{itemize}

\begin{figure}[htb]
\includegraphics[width=0.6\textwidth]{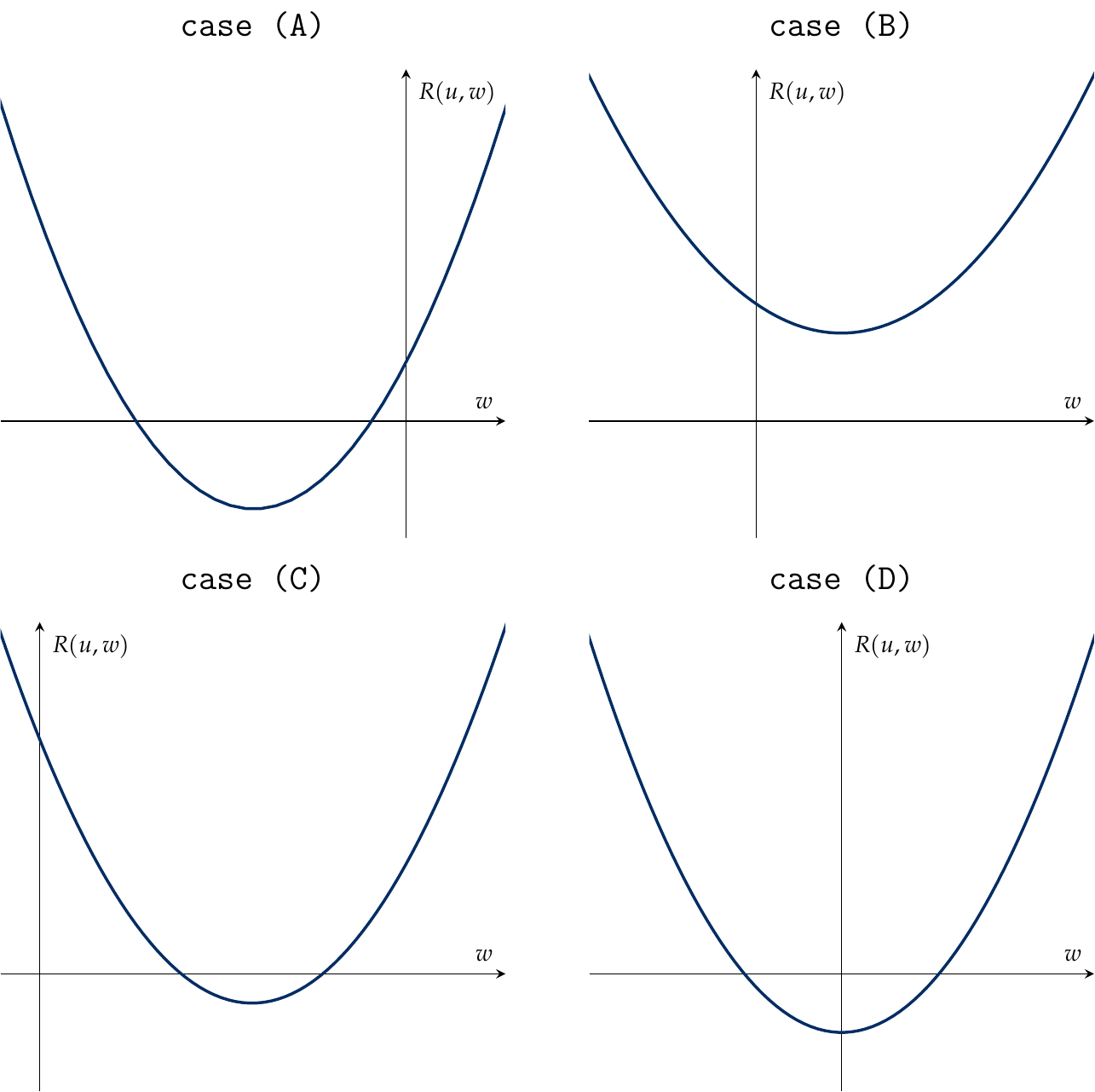}
	\caption{Schematic plot of $R(u,\cdot)$, with $u \in  \RR$ satisfying case (A), (B), (C), or (D).}\label{fig:cases}
\end{figure}

These cases can be analyzed by applying the familiar theory of quadratic equations to the polynomial $R(u,w)$. Following the notation from \cite{gerhold2018moment}, we rewrite $R(u,w)$ as
\begin{equation}\label{eq:R}
	R(u,w) = c_1(u) + c_2(u) w + \frac{\eta^2}{2} w^2,
\end{equation}
with coefficients
\begin{align*}
	c_1(u) &= \tfrac{1}{2}u(u-1), \\
	c_2(u) &= \rho \eta u - \lambda. 
\end{align*}
The discriminant of $w \mapsto R(u,w)$ is given by 
\begin{equation}\label{eq:discriminant}
	\Delta(u)  = \tfrac{1}{4} \left((\rho \eta u - \lambda)^2 - \eta^2 (u^2 - u)\right).
\end{equation}
If and only if $\Delta(u)$ is positive, $R(u,\cdot)$ has two real roots located at $\frac{1}{\eta^2}(-c_2(u) \pm 2\sqrt{\Delta(u)})$. In the case $\rho < 0$ (which is typical in applications) this leads to the following classification, which is illustrated in Figure~\ref{fig:ABCD} below:
\begin{lemma}\label{lem:caseABCD}
Suppose that $\rho < 0$ and denote the roots of $\Delta(u)$ by
\begin{equation}\label{eq:dpm}
d_\pm := \frac{\eta - 2\rho \pm \sqrt{(\eta - 2\rho)^2 + 4\lambda^2(1-\rho^2)}}{2 \eta (1 - \rho^2)}.
\end{equation}
Then $\tfrac{\lambda}{\rho \eta} < d_- < 0$; $1 < d_+$ and 
\begin{itemize}
\item $u$ satisfies case (A)  $\quad \Longleftrightarrow \quad u \le \frac{\lambda}{\rho \eta}$,
\item $u$ satisfies case (B) $\quad \Longleftrightarrow \quad u \in (\frac{\lambda}{\rho \eta},d_-) \cup (d_+,\infty)$,
\item $u$ satisfies case (C)  $\quad \Longleftrightarrow \quad u \in [d_-,0) \cup (1,d_+]$,
\item $u$ satisfies case (D)  $\quad \Longleftrightarrow \quad u \in [0,1]$.
\end{itemize}
\end{lemma}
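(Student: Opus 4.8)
The plan is to reduce the whole statement to elementary sign analysis, using that $R(u,0)=c_1(u)=\tfrac12 u(u-1)$, that $\partial_w R(u,0)=c_2(u)=\rho\eta u-\lambda$, that $w\mapsto R(u,w)$ has positive leading coefficient $\tfrac{\eta^2}{2}$ (so that ``$R(u,\cdot)$ has real roots'' is equivalent to $\Delta(u)\ge 0$), and that $u\mapsto \Delta(u)$ from \eqref{eq:discriminant} is a concave quadratic in $u$, its leading coefficient being $-\tfrac14\eta^2(1-\rho^2)<0$ since $|\rho|<1$ (and its own discriminant $\eta^2(\eta-2\lambda\rho)^2+4\eta^2(1-\rho^2)\lambda^2>0$ is positive, so $\Delta$ has two distinct real roots). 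Consequently $\{u:\Delta(u)\ge 0\}=[d_-,d_+]$ and $\{u:\Delta(u)<0\}=(-\infty,d_-)\cup(d_+,\infty)$, with $d_\pm$ from \eqref{eq:dpm} the two roots of $\Delta$; the only real work is then to locate the points $0$, $1$ and $\tfrac{\lambda}{\rho\eta}$ relative to $d_\pm$.

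First I would dispose of cases (A) and (D). Since $\rho<0$ and $\eta,\lambda>0$ we have $\rho\eta<0$, hence $\tfrac{\lambda}{\rho\eta}<0$ and $c_2(u)\ge 0 \iff u\le\tfrac{\lambda}{\rho\eta}$. Also $R(u,0)>0\iff u<0$ or $u>1$, and $R(u,0)\le 0\iff u\in[0,1]$, which is exactly case (D). For (A): if $u\le\tfrac{\lambda}{\rho\eta}$ then in particular $u<0$, so $R(u,0)>0$ holds automatically; conversely (A) requires $c_2(u)\ge 0$, i.e.\ $u\le\tfrac{\lambda}{\rho\eta}$. Hence (A) $\iff u\le\tfrac{\lambda}{\rho\eta}$.

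Next I would establish the chain $\tfrac{\lambda}{\rho\eta}<d_-<0<1<d_+$ by evaluating $4\Delta(u)=c_2(u)^2-2\eta^2 c_1(u)$ at three points. One has $4\Delta(0)=c_2(0)^2-2\eta^2c_1(0)=\lambda^2>0$, so $0\in(d_-,d_+)$; and $4\Delta(1)=c_2(1)^2-2\eta^2c_1(1)=(\rho\eta-\lambda)^2>0$ (strictly, as $\rho\eta-\lambda<0$), so $1\in(d_-,d_+)$; together these give $d_-<0$ and $d_+>1$. Finally, writing $u^\star=\tfrac{\lambda}{\rho\eta}$, the defining property $c_2(u^\star)=0$ gives $4\Delta(u^\star)=-2\eta^2 c_1(u^\star)=-\eta^2 u^\star(u^\star-1)<0$, because $u^\star<0$ makes both factors negative; since $\Delta<0$ holds only outside $[d_-,d_+]$ and $u^\star<0<d_+$, we conclude $u^\star<d_-$. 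This yields all three claimed inequalities.

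It remains to separate (B) from (C). Both require $R(u,0)>0$ together with $c_2(u)<0$, i.e.\ (by the second paragraph) $u\in(\tfrac{\lambda}{\rho\eta},0)\cup(1,\infty)$. On this set the sum and product of the roots of $R(u,\cdot)$, namely $-\tfrac{2c_2(u)}{\eta^2}$ and $\tfrac{2c_1(u)}{\eta^2}$, are both strictly positive, so as soon as the roots are real they are positive (a positive double root when $\Delta(u)=0$, which still counts for case (C)). Hence on this set (C) $\iff \Delta(u)\ge 0 \iff u\in[d_-,d_+]$, and (B) $\iff \Delta(u)<0$. Intersecting $[d_-,d_+]$, respectively $(-\infty,d_-)\cup(d_+,\infty)$, with $(\tfrac{\lambda}{\rho\eta},0)\cup(1,\infty)$ and invoking the strict inequalities $\tfrac{\lambda}{\rho\eta}<d_-<0<1<d_+$ just established gives precisely $[d_-,0)\cup(1,d_+]$ for case (C) and $(\tfrac{\lambda}{\rho\eta},d_-)\cup(d_+,\infty)$ for case (B). I do not expect a genuine obstacle here: the argument is essentially a bookkeeping exercise, and the only points demanding care are the open/closed endpoints of the intervals (which track whether the inequalities at $0$, $1$, $\tfrac{\lambda}{\rho\eta}$, $d_\pm$ are strict) and the convention that the boundary case $\Delta(u)=0$ belongs to (C) rather than (B).
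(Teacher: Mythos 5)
Your proof is correct and follows essentially the same route as the paper's: reduce each case to sign conditions on $c_1$, $c_2$ and $\Delta$, establish the ordering $\tfrac{\lambda}{\rho\eta} < d_- < 0 < 1 < d_+$, and intersect intervals (for (C) observing that real roots with $c_1>0$, $c_2<0$ are automatically positive). The only minor difference is that you obtain $d_-<0$ and $d_+>1$ by evaluating $\Delta$ at $0$ and $1$, whereas the paper reads them directly off the closed-form expression for $d_\pm$; both arguments are elementary, and yours has the slight advantage of not depending on that explicit formula.
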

\begin{proof}The mapping of the cases (A-D) to the corresponding intervals is based on the following observations: $R(u,0) = c_1(u)$ is negative on $[0,1]$ and strictly positive outside; $\partial_w R(u,0) = c_2(u)$ is positive for $u \le  \tfrac{\lambda}{\rho \eta}$ and strictly negative elsewhere. Finally, the discriminant $\Delta(u)$ of $w \mapsto R(u,w)$ is positive within $[d_-,d_+]$ and strictly negative outside. It remains to show the stated inequalities for $d_\pm$. Directly from \eqref{eq:dpm} it can be seen that $d_- < 0$ and that 
\[d_+ \ge \frac{2(\eta - 2\rho)}{2\eta(1 - \rho^2)} \ge \frac{1}{1 - \rho^2} > 1.\]
Finally, 
\[e_1\left(\frac{\lambda}{\rho \eta}\right) = - \frac{1}{4\rho^2} (\lambda^2 - \lambda \rho \eta) < 0\]
shows that $\tfrac{\lambda}{\rho \eta} < d_-$.
\end{proof}

If $\int_0^\infty \kappa(s) ds = \infty$, then the four cases introduced above have the following connection to the properties of $\psi_\kappa$:
\begin{itemize}
\item In cases (A) and (B), the solution $\psi_\kappa$ explodes in finite time. 
\item In cases (C) and (D), the solution $\psi_\kappa$ exists globally. 
\end{itemize}
In the power-law case $\kappa = \kappa_\alpha$ this has already been shown in \cite{gerhold2018moment}. However, our goal is not just to characterize the domains where $\psi_\kappa$ exists globally, but rather to give a more refined comparison principle between $\psi_\kappa$ and the non-rough Heston solution $\psi_1$. Such comparison results have been shown in \cite[Appendix~A]{gatheral2018affine} in the non-exploding case (C) and we will extend those arguments to cover all situations (A-D).  To formulate these results let
\begin{equation*}
	w_0(u) := -\frac{c_2(u)}{\eta^2} =  -\frac{1}{\eta^2}(\rho \eta u - \lambda) 
\end{equation*}
denote the location of the global minimum of $w \mapsto R(u,w)$, and 
\begin{equation*}
	w_{*}(u):= \frac{1}{\eta^2}\left(-c_2(u)-2\sqrt{\Delta(u)}\right)
\end{equation*}
 the location of its first root, whenever $\Delta(u) \ge 0$. The next Lemma is closely related to \cite[Lem.~A.3]{gatheral2018affine}.

\begin{lemma}\label{lem_3.1}
	Let $u \in \RR$ and $Q(u,\cdot)$ defined by
	\begin{equation}\label{eq:Q1}
		Q(u,w) := \int_0^w \frac{d\zeta}{R(u,\zeta)}, \qquad w \in  \RR.
	\end{equation}
	Furthermore, we define
	\begin{align*}
		v_1(u) &:= \frac{1}{\sqrt{-\Delta(u)}}\left(\frac{\pi}{2} - \arctan \left(\frac{c_2(u)}{2\sqrt{-\Delta(u)}} \right) \right), \\
		v_2(u) &:= \frac{1}{2\sqrt{\Delta(u)}}\log \left( \frac{c_2(u) + 2 \sqrt{\Delta(u)} }{c_2(u) -2\sqrt{\Delta(u)}} \right).
	\end{align*}
	\begin{itemize}
\item[(a)] If $u$ satisfies case (A) and $\Delta(u)<0$, the function $ Q(u,\cdot)$ maps $[0,\infty)$ onto $[0,v_1(u))$, is strictly increasing, and has an inverse $ Q^{-1}(u,\cdot) $, which maps $[0,v_1(u))$ onto $[0,\infty)$. If $\Delta(u)>0$, $v_1(u)$ has to be replaced by $v_2(u)$.
\item[(b)] If $u$ satisfies case (B), the same assertion as in (a) holds (with the restriction that only $v_1(u)$ is needed).
\item[(c)] If $u$ satisfies case (C), it holds that $w_*(u)>0$ and the function $Q(u,\cdot)$ maps $[0,w_*(u))$ onto $[0,\infty)$, is strictly increasing, and has an inverse $ Q^{-1}(u,\cdot) $, which maps $[0,\infty)$ onto $[0,w_*(u))$.
\item[(d)] If $u$ satisfies case (D), it holds that $w_*(u)<0$ and the function $Q(u,\cdot)$ maps $(w_*(u),0]$ onto $[0,\infty)$, is strictly decreasing, and has an inverse $ Q^{-1}(u,\cdot) $, which maps $[0,\infty)$ onto $(w_*(u),0]$.
\end{itemize}
\end{lemma}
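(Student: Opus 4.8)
The plan is to reduce each of the four assertions to an explicit evaluation of the primitive of $w \mapsto 1/R(u,w)$, obtained by completing the square. Writing $R(u,w) = \tfrac{\eta^2}{2}\bigl(w - w_0(u)\bigr)^2 - \tfrac{2\Delta(u)}{\eta^2}$ and using the identity $4\Delta(u) = c_2(u)^2 - 2\eta^2 c_1(u)$, one sees that the sign of $R(u,\cdot)$ on the interval of interest is governed entirely by the sign of $\Delta(u)$ together with the position of $0$ relative to the roots of $R(u,\cdot)$ (when these exist). Since $\partial_w Q(u,w) = 1/R(u,w)$, strict monotonicity of $Q(u,\cdot)$ on each interval is then immediate: it is increasing where $R(u,\cdot) > 0$ (cases (a)--(c)) and decreasing where $R(u,\cdot) < 0$ (case (d)), and in every case the existence of the inverse follows from continuity plus strict monotonicity once the image has been identified. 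So the work is entirely in locating the roots and computing two improper integrals.

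First I would handle the cases in which $R(u,\cdot)$ has no zero on the relevant interval. In case (a) with $\Delta(u) < 0$, and in case (b), one has $c_1(u) = R(u,0) > 0$ and in fact $R(u,\cdot) > 0$ on all of $\RR$, so $Q(u,\cdot)$ is strictly increasing on $[0,\infty)$; because $R(u,w) \sim \tfrac{\eta^2}{2}w^2$ as $w\to\infty$, the improper integral $\int_0^\infty dw/R(u,w)$ converges, and the substitution $w - w_0(u) = \tfrac{2\sqrt{-\Delta(u)}}{\eta^2}\tan\vartheta$ turns it into an $\arctan$, evaluating it to exactly $v_1(u)$. In case (a) with $\Delta(u) > 0$ one checks, using $c_1(u) > 0$, that both roots of $R(u,\cdot)$ are negative, so again $R(u,\cdot) > 0$ on $[0,\infty)$, and a partial-fraction decomposition gives $\int_0^\infty dw/R(u,w) = v_2(u)$. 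In either situation $Q(u,\cdot)$ is therefore a strictly increasing bijection from $[0,\infty)$ onto $[0,v_1(u))$, respectively $[0,v_2(u))$, with inverse running the other way; the borderline $\Delta(u) = 0$ is handled identically with a $1/(w - w_0(u))$-type primitive and equals the common limit $2/c_2(u)$ of both formulas.

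Next, case (c): here $c_1(u) > 0$, $c_2(u) < 0$ and $\Delta(u) > 0$, and since $R(u,0) = c_1(u) > 0$ both roots lie on the same side of $0$; as $c_2(u) < 0$ they are both positive, the smaller one being $w_*(u) = \tfrac{1}{\eta^2}\bigl(-c_2(u) - 2\sqrt{\Delta(u)}\bigr) > 0$. Hence $R(u,\cdot) > 0$ on $[0, w_*(u))$, so $Q(u,\cdot)$ is strictly increasing there, and since $w_*(u)$ is a simple zero the integrand behaves like $c/(w_*(u) - w)$ as $w\uparrow w_*(u)$, forcing $Q(u,w)\to\infty$; thus $Q(u,\cdot)$ maps $[0, w_*(u))$ bijectively onto $[0,\infty)$ (the borderline $\Delta(u) = 0$ gives $w_*(u) = w_0(u) > 0$ as a double zero, with an even stronger blow-up). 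Finally, in case (d) one has $c_1(u) = R(u,0) \le 0$, so for $u\in(0,1)$ (where $\Delta(u) > 0$) the point $0$ lies strictly between the two real roots of $R(u,\cdot)$, the smaller of which is $w_*(u) < 0$; consequently $R(u,\cdot) < 0$ on $(w_*(u),0]$, so $Q(u,\cdot)$ is strictly \emph{decreasing} there with $Q(u,0) = 0$, while the simple zero at $w_*(u)$ again forces $Q(u,w)\to\infty$ as $w\downarrow w_*(u)$; this yields the bijection of $(w_*(u),0]$ onto $[0,\infty)$ and its inverse.

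The arguments are elementary; the points calling for care are (i) the sign-and-location bookkeeping for the roots of $R(u,\cdot)$ in each of the four cases — where Lemma~\ref{lem:caseABCD} and the identity $4\Delta(u) = c_2(u)^2 - 2\eta^2 c_1(u)$ do the work — and (ii) checking that the convergent integrals really equal the closed forms $v_1(u)$ and $v_2(u)$ after the trigonometric, respectively partial-fraction, substitution. I expect (ii) to be the main (purely computational) obstacle, since one must keep track of the signs of $c_2(u)$ and of $\sqrt{\pm\Delta(u)}$ to confirm that the arguments of $\arctan$ and $\log$ come out precisely as written.
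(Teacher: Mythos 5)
Your proposal is correct and follows essentially the same route as the paper: determine the sign of $1/R(u,\cdot)$ on the relevant interval to get strict monotonicity, then evaluate the improper integral via the explicit (arctangent or logarithmic) primitive to identify the image. The extra details you supply (completing the square, the explicit substitution, and the degenerate case $\Delta(u)=0$) are sound but not a different method; they fill in steps the paper leaves implicit.
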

\begin{remark}
While the lemma is mainly a technical result on the properties of the function $Q(u,w)$, the connection to moment explosions in the Heston model should become apparent from the fact that $T_1^*(u)$ can be written as
\[T_1^*(u) = \lim_{w \to \infty} Q(u,w) = \int_0^\infty \frac{d\zeta}{R(u,\zeta)}\]
in cases (A) and (B), cf. \cite[Sec.~6.1]{keller-ressel2011moment}.
\end{remark}

\begin{proof}
	(a) Due to the fact that the integrand $ 1/R(u,\zeta)$ is positive on $[0,\infty)$ if $u$ satisfies case (A), we can conclude that $ Q(u,\cdot)$ is strictly increasing. It just remains to show that the integral attains the limit $v_1(u)$ resp. $v_2(u)$. If $\Delta(u)<0$, we get
	\begin{align*}
		\lim_{w \rightarrow \infty} Q(u,w) &= \int_0^{\infty}\frac{d\zeta}{R(u,\zeta)} \\&= \left. \frac{1}{\sqrt{-\Delta(u)}}  \arctan \left( \frac{\eta^2 w + c_2(u)}{2\sqrt{-\Delta(u)}} \right) \right|_0^{\infty} \\&= v_1(u),
	\end{align*}
	and if $\Delta(u)>0$, we obtain
	\begin{equation*}
		\int_0^{\infty}\frac{d\zeta}{R(u,\zeta)} = \frac{1}{2 \sqrt{\Delta(u)}} \log \bigg( \frac{\eta^2 w + c_2 - 2 \sqrt{\Delta(u)}}{\eta^2 w + c_2 + 2 \sqrt{\Delta(u)}} \bigg) \Bigg|_0^{\infty} = v_2(u).
	\end{equation*}
	(b) Restricted to $\Delta(u)<0$, the proof of case (B) is analogue to (a).
	
	(c) In case (C) we can argue similar, since the integrand $ 1/R(u,\zeta)$ is positive on $[0,w_*(u))$. The assertion follows if we replace the upper limit in the above integrals by $w_*(u)$.
	
	(d) The proof of case (D) is analogue to (c), only the different sign of $R(u,\cdot)$ on $(w_*(u),0]$ has to be taken into account. 
\end{proof}
Now, we are ready to adapt the results of \cite[Appendix~A]{gatheral2018affine} to our framework.
\begin{theorem}\label{thm:comparison}
	Let $u \in \RR$.
	\begin{itemize}
	\item[(a)] If $u$ satisfies case (A), then $\psi_\kappa(\cdot,u)$ satisfies
	\begin{equation}\label{eq:bound1}
		0 \le \psi_1\left(\int_0^t \kappa(s)ds,u\right) \le \psi_\kappa(t,u), \qquad t \ge 0.
	\end{equation}
	\item[(b)] If $u$ satisfies case (B), then $\psi_\kappa(\cdot,u)$ satisfies
	\begin{equation}
		0 \le \overline \psi_1\left(\int_0^t \kappa(s)ds,u \right) \le \psi_\kappa(t,u), \qquad t \ge 0,
	\end{equation}
	where $\overline \psi_1$ is the solution of
	\begin{equation*}
		\overline \psi_1(t,u) = \int_0^t \overline R \left( u,\overline \psi_1(s,u) \right) ds, \qquad t\ge 0,
	\end{equation*}
	with $\overline R$ given by 
	\begin{equation}\label{eq:Rbar}
	\overline R(u,w) :=  \begin{cases} 
 R(u,w_0(u)) & w \le w_0(u) \\
R(u,w) & \, w > w_0(u)
\end{cases}.
\end{equation}
	\item[(c)] If $u$ satisfies case (C), then $\psi_\kappa(\cdot,u)$ exists globally and satisfies
	\begin{equation}\label{bounds_case_c}
		0 \le \psi_\kappa(t,u) \le \psi_1\left(\int_0^t \kappa(s)ds,u\right) \le w_{*}(u), \qquad t \ge 0.
	\end{equation}
	\item[(d)] If $u$ satisfies case (D), then $\psi_\kappa(\cdot,u)$ exists globally and satisfies
	\begin{equation}
		w_{*}(u) < \psi_1\left(\int_0^t \kappa(s)ds,u \right) \le \psi_\kappa(t,u) < 0, \qquad t \ge 0.
	\end{equation}
	\end{itemize}
\end{theorem}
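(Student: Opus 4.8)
My plan is to transport everything through the time change $K(t):=\int_0^t\kappa(s)\,ds$ and the function $Q$ from Lemma~\ref{lem_3.1}. Separating variables in the Riccati ODE $\partial_\tau\psi_1(\tau,u)=R(u,\psi_1(\tau,u))$, $\psi_1(0,u)=0$, identifies the plain‑Heston solution as the inverse of $Q$, i.e.\ $\psi_1(\tau,u)=Q^{-1}(u,\tau)$ in cases (A), (C), (D), and likewise $\overline\psi_1(\tau,u)=\overline Q^{-1}(u,\tau)$ in case (B) with $\overline Q(u,w):=\int_0^w d\zeta/\overline R(u,\zeta)$ (a verbatim repetition of Lemma~\ref{lem_3.1} shows $\overline Q(u,\cdot)$ is again a strictly increasing bijection). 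Since $Q(u,\cdot)$, resp.\ $\overline Q(u,\cdot)$, is strictly monotone on the range of $\psi_\kappa(\cdot,u)$, each asserted estimate becomes, after applying it, equivalent to
\[ Q(u,\psi_\kappa(t,u))\ \ge\ K(t)\ \text{ in (A), (B)},\qquad Q(u,\psi_\kappa(t,u))\ \le\ K(t)\ \text{ in (C), (D)} \]
(with $\overline Q$ in case (B)); the reversal between (A)/(B) and (C)/(D), together with the further fact that $Q(u,\cdot)$ is \emph{decreasing} in case (D), is exactly what produces the claimed directions of the inequalities. The remaining assertions $\psi_1(K(t),u)<w_{*}(u)$ in (C) and $w_{*}(u)<\psi_1(K(t),u)\le0$ in (D), and the nonnegativity of $\psi_1(K(t),u)$ and $\overline\psi_1(K(t),u)$, follow directly from Lemma~\ref{lem_3.1}(c)--(d).

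The engine is a single identity, and it is the only place where the assumption that $\kappa$ is decreasing enters. Writing the Volterra equation as $\psi_\kappa(t,u)=\int_0^t\kappa(s)R(u,\psi_\kappa(t-s,u))\,ds$ and differentiating in $t$ (once the regularity of $\psi_\kappa(\cdot,u)$ is in hand) gives
\[ \partial_t\psi_\kappa(t,u)=\kappa(t)R(u,0)+\int_0^t\kappa(t-s)\,\partial_wR(u,\psi_\kappa(s,u))\,\partial_t\psi_\kappa(s,u)\,ds, \]
and subtracting $\kappa(t)R(u,\psi_\kappa(t,u))$ and using $R(u,\psi_\kappa(t,u))-R(u,0)=\int_0^t\partial_wR(u,\psi_\kappa(s,u))\,\partial_t\psi_\kappa(s,u)\,ds$ yields
\[ \partial_t\psi_\kappa(t,u)-\kappa(t)R(u,\psi_\kappa(t,u))=\int_0^t\bigl(\kappa(t-s)-\kappa(t)\bigr)\,\partial_wR(u,\psi_\kappa(s,u))\,\partial_t\psi_\kappa(s,u)\,ds. \]
Since $\kappa$ is decreasing, $\kappa(t-s)-\kappa(t)\ge0$, so the right‑hand side has the sign of $\partial_wR(u,\psi_\kappa(\cdot,u))\,\partial_t\psi_\kappa(\cdot,u)=\partial_s\bigl(R(u,\psi_\kappa(\cdot,u))\bigr)$ on $[0,t]$. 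Once that sign is known, dividing by $R(u,\psi_\kappa(t,u))$ turns this into $\tfrac{d}{dt}Q(u,\psi_\kappa(t,u))\ge\kappa(t)$ in (A), (B) and $\le\kappa(t)$ in (C), (D), and integrating from $0$ (where $Q(u,\psi_\kappa(0,u))=0=K(0)$) gives the comparison from the first paragraph.

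It remains to pin down the sign of $\partial_wR(u,\psi_\kappa(s,u))\,\partial_t\psi_\kappa(s,u)$. In (A), $R(u,\cdot)$ is increasing and positive on $[0,\infty)$ and $\psi_\kappa(\cdot,u)$ is nonnegative and nondecreasing, so both factors are $\ge0$ and division by $R(u,\psi_\kappa)>0$ is licit. In (B), where $R(u,\cdot)$ is not monotone, I would first pass to $\overline R$: from $R\ge\overline R$ one deduces $\psi_\kappa(\cdot,u)\ge\widehat\psi$, where $\widehat\psi$ solves the convolution equation with $\overline R$ in place of $R$ — a comparison that \emph{is} valid because $\overline R$ is nondecreasing, hence amenable to the standard monotone‑iteration argument for Volterra equations — and then run the preceding paragraph's argument for $\widehat\psi$ (nonnegative, nondecreasing, with $\overline R\ge R(u,w_0(u))>0$) to obtain $\widehat\psi(t)\ge\overline\psi_1(K(t),u)$. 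In (C), (D) one must first establish that $\psi_\kappa(\cdot,u)$ is confined to $[0,w_{*}(u))$, respectively $(w_{*}(u),0]$ — which simultaneously yields global existence — so that $\partial_wR(u,\psi_\kappa(s,u))<0$ throughout; combined with the monotonicity of $\psi_\kappa(\cdot,u)$ this makes $\partial_s\bigl(R(u,\psi_\kappa(\cdot,u))\bigr)$ of one sign, and the final sign bookkeeping (with the extra flip in (D) from dividing by $R(u,\psi_\kappa)<0$) produces $Q(u,\psi_\kappa(t,u))\le K(t)$.

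The bulk of the work, and the main obstacle, lies in these preliminary facts: the local absolute continuity of $\psi_\kappa(\cdot,u)$ legitimising the differentiation and the fundamental‑theorem step (obtainable by bootstrapping $\psi_\kappa=\kappa*R(u,\psi_\kappa)$); the sign and, where claimed, the monotonicity of $\psi_\kappa(\cdot,u)$; and, most delicately, the confinement $\psi_\kappa(\cdot,u)\in[0,w_{*}(u))$ resp.\ $(w_{*}(u),0]$ in the non‑exploding cases. The confinement I would obtain self‑referentially by a first‑exit argument: if $\psi_\kappa(\cdot,u)$ first reached $w_{*}(u)$ at a finite time $t_0$, then $R(u,\psi_\kappa(s,u))>0$ for $s<t_0$ forces $\psi_\kappa(s,u)\in(0,w_{*}(u))$ on $(0,t_0)$, the estimate above applies there and gives $\psi_\kappa(t,u)\le\psi_1(K(t),u)<w_{*}(u)$ for $t<t_0$, contradicting continuity at $t_0$. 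A last subtlety is that for kernels of finite total mass $\psi_\kappa(\cdot,u)$ need not be globally monotone (it may rise and then fall), so in that regime I would replace the pointwise differential inequality by a barrier/continuity argument for $t\mapsto Q(u,\psi_\kappa(t,u))-K(t)$ fed by the same central identity, or approximate $\kappa$ within the class by kernels with $\int_0^\infty\kappa=\infty$ and pass to the limit.
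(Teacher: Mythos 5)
Your route differs from the paper's in a substantive way, and it is worth recording the contrast before the gap. The paper never differentiates the Volterra solution $\psi_\kappa(\cdot,u)$ in $t$: it fixes a horizon $T$, introduces the auxiliary function $v(t,T)=\int_0^t\kappa(T-s)R(u,\psi_\kappa(s,u))\,ds$, obtains $\psi_\kappa(t,u)\ge v(t,T)$ from the monotonicity of $\kappa$ alone, shows that $v(\cdot,T)$ satisfies a one-sided \emph{ODE} differential inequality against $\partial_t r=\kappa(T-t)R(u,r)$, and then lets $T\downarrow t$. The only monotonicity exploited there is that of $w\mapsto R(u,w)$ (resp.\ $\overline R$); no monotonicity of $\psi_\kappa(\cdot,u)$ is ever needed. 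You instead differentiate the Volterra equation directly and arrive at the appealing identity
\[
\partial_t\psi_\kappa(t,u)-\kappa(t)R(u,\psi_\kappa(t,u))=\int_0^t\bigl(\kappa(t-s)-\kappa(t)\bigr)\,\partial_w R(u,\psi_\kappa(s,u))\,\partial_t\psi_\kappa(s,u)\,ds,
\]
from which you read off $\tfrac{d}{dt}Q(u,\psi_\kappa(t,u))\gtrless\kappa(t)$. This is an elegant way to isolate where the monotonicity of $\kappa$ enters, and it bypasses the two-variable auxiliary function and the $T\downarrow t$ limit, but it is a genuinely different mechanism from the paper's ODE-comparison argument.

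The gap, however, is the one you flag at the end and it is load-bearing rather than a side remark: to determine the sign of the right-hand side of your identity you need $t\mapsto\psi_\kappa(t,u)$ to be nondecreasing in cases (A) and (B), and confined \emph{and} monotone in cases (C) and (D), and you never establish this. Assumption~\ref{ass:kernel} admits kernels with $\int_0^\infty\kappa<\infty$, for which monotonicity of $\psi_\kappa$ is not obvious and may fail; the proposed repairs (``barrier/continuity argument fed by the same central identity'', or approximating $\kappa$ by kernels with infinite mass and passing to the limit) are asserted, not carried out, and the limiting argument in particular would need uniform control that is not sketched. There is also a circularity in (C)/(D): the sign of $\partial_w R(u,\psi_\kappa(s,u))\,\partial_t\psi_\kappa(s,u)$ needs the confinement of $\psi_\kappa$ to $[0,w_*(u))$ resp.\ $(w_*(u),0]$, which you propose to derive from the very inequality under proof by a first-exit argument. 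That can be made rigorous (the paper's stopping time $\widetilde T_\kappa(u)$ does exactly this), but it must be written out, along with the local absolute continuity of $\psi_\kappa(\cdot,u)$ that legitimizes the Leibniz and fundamental-theorem steps behind your identity. Until the monotonicity and regularity prerequisites are supplied, the proof is incomplete; the paper's construction of $v(t,T)$ is precisely what lets one avoid them altogether.
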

\begin{remark}The function $w \mapsto \overline R(u,w)$ introduced in \eqref{eq:Rbar} should be interpreted as \emph{increasing lower envelope} of $w \mapsto R(u,w)$, i.e., the largest increasing function bounding it from below.
\end{remark}
\begin{proof} By \cite[Thm.~12.11]{gripenberg1990volterra} equation \eqref{eq:Volterra_Riccati_general} has a continuous local solution $\psi_\kappa(\cdot,u)$ on some non-empty time interval $[0,T_\kappa(u))$. In addition, $\psi_\kappa(\cdot,u)$ can be continued up to (but not beyond) a maximal interval of existence $[0,\hat{T}_\kappa(u))$, which is open to the right, and for $\hat{T}_\kappa(u) < \infty $ it holds that 
\begin{equation}\label{eq:limsup_psi_alpha}
 \limsup_{t \to \hat{T}_\kappa(u) } \psi_\kappa(t,u) = \infty.
\end{equation}
In particular, this means that $\hat{T}_\kappa(u)$ can be written as 
\[\hat{T}_\kappa(u) = \sup\{t > 0: \psi_\kappa(t,u) < \infty\},\]
consistent with \eqref{eq:Tstar}.

(a) Let $u$ satisfy case (A). Recall the Riccati equation in the non-rough Heston model:
	\begin{equation}\label{eq:re2}
		\partial_t \psi_1(t,u) = R(u,\psi_1(t,u)).
	\end{equation}
	We claim that its solution satisfies
	\begin{equation}\label{eq:psiQ-1}
		Q(u,\psi_1(t,u)) = t, \qquad \forall t \in [0,\hat{T}_1(u)),
	\end{equation} 
	where $Q$ is given by \eqref{eq:Q1}. Dividing by $R(u,\psi_1(t,u))$ and integrating both sides of \eqref{eq:re2} yields
	\begin{equation*}
		\int_0^t \frac{\partial_s \psi_1(s,u)}{R(u,\psi_1(s,u))} ds = t.
	\end{equation*}
	Now we substitute $ \eta = \psi_1(s,u) $, $d\eta = \partial_s \psi_1(s,u) ds $, and get
	\begin{equation}\label{eq:subst}
		\int_0^{\psi_1(t,u)} \frac{d\eta}{R(u,\eta)} = t,
	\end{equation}
	which verifies \eqref{eq:psiQ-1}.
	
	 We remember that for $u $ satisfying case (A), the function $R(u,\cdot)$ is positive and increasing on $[0,\infty)$. Since the kernel $\kappa$ is decreasing, we can deduce the following inequality
	\begin{align}\label{eq: v_ineq}
		\psi_\kappa(t,u) &= \int_0^t \kappa(t-s) R(u,\psi_\kappa(s,u)) ds \nonumber \\ &\ge \int_0^t \kappa(T-s) R(u,\psi_\kappa(s,u)) ds =: v(t,T), 
	\end{align}
	for $0 \le t \le T < \hat{T}_\kappa(u)$. It is easily seen that the above defined function $v(t,T)$ has the boundary values
	\begin{align}\notag
	    v(0,T) &= 0, \\ \label{eq: v=psi_alpha}
		v(t,t) &= \psi_\kappa(t,u), 
	\end{align}
	and, since $w \mapsto R(u,w)$ is increasing on $[0,\infty)$, satisfies the differential inequality
	\begin{equation}\label{eq:v_dif_ineq}
		\partial_t v(t,T) = \kappa(T-t) R(u,\psi_\kappa(t,u)) \ge \kappa(T-t) R(u,v(t,T)).
	\end{equation}
	Now, we can use a standard comparison principle for differential equations (see e.g. Chapter II, \S \ 9 in \cite{browder1998ordinary}) to obtain
	\begin{equation}\label{eq:comp_princ}
		v(t,T)\ge r(t,T),
	\end{equation}
	with $r(t,T)$, being the solution of
		\begin{equation}\label{eq:r1}
		\partial_t r(t,T) = \kappa(T-t) R(u,r(t,T)).
	\end{equation}
	Note that this differential equation differs from \eqref{eq:re2} only by the factor $\kappa(T-t)$. Thus, if we divide by $R(u,r(t,T))$, integrate both sides up to $T$ and substitute analogue to the Heston case in \eqref{eq:subst} with $ \eta = r(t,T) $, $d\eta = \partial_t r(t,T) dt $, we get
	\begin{equation}\label{eq:psiQ-2}
		Q(u, r(T,T)) = \int_0^{r(T,T)} \frac{d\eta}{R(u,\eta)} = \int_0^T \kappa(T-t) dt = \int_0^T \kappa(t) dt ,	\end{equation}
		with $T < T_\kappa^*(u)$. Applying $Q^{-1}(u,\cdot)$ to \eqref{eq:psiQ-1} and \eqref{eq:psiQ-2}, it holds that
	\begin{align}\notag
	\psi_1(t,u) &= Q^{-1}(t,u), \\ \label{eq:r=Q}
		r(t,t) &= Q^{-1}\left( u,\int_0^t \kappa(s) ds \right),
	\end{align}
	and we can deduce
	\begin{equation}\label{eq:r=psi1}
		r(t,t) = \psi_1 \left( u,\int_0^t \kappa(s) ds \right), \qquad t \in \left[0,\hat{T}_\kappa(u)\right).
	\end{equation}
	The inequalities \eqref{eq: v_ineq} and \eqref{eq:comp_princ} finally yield
	\begin{equation}\label{eq:kappa_compare}
		\psi_\kappa(t,u) = \lim_{T \downarrow t} v(t,T) \ge \lim_{T \downarrow t} r(t,T) = \psi_1\left( u, \int_0^t \kappa (s)ds \right), \qquad t \in \left[0,\hat{T}_\kappa(u)\right).
	\end{equation}
	
(b) If $u$ satisfies case (B), the inequality \eqref{eq: v_ineq} still holds. However, we cannot argue as in \eqref{eq:v_dif_ineq}, because the function $R(u,\cdot)$ is decreasing on $[0,w_0(u))$. To circumvent this obstacle, we use the adjusted function $\overline R(u,w)$ from \eqref{eq:Rbar} and conclude the inequalities
\begin{align*}
	\partial_t v(t,T) &= \kappa(T-t) R(u,\psi_\kappa(t,u)) \\
	&\ge \kappa(T-t) \overline R(u,\psi_\kappa(t,u)) \\
	&\ge \kappa(T-t) \overline R(u,v(t,T)),
\end{align*}
for all $0\le t \le T < \hat{T}_\kappa(u)$. From this point we can proceed as in (a) with the function $ \overline r(t,T)$, being the solution of
\begin{equation*}
		\partial_t \overline r(t,T) = \kappa(T-t) \overline R(u, \overline r(t,T)).
\end{equation*}


(c) Let $u$ be satisfying case (C) and set
\begin{equation}\label{eq:T_tilde}
	\widetilde{T}_\kappa(u) := \inf \left\{ t \in \left(0,\hat{T}_\kappa(u)\right) : \psi_\kappa(t,u) = w_{*}(u) \text{ or } \psi_\kappa(t,u) = 0\right\}.
\end{equation}
Due to the behavior of the function $R(u,\cdot)$ in case (C), and because of \eqref{eq:Volterra_Riccati_general} we can conclude
\begin{equation}\label{eq:psi_alpha >0}
	\psi_\kappa(t,u) > 0, \qquad \forall t \in \left(0,\widetilde{T}_\kappa(u)\right).
\end{equation}
This clearly indicates that $\psi_\kappa(\cdot,u)$ is increasing for $t \in \left(0,\widetilde{T}_\kappa(u)\right)$, and therefore the upper bound in \eqref{eq:T_tilde} is always hit before the lower bound.

Now we can continue similarly to (a): \\
Considering $0 \le t \le T \le \widetilde{T}_\kappa(u)$, it can be seen that the inequality \eqref{eq: v_ineq} is satisfied. In \eqref{eq:v_dif_ineq}, however, the inequality sign has to be reversed, since  $R(u,\cdot)$ is decreasing on $[0,w_*(u))$. Therefore, the solution $r(t,T)$ of \eqref{eq:r1} satisfies 
\begin{equation}\label{eq:r>=v}
	r(t,T) \ge v(t,T), \qquad  0 \le t \le T \le \widetilde{T}_\kappa(u).
\end{equation}
Using \eqref{eq: v_ineq},\eqref{eq: v=psi_alpha}, \eqref{eq:r=psi1}, and \eqref{eq:r>=v}, we obtain
\begin{equation}\label{eq:psi1 ge psi_alpha}
	\psi_1 \left( u,\int_0^t \kappa(s)ds \right) = \lim_{T \downarrow t} r(t,T) \ge \lim_{T \downarrow t} v(t,T) = \psi_\kappa(t,u),
\end{equation}
for all $t \in [0,\widetilde{T}_\kappa(u))$. By means of \eqref{eq:r=Q}, \eqref{eq:r=psi1}, and Lemma \ref{lem_3.1} (c), this implies that
\begin{equation}\label{eq:psi_alpha le psi1}
	\lim_{t \to \widetilde{T}_\kappa(u)} \psi_\kappa(t,u) \le \psi_1  \left( u,\int_0^{\widetilde{T}_\kappa(u)} \kappa(s)ds \right) < w_*(u).
\end{equation}
Considering \eqref{eq:T_tilde}, we now obtain $ \widetilde{T}_\kappa(u) = \hat{T}_\kappa(u)$, i.e. the bounds \eqref{bounds_case_c} hold for all $t \in [0,\hat{T}_\kappa(u))$ and we have
\[\lim_{t \to \hat{T}_\kappa(u))} \psi_\kappa(t,u)  \in [0,w_*(u)].\]
If $\hat{T}_\kappa(u)) < \infty$, this is a contradiction to \eqref{eq:limsup_psi_alpha}, and we conclude that $\hat{T}_\kappa(u) = \infty$.

(d) The proof of the bounds in case (D) is analogous to (c) with the following adaptations: The inequality sign in \eqref{eq: v_ineq} has to be reversed, since the function $R(u,\cdot)$ is negative on $(w_*(u),0]$. Thus, in contrast to (c), the inequality sign of \eqref{eq:v_dif_ineq} remains. It follows that the inequality signs of \eqref{eq:psi_alpha >0}, \eqref{eq:r>=v}, \eqref{eq:psi1 ge psi_alpha}, and \eqref{eq:psi_alpha le psi1} have to be reversed, and the proof is complete.
\end{proof}
\subsection{First consequences}
We state two immediate corollaries from Theorem~\ref{thm:comparison}. The first generalizes \cite[Thm.~2.4]{gerhold2018moment} from power-law kernels  to a large class of other kernels.
\begin{corollary}Let $\kappa$ be a kernel satisfying Assumption~\ref{ass:kernel} and with $\int_0^\infty \kappa(s)ds = \infty$. Then $\hat{T}_\kappa(u)$ is finite if and only if $u$ satisfies case (A) or (B), and it is infinite if and only if $u$ satisfied case (C) or (D). In particular, the set $\set{u \in \RR: \hat{T}_\kappa(u) < \infty}$ is independent of $\kappa$.
\end{corollary}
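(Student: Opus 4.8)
The plan is to read the corollary off directly from the comparison estimates of Theorem~\ref{thm:comparison}, using only the elementary fact that the relevant \emph{non-rough} comparison solution blows up in finite time exactly in cases (A) and (B). For the ``global existence'' half there is nothing to prove: if $u$ satisfies case (C) or (D), then Theorem~\ref{thm:comparison}(c),(d) already states that $\psi_\kappa(\cdot,u)$ exists globally, i.e.\ $\hat T_\kappa(u)=\infty$.

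For cases (A) and (B) the first step is to record that the non-rough equation explodes in finite time. In case (A), $R(u,\cdot)$ is strictly positive and increasing on $[0,\infty)$ with quadratic growth, so $\psi_1(\cdot,u)$ is increasing and its maximal existence time $\hat T_1(u)=\int_0^\infty d\zeta/R(u,\zeta)$ is finite — this is precisely the value $v_1(u)$ (or $v_2(u)$) of Lemma~\ref{lem_3.1}(a) — with $\psi_1(v,u)\to+\infty$ as $v\uparrow\hat T_1(u)$. In case (B) the same holds for the envelope solution $\overline\psi_1$ of Theorem~\ref{thm:comparison}(b): since the discriminant is negative one has $\overline R(u,\cdot)\ge R(u,w_0(u))>0$, while $\overline R(u,w)=R(u,w)$ for $w>w_0(u)$ keeps quadratic growth, so $\overline\psi_1(\cdot,u)$ is increasing and blows up at the finite time $\int_0^\infty d\zeta/\overline R(u,\zeta)$.

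Next I would exploit the hypothesis $\int_0^\infty\kappa(s)\,ds=\infty$: since $\kappa\ge 0$ is locally integrable, the map $t\mapsto\int_0^t\kappa(s)\,ds$ is continuous, nondecreasing and surjective onto $[0,\infty)$, so there is a \emph{finite} $t_0$ at which $\int_0^{t_0}\kappa(s)\,ds$ equals the relevant Heston explosion time ($\hat T_1(u)$ in case (A), the blow-up time of $\overline\psi_1$ in case (B)). If we had $\hat T_\kappa(u)>t_0$, then the lower bound \eqref{eq:bound1} (resp.\ its case-(B) analogue) would be valid for all $t<t_0$, and letting $t\uparrow t_0$ would force
\[
\lim_{t\uparrow t_0}\psi_\kappa(t,u)\ \ge\ \lim_{t\uparrow t_0}\psi_1\!\left(\int_0^t\kappa(s)\,ds,\,u\right)=+\infty,
\]
contradicting the finiteness and continuity of $\psi_\kappa(\cdot,u)$ on $[0,\hat T_\kappa(u))\ni t_0$. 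Hence $\hat T_\kappa(u)\le t_0<\infty$. Since cases (A)--(D) partition $\RR$, the two chains of implications are in fact the claimed equivalences. For the last sentence, note that membership of $u$ in case (A) or (B) is a property of the polynomial $R(u,\cdot)$ alone — the sign of $R(u,0)$, the sign of $\partial_w R(u,0)$, and the (non)existence of roots — and involves no reference to $\kappa$; therefore $\{u\in\RR:\hat T_\kappa(u)<\infty\}$ is exactly the $\kappa$-independent set of $u$ satisfying (A) or (B).

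Nothing here is deep. The only points calling for a short verification are the finite-time blow-up of the non-rough comparison solutions — in case (B) this means checking that the truncated nonlinearity $\overline R$, which is bounded below by $R(u,w_0(u))>0$ and agrees with $R(u,\cdot)$ for large argument, still yields a convergent integral $\int_0^\infty d\zeta/\overline R(u,\zeta)$ — and the observation that the primitive $t\mapsto\int_0^t\kappa(s)\,ds$ actually attains the Heston explosion time, which is the single place where the assumption $\int_0^\infty\kappa=\infty$ enters.
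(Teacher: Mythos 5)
Your argument is essentially the paper's own proof, just spelled out in more detail. The paper likewise reads the result off Theorem~\ref{thm:comparison}: in cases (C), (D) that theorem already gives global existence; in case (A) it cites the known finite-time blow-up of $\psi_1$ and uses the lower bound $\psi_1(\int_0^t\kappa(s)\,ds,u)\le\psi_\kappa(t,u)$ together with $\int_0^\infty\kappa(s)\,ds=\infty$; in case (B) it substitutes $\overline\psi_1$ and points to Lemma~\ref{lem:Tbar} for the finite explosion time of $\overline\psi_1$. You add some useful explicit bookkeeping — identifying the explosion time with $v_1(u)$/$v_2(u)$ from Lemma~\ref{lem_3.1}, articulating the contradiction at the finite time $t_0$, and verifying that $\overline R\ge R(u,w_0(u))>0$ with quadratic tail growth makes $\int_0^\infty d\zeta/\overline R(u,\zeta)$ converge — but no new idea or different route; it is the same proof with the small verifications carried out rather than cited.
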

\begin{proof}In case (A) it is known that $\psi_1(t,u)$ blows up in finite time, cf. \cite{andersen2007moment, keller-ressel2011moment}. Since $\int_0^\infty \kappa(s) ds = \infty$ and, by Theorem~\ref{thm:comparison}, 
\[\psi_1\left(\int_0^t \kappa(s) ds, u \right) \le \psi_\kappa(t,u)\]
for all $t \ge 0$, also $\psi_\kappa(t,u)$ must blow up in finite time.\\
In case (B), $\overline \psi_1(t,u)$ has to be used instead of $\psi_1(t,u)$. It can be seen by direct calculation that also $\overline \psi_1(t,u)$ blows up in finite time, see also Lemma~\ref{lem:Tbar} below. In cases (C) and (D) Theorem~\ref{thm:comparison} shows global existence of $\psi_\kappa(t,u)$, i.e., no finite-time blow-up can take place.
\end{proof}
In many cases of interest, the time-change $T \mapsto \int_0^T \kappa(s)ds$ contracts time for small $T$ up to a time $\mathfrak{T}_\kappa$; see  Figure~\ref{fig:distortion}. This allows to reformulate Theorem~\ref{thm:comparison} without time-change, at the expense weakening the inequalities.
 \begin{corollary}\label{cor:direct_comp}
 	Suppose that $\kappa$ is strictly decreasing and there exists $t_* \in (0,\infty)$ with $\kappa(t_*) = 1$. Then, there is a unique solution $\mathfrak{T}_\kappa \in (0, \infty)$ of 
	\begin{equation}\label{eq:fixedpoint}
	T = \int_0^T \kappa(s) ds
	\end{equation}
	and the following holds: 
	\begin{itemize}
	\item[(a)] If $u$ satisfies case (A), it holds that
	\begin{equation*}
		\psi_1(t,u) \le \psi_{\kappa}(t,u), \qquad \forall t \le \mathfrak{T}_\kappa.
	\end{equation*}
	\item[(b)]If $u$ satisfies case (B), it holds that
	\begin{equation*}
		\overline \psi_1(t,u) \le \psi_{\kappa}(t,u), \qquad \forall t \le \mathfrak{T}_\kappa.
	\end{equation*}
\end{itemize}
\end{corollary}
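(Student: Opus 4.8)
The plan is to split the proof into two essentially independent parts: first, establishing the existence and uniqueness of the fixed point $\mathfrak{T}_\kappa$, and second, deducing the two inequalities from Theorem~\ref{thm:comparison} via a short monotonicity argument.

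For the first part I would introduce the auxiliary function $h(T) := \int_0^T \kappa(s)\,ds - T = \int_0^T(\kappa(s)-1)\,ds$, which is continuous on $[0,\infty)$ by local integrability of $\kappa$ and satisfies $h(0)=0$. Since $\kappa$ is strictly decreasing with $\kappa(t_*)=1$, the integrand $\kappa(s)-1$ is strictly positive on $[0,t_*)$ and strictly negative on $(t_*,\infty)$, so $h$ is strictly increasing on $[0,t_*]$ and strictly decreasing on $[t_*,\infty)$; in particular $h(t)>0$ for every $t\in(0,t_*]$. Moreover $\kappa(s)-1\le \kappa(t_*+1)-1<0$ for all $s\ge t_*+1$, hence $h(T)\le h(t_*+1)+(\kappa(t_*+1)-1)(T-t_*-1)\to-\infty$ as $T\to\infty$. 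By the intermediate value theorem $h$ has a zero $\mathfrak{T}_\kappa\in(t_*,\infty)$, and by the monotonicity just described this is the unique zero in $(0,\infty)$. The same monotonicity gives $h(t)\ge 0$, i.e.\ $\int_0^t\kappa(s)\,ds\ge t$, for every $t\in[0,\mathfrak{T}_\kappa]$; this is the inequality that drives the rest of the argument.

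For part (a) I recall from the proof of Theorem~\ref{thm:comparison}(a) that in case (A) the map $w\mapsto R(u,w)$ is positive on $[0,\infty)$, so $\psi_1(\cdot,u)$ starts at $0$ and is strictly increasing on its interval of existence. Fix $t\in[0,\mathfrak{T}_\kappa]$. If $\psi_\kappa(t,u)=\infty$ the claimed inequality is trivial; otherwise Theorem~\ref{thm:comparison}(a) forces $\psi_1\big(\int_0^t\kappa(s)\,ds,u\big)\le \psi_\kappa(t,u)<\infty$, so in particular $t\le\int_0^t\kappa(s)\,ds<\hat T_1(u)$ and $\psi_1(t,u)$ is well-defined and finite. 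Monotonicity of $\psi_1(\cdot,u)$ together with $\int_0^t\kappa(s)\,ds\ge t$ then yields $\psi_1(t,u)\le\psi_1\big(\int_0^t\kappa(s)\,ds,u\big)\le\psi_\kappa(t,u)$. Part (b) is completely analogous with $\overline\psi_1$ in place of $\psi_1$: in case (B) the polynomial $R(u,\cdot)$ has no real roots and $R(u,0)>0$, hence $R(u,\cdot)>0$ everywhere, so its increasing lower envelope $\overline R(u,\cdot)$ (cf.\ the remark after Theorem~\ref{thm:comparison}) is everywhere strictly positive, making $\overline\psi_1(\cdot,u)$ strictly increasing from $\overline\psi_1(0,u)=0$; combining Theorem~\ref{thm:comparison}(b), this monotonicity, and $\int_0^t\kappa(s)\,ds\ge t$ on $[0,\mathfrak{T}_\kappa]$ gives $\overline\psi_1(t,u)\le\psi_\kappa(t,u)$ exactly as before.

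I expect the only genuinely delicate point to be the bookkeeping around the time-change direction and possible blow-up: one has to get the sign right in $\int_0^t\kappa(s)\,ds\ge t$ (the time-change pushes forward into ``Heston time'', so one needs the solution to be \emph{increasing} for the plain-kernel bound to dominate the untransformed one) and one has to reduce to the case where all quantities are finite before invoking monotonicity. The analytic content beyond that is light, since everything is inherited from Theorem~\ref{thm:comparison}.
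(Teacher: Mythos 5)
Your proof is correct and follows essentially the same route as the paper: establish $\int_0^t \kappa(s)\,ds \ge t$ on $[0,\mathfrak{T}_\kappa]$ from the fixed-point structure, then combine monotonicity of $\psi_1(\cdot,u)$ (resp.\ $\overline\psi_1(\cdot,u)$) with Theorem~\ref{thm:comparison}. The only differences are cosmetic: you argue the existence and uniqueness of $\mathfrak{T}_\kappa$ via the sign pattern of $h(T)=\int_0^T(\kappa(s)-1)\,ds$ rather than via concavity, you spell out the reduction to the finite case before invoking monotonicity, and you derive the monotonicity of $\psi_1$ and $\overline\psi_1$ directly from the sign of $R$ and $\overline R$ instead of citing the literature; all of these fill in details the paper leaves implicit.
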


\begin{proof}
Under the given assumptions, the function $t \mapsto \int_0^t \kappa(s) ds$ starts at $t= 0$, is increasing, strictly concave, and has derivative one at $t_* \in (0,\infty)$. It is obvious that this implies the existence of a unique fixed point $\mathfrak{T}_\kappa \in (0,\infty)$, i.e., of a unique solution of \eqref{eq:fixedpoint}. 
Moreover, 	\begin{equation*}
		t \le \int_0^t \kappa(s) ds
			\end{equation*}
			must hold for all $t \le \mathfrak{T}_\kappa$. Since $ \psi_1(\cdot,u) $ is strictly increasing in cases (A-C) (see Chapter 2 in \cite{gatheral2006volatility}), we obtain from Theorem \ref{thm:comparison} that
			\[\psi_1(t,u) \le \psi_1\left(\int_0^t \kappa(s)ds,u \right) \le \psi_\alpha(t,u)\]
			for $t \le \mathfrak{T}_\kappa$, completing case (a). The proof of (b) is analogue.
\end{proof}

	\begin{figure}[htb]
\includegraphics[width=0.6\textwidth]{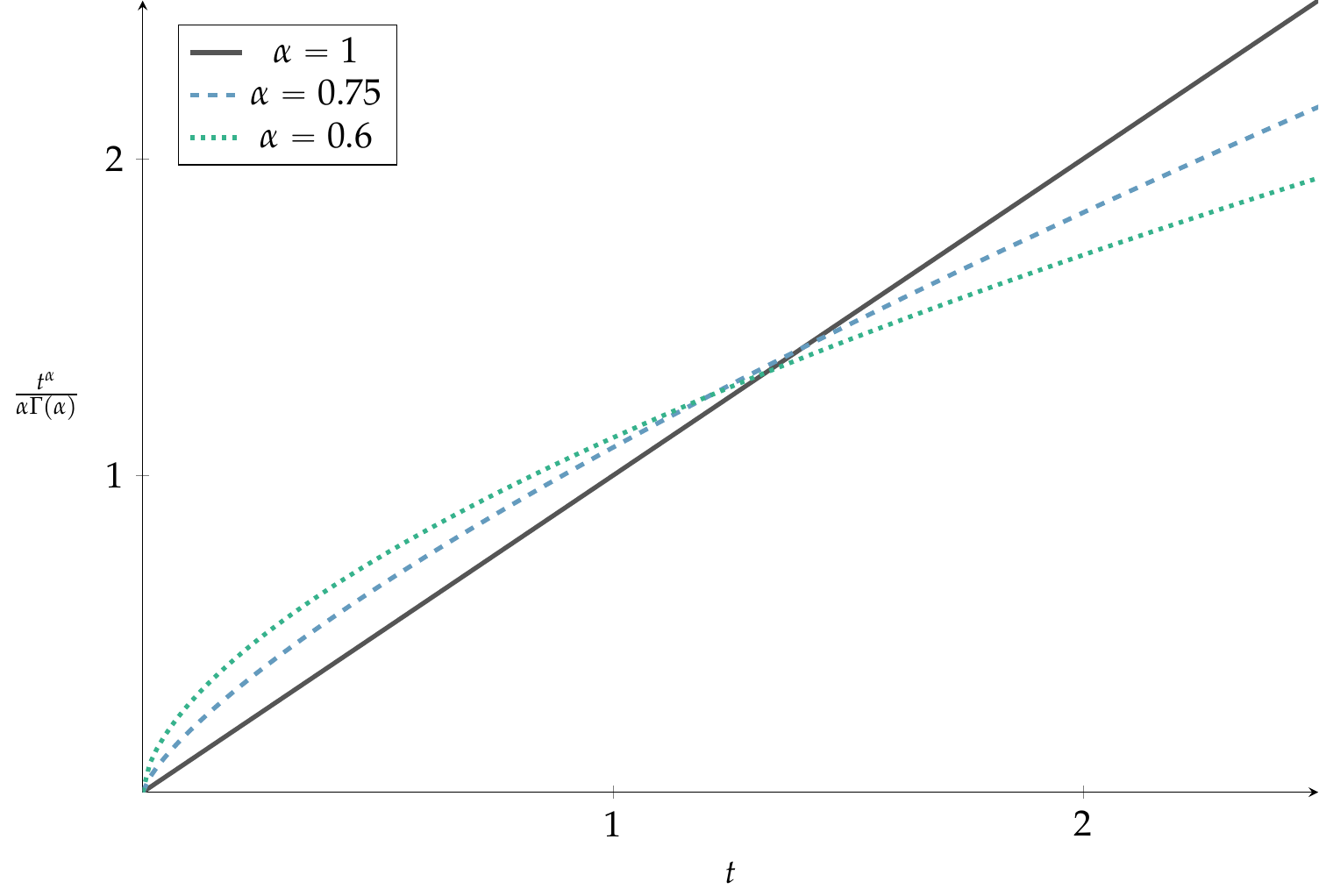}
	\caption{The graph of the time-change $ \int_0^t \kappa_{\alpha} (s) ds = \frac{t^{\alpha}}{\alpha \Gamma(\alpha)} $ for different $ \alpha $.}\label{fig:distortion}
	\end{figure}

In the rough Heston model with power-law kernel $\kappa_\alpha(t) = \frac{1}{\Gamma(\alpha)}t^{\alpha-1}$ the relevant time-change can be easily computed and is given by $\int_0^t \kappa_\alpha(s)ds = \frac{t^\alpha}{\alpha \Gamma(\alpha)}$. The kernel $\kappa_\alpha$ also satisfies the requirements of Corollary~\ref{cor:direct_comp} and the solution of \eqref{eq:fixedpoint} is given by $\mathfrak{T}_\alpha = (\alpha \Gamma(\alpha))^{1/(\alpha-1)}$. An illustration is given in Figure~\ref{fig:distortion}
    
\section{Comparison of moment explosion times}
    In this section we study the temporal evolution of moments $\EE[S_t^u]$ of the price process in the rough Heston model \eqref{eq:rough_Heston}.
    Whereas in the Black-Scholes model moments of all orders exist for all maturities, it is well-known that moments in stochastic volatility models can become infinite at a certain time (see e.g. \cite{friz2010encyclopedia}). Recall from \eqref{eq:moment_explosion} the definition of the \emph{time of moment explosion} $T_{\alpha}^*(u) = \sup\{t \ge 0: \EE[S_t^u] < \infty\}$ for the moment of order $u$ in the rough Heston model with index $\alpha \in (\frac{1}{2},1]$. In the Heston case, $T_1^*(u)$ is known explicitly and given by 
\begin{equation}\label{eq:T_explicit}
	T_1^*(u) 
= \begin{cases} 
\frac{1}{\sqrt{-\Delta(u)}}\left(\frac{\pi}{2} - \arctan \left(\frac{c_2(u)}{2\sqrt{-\Delta(u)}} \right) \right), & \Delta(u)<0 , \text{ (A) or (B)}\\
\frac{1}{2\sqrt{\Delta(u)}}\log \left( \frac{c_2(u) + 2 \sqrt{\Delta(u)} }{c_2(u) -2 \sqrt{\Delta(u)}} \right), &  \Delta(u) > 0, c_2(u) > 0, \text{ (A)}\\
\infty, &  \Delta(u) \ge 0, c_2(u)<0, \text{ (C) or (D)},
\end{cases}
\end{equation}
see \cite{andersen2007moment, keller-ressel2011moment}. For $\alpha < 1$ in contrast, $\psi_\alpha(t,u)$ is not known explicitly and therefore also no explicit expression for $T_\alpha^*(u)$ can be derived. As discussed in the introduction, an upper bound, a lower bound an an approximation method (valid in case (A)) for $T_\alpha^*(u)$ have been derived in \cite{gerhold2018moment}. Here, we obtain an alternative upper bound of $T_{\alpha}^*(u)$ in terms of $T_1^*(u)$ as a direct consequence of Theorem~\ref{thm:comparison}:
\begin{theorem}\label{thm:heston_upper_bound}
	Let $u \in \RR$, such that case (A) holds. Then the blow-up time $T_{\alpha}^*(u)$ satisfies
	\begin{equation}\label{eq:upper_bound}
		T_{\alpha}^*(u) \le \left(\alpha \Gamma(\alpha) T_1^*(u) \right)^{1/\alpha}.
	\end{equation}
	 If, in addition, $T_1^*(u) \le \mathfrak{T}_\alpha$, where $\mathfrak{T}_\alpha = (\alpha \Gamma(\alpha))^{1/(\alpha -1)}$ (as in Corollary \ref{cor:direct_comp}), then
	  \begin{equation}\label{eq:T_H le T_rH}
	  	T_{\alpha}^*(u) \le T_1^*(u).
	  \end{equation}
	  The two inequalities also hold in case (B) when $T_1^*(u)$ is replaced by $ \overline T_1^*(u)$. In cases (C) and (D) it holds that $T_\alpha^*(u) = \infty$.
\end{theorem}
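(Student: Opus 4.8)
\emph{Proof plan.} The whole statement should fall out of Theorem~\ref{thm:comparison} once we feed in the explicit time-change of the power-law kernel, $\int_0^t\kappa_\alpha(s)\,ds=\tfrac{t^\alpha}{\alpha\Gamma(\alpha)}$, and the identification $T_\alpha^*(u)=\hat T_\alpha(u)=\sup\{t\ge 0:\psi_\alpha(t,u)<\infty\}$ from Theorem~\ref{thm:mgf2}. So the plan is: (i) record that in cases (A)/(B) the Heston solution $\psi_1$ (resp.\ $\overline\psi_1$) genuinely diverges at its life-time; (ii) push the comparison inequality through the time-change to get \eqref{eq:upper_bound}; (iii) do an elementary exponent manipulation for \eqref{eq:T_H le T_rH}; (iv) dispatch (C)/(D) by global existence.

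\emph{Cases (A) and (B).} By Lemma~\ref{lem_3.1}(a)--(b) together with the formula \eqref{eq:T_explicit} (and the identity $\psi_1(t,u)=Q^{-1}(u,t)$ established inside the proof of Theorem~\ref{thm:comparison}(a)), the function $\psi_1(\cdot,u)$ is finite and strictly increasing on $[0,T_1^*(u))$ with $\psi_1(s,u)\to\infty$ as $s\uparrow T_1^*(u)$; the same holds for $\overline\psi_1(\cdot,u)$ on $[0,\overline T_1^*(u))$ where $\overline T_1^*(u):=\sup\{t:\overline\psi_1(t,u)<\infty\}$ is finite (cf.\ Lemma~\ref{lem:Tbar}). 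Now apply Theorem~\ref{thm:comparison}(a) with $\kappa=\kappa_\alpha$:
\[
\psi_1\!\left(\tfrac{t^\alpha}{\alpha\Gamma(\alpha)},u\right)\le\psi_\alpha(t,u),\qquad t\ge 0 .
\]
If $\tfrac{t^\alpha}{\alpha\Gamma(\alpha)}\ge T_1^*(u)$ the left-hand side is $+\infty$, hence $\psi_\alpha(t,u)=\infty$; by Theorem~\ref{thm:mgf2} this means $T_\alpha^*(u)=\hat T_\alpha(u)$ cannot exceed such a $t$, so
\[
T_\alpha^*(u)\le\inf\Big\{t\ge 0:\tfrac{t^\alpha}{\alpha\Gamma(\alpha)}\ge T_1^*(u)\Big\}=\bigl(\alpha\Gamma(\alpha)\,T_1^*(u)\bigr)^{1/\alpha},
\]
which is \eqref{eq:upper_bound}. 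Case (B) is verbatim the same using Theorem~\ref{thm:comparison}(b) and replacing $\psi_1,T_1^*$ by $\overline\psi_1,\overline T_1^*$.

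\emph{The improved bound \eqref{eq:T_H le T_rH}.} Since $\alpha\le 1$, the map $x\mapsto x^{\alpha-1}$ is non-increasing on $(0,\infty)$, so $T_1^*(u)\le\mathfrak{T}_\alpha=(\alpha\Gamma(\alpha))^{1/(\alpha-1)}$ yields $T_1^*(u)^{\alpha-1}\ge\mathfrak{T}_\alpha^{\alpha-1}=\alpha\Gamma(\alpha)$, i.e.\ $\alpha\Gamma(\alpha)\,T_1^*(u)\le T_1^*(u)^{\alpha}$; taking $\alpha$-th roots and combining with \eqref{eq:upper_bound} gives $T_\alpha^*(u)\le T_1^*(u)$. (Equivalently, one may invoke Corollary~\ref{cor:direct_comp}(a): as $T_1^*(u)\le\mathfrak{T}_\alpha$, every $t<T_1^*(u)$ satisfies $t\le\mathfrak{T}_\alpha$, hence $\psi_1(t,u)\le\psi_\alpha(t,u)$, and letting $t\uparrow T_1^*(u)$ forces $\psi_\alpha(\cdot,u)$ to blow up by time $T_1^*(u)$.) The same argument with $\overline\psi_1,\overline T_1^*$ covers case (B). Finally, in cases (C) and (D) Theorem~\ref{thm:comparison}(c)--(d) gives global existence of $\psi_\alpha(\cdot,u)$, so $\hat T_\alpha(u)=\infty$ and thus $T_\alpha^*(u)=\infty$ by Theorem~\ref{thm:mgf2}.

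\emph{Main obstacle.} Everything beyond Theorem~\ref{thm:comparison} is bookkeeping; the two points needing a little care are that the finite life-time of $\psi_1$ (and of $\overline\psi_1$) is actually a genuine blow-up, so that the comparison inequality propagates divergence, and the sign-sensitive use of $\alpha-1\le 0$ in the passage to \eqref{eq:T_H le T_rH}.
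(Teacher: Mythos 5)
Your proof is correct and follows essentially the same route as the paper: both rest on Theorem~\ref{thm:comparison} applied with $\kappa=\kappa_\alpha$, the explicit time-change $\int_0^t\kappa_\alpha(s)\,ds=t^\alpha/(\alpha\Gamma(\alpha))$, the identification $T_\alpha^*(u)=\hat T_\alpha(u)$ from Theorem~\ref{thm:mgf2}, and global existence in cases (C)/(D). Your handling of the improved bound \eqref{eq:T_H le T_rH} via the monotonicity of $x\mapsto x^{\alpha-1}$ is a minor algebraic variant of the paper's appeal to the fixed-point argument of Corollary~\ref{cor:direct_comp}, but the content is the same.
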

\begin{proof} By Theorem~\ref{thm:comparison}, we know that 
\[\psi_1\left(\int_0^t \kappa_\alpha(s) ds, u \right) \le \psi_\alpha(t,u)\]
for all $t \in [0,\hat T_\alpha(u))$. Clearly, the right hand side must blow-up before the left hand side, and therefore the blow-up time of $\psi_1(\int_0^t \kappa_{\alpha}(s) ds,u)$ represents an upper bound $T_\alpha^+(u)$ of $\hat T_\alpha(u)$. Since the blow-up time $T_1^*(u)$ of $\psi_1(t,u)$ is known,  we can determine $T_\alpha^+(u)$ by solving the equation
	\begin{equation*}
		\int_0^{T_\alpha^+(u)} \kappa_{\alpha}(s) ds = T_1^*(u),
	\end{equation*}
	which leads us to
	\begin{equation*}
		T_\alpha^+(u) = (\alpha \Gamma(\alpha)T_1^*(u))^{1/\alpha}.
	\end{equation*}
By Theorem~\ref{thm:mgf2} $T_{\alpha}^*(u) = \hat{T}_{\alpha}(u)$, which proves \eqref{eq:upper_bound}. Using the same argument as in the proof of Corollary~\ref{cor:direct_comp}, we obtain that 
	\begin{equation*}
		T_\alpha^+(u) \le \int_0^{T_\alpha^+(u)} \kappa_{\alpha}(s) ds = T_1^*(u),
	\end{equation*}
	as long as $T_1^*(u) \le \mathfrak{T}_\alpha$, and \eqref{eq:T_H le T_rH} follows. The proof of case (B) is analogue.
\end{proof}
	
	The explicit form of $T_1^*(u)$ has been given in \eqref{eq:T_explicit}. The bound $\overline T_1^*(u)$, relevant in case B, can also be computed explicitly:
	\begin{lemma}\label{lem:Tbar}
	For $u$ in in case (B), the explosion time $\overline T_1^*(u)$ of $\overline \psi_1(t,u)$ is given by
	\begin{equation}\label{eq:Tbar}
	\overline T_1^*(u) = \frac{1}{\sqrt{-\Delta(u)}}\left(\frac{\pi}{2} - \frac{c_2(u)}{2\sqrt{-\Delta(u)}} \right).
	\end{equation}
\end{lemma}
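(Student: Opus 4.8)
The plan is to solve the defining ODE $\overline\psi_1'(t,u)=\overline R(u,\overline\psi_1(t,u))$, $\overline\psi_1(0,u)=0$, explicitly, exploiting the piecewise form of $\overline R$ in \eqref{eq:Rbar}, and then read off $\overline T_1^*(u)$ as the sum of the time spent in each piece. First I would record the elementary facts valid in case (B): here $\Delta(u)<0$ and $c_2(u)<0$, so the minimizer $w_0(u)=-c_2(u)/\eta^2$ of $R(u,\cdot)$ is strictly positive, and completing the square gives $R(u,w)=\tfrac{\eta^2}{2}(w-w_0(u))^2+R(u,w_0(u))$ with minimum value $R(u,w_0(u))=c_1(u)-c_2(u)^2/(2\eta^2)=-2\Delta(u)/\eta^2>0$.

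Phase one: since $\overline\psi_1(0,u)=0<w_0(u)$ and $\overline R(u,w)=R(u,w_0(u))$ is constant for $w\le w_0(u)$, the solution grows linearly, $\overline\psi_1(t,u)=R(u,w_0(u))\,t$, until it reaches $w_0(u)$ at time $t_1=w_0(u)/R(u,w_0(u))=-c_2(u)\big/\big(2(-\Delta(u))\big)$, using the identities just recorded. Phase two: for $t>t_1$ one has $\overline\psi_1(t,u)>w_0(u)$, hence $\overline R(u,\overline\psi_1)=R(u,\overline\psi_1)$, so $\overline\psi_1$ solves the genuine Riccati ODE started from $w_0(u)$; since $R(u,\cdot)>0$ everywhere in case (B), $\overline\psi_1$ is increasing and, separating variables exactly as in the proof of Theorem~\ref{thm:comparison}(a), it blows up after an additional time $\int_{w_0(u)}^{\infty} dw/R(u,w)$.

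It then remains to evaluate that integral. Substituting $v=w-w_0(u)$ and using the completed-square form turns it into $\int_0^{\infty} dv\big/\big(\tfrac{\eta^2}{2}v^2+R(u,w_0(u))\big)$, a standard arctangent integral equal to $\tfrac{\pi}{2}\big(\tfrac{\eta^2}{2}R(u,w_0(u))\big)^{-1/2}$; since $\tfrac{\eta^2}{2}R(u,w_0(u))=-\Delta(u)$, this is $\tfrac{\pi}{2\sqrt{-\Delta(u)}}$. Adding $t_1$ yields $\overline T_1^*(u)=-\tfrac{c_2(u)}{2(-\Delta(u))}+\tfrac{\pi}{2\sqrt{-\Delta(u)}}$, which is precisely \eqref{eq:Tbar} after factoring out $1/\sqrt{-\Delta(u)}$; this also confirms the finite-time blow-up of $\overline\psi_1$ asserted in the proof of the corollary after Theorem~\ref{thm:comparison}.

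I do not expect a serious obstacle here. The only points requiring care are checking that $\overline R(u,\cdot)$ is continuous — in fact $C^1$, because $\partial_w R(u,w_0(u))=0$ — so that the concatenation of the two phases really is the unique solution and $t_1$ is the exact matching time, and keeping track of the signs of $c_2(u)$ and $\Delta(u)$ in case (B) so that every square root and quotient appearing above is well defined and positive.
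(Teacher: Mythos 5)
Your proposal is correct and amounts to essentially the same computation as the paper's: the paper also separates variables to write $\overline T_1^*(u)=\int_0^\infty d\eta/\overline R(u,\eta)$, splits the integral at $w_0(u)$ into $\int_0^{w_0}d\eta/R(u,w_0)+\int_{w_0}^\infty d\eta/R(u,\eta)$, and evaluates the second piece via the arctangent antiderivative, which is precisely your "phase one / phase two" decomposition. Your explicit remarks on the $C^1$-continuity of $\overline R$ at $w_0(u)$ (and hence uniqueness of the concatenated solution) and your substitution $v=w-w_0(u)$ are a cleaner presentation of details the paper leaves implicit, but the substance is identical.
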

\begin{remark}
Direct comparison of \eqref{eq:T_explicit} and \eqref{eq:Tbar} shows that the difference between $T_1^*(u)$ and $\overline T_1^*(u)$ can be reduced to the linearization $\arctan(x) \sim x$ of the arctangent around zero. This observation can be used to show that for $\rho < 0$, the piecewise defined function
	\[\widetilde{T}^*_1(u) := \begin{cases}T_1^*(u), &\qquad u \le \frac{\lambda}{\rho\eta} \\ \overline T_1^*(u), &\qquad u \in (\lambda/(\rho \eta),d_-) \cup (d_+,\infty) \end{cases}\]
	is twice continuously differentiable at the cut-point $u = \lambda/(\rho\eta)$, i.e. $T_1^*(u)$ transitions smoothly into $\overline T_1^*(u)$ at the boundary between case (A) and (B). See Figure~\ref{fig:ABCD} for an illustration.
\end{remark}

\begin{figure}[htb]
\includegraphics[width=0.75\textwidth]{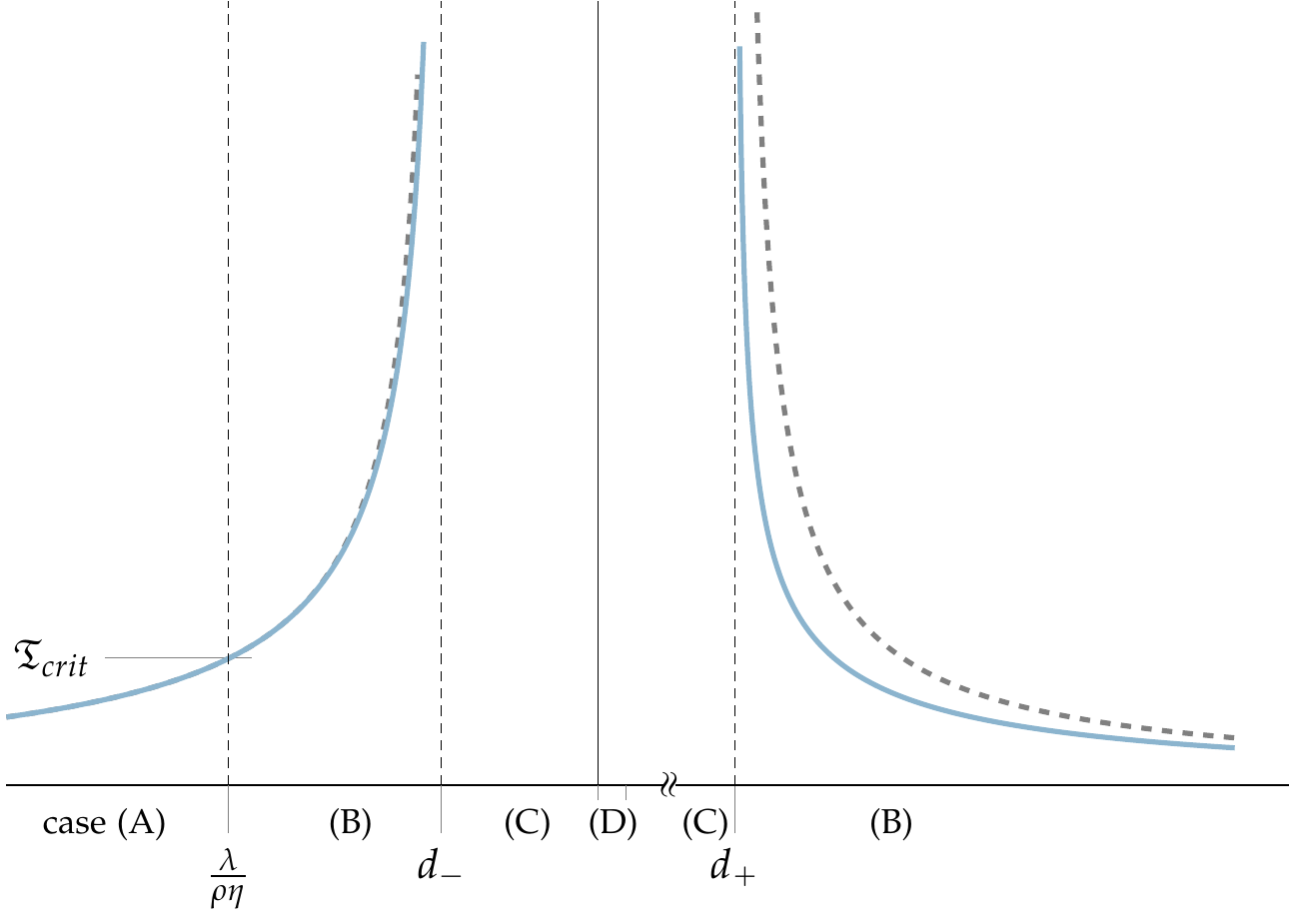}
	\caption{Illustration of the boundaries between cases (A-D) in the negative-leverage case $\rho <0$ (see Lemma~\ref{lem:caseABCD}), of the classic Heston explosion time $T_1^*(u)$ (blue solid), and of the auxiliary explosion time $\overline T_1^*(u)$ (grey dashed). The time $\mathfrak{T}_\text{crit}$ introduced in \eqref{eq:Tfrak} is also indicated.}\label{fig:ABCD}
\end{figure}

\begin{proof}From the differential equation $\frac{\partial}{\partial_t}\overline \psi(t,u) = \overline R(u,\overline \psi(t,u))$ with initial condition $\overline \psi(0,u) = 0$, we derive that
\[t = \int_0^{\overline \psi(t,u)}  \frac{d\eta}{\overline R(u,\eta)}\]
Sending $t \to \overline T_1^*(u)$ and taking into account the definition of $\overline R(u,w)$ in \eqref{eq:Rbar} yields
\[\overline T_1^*(u) = \int_0^\infty \frac{d\eta}{\overline R(u,\eta)} = \int_0^{w_0(u)}  \frac{d\eta }{R(u,w_0(u))}+ \int_{w_0(u)}^\infty \frac{d\eta }{R(u,\eta)}.\]
A primitive of $ \eta \mapsto 1/R(u,\eta)$ is given by 
\[F(w) := \frac{1}{\sqrt{-\Delta(u)}} \arctan\left(\frac{\eta^2 w + c_2(u)}{2 \sqrt{-\Delta(u)}} \right).\]
Note that $F(w_0(u)) = 0$ and $w_0(u)/R(u,w_0(u)) = c_2(u)/(2 \Delta(u))$. Hence,
\[\overline T_1^*(u) = \frac{c_2(u)}{2 \Delta(u)} + F(\infty) - F(w_0(u)) = \frac{1}{\sqrt{-\Delta(u)}}\left(\frac{\pi}{2} - \frac{c_2(u)}{2 \sqrt{-\Delta(u)}} \right)\]
as claimed. 
\end{proof}

\section{Comparison of moments}
For the comparison of moments, we fix the parameters $\rho, \lambda$ and $\eta$ of both the rough and the non-rough Heston model, but not $\theta(.)$. Instead we assume that $\theta(.)$ is determined by calibrating each model to a fixed forward variance curve; see Section~\ref{sec:calibration}. We write 
\[\Phi_\alpha(t,u) = \E{S_t^u}, \qquad \text{$S$ is $\alpha$-rough Heston}\]
for the moment generating function in dependency on $\alpha$ and $(t,u)$. In addition, we set
\begin{equation}\label{eq:KL}
\begin{split}
K_\alpha(t) &:= \int_0^t \kappa_\alpha(s) ds = t^\alpha / (\alpha \Gamma(\alpha))\\
L_{\alpha,\lambda}(t) &= \frac{1}{\lambda} \int_0^t r_{\alpha,\lambda}(s) ds =  \int_0^t s^{\alpha- 1} E_{\alpha,\alpha}(-\lambda s^\alpha) ds,
\end{split}
\end{equation}
where $r_{\alpha,\lambda}$ is the $\lambda$-resolvent kernel from \eqref{eq:resolvent}. Note that both are continuous, positive, strictly increasing functions (`time-changes') with infinite derivative at $t= 0$. However, $K_\alpha(t) \to \infty$ as $t \to \infty$, while $L_{\alpha,\lambda}(t) \to \tfrac{1}{\lambda}$. For both time-changes, there exists a unique solution in $(0,\infty)$ of
\[K_\alpha(t) = t, \quad \text{and} \quad L_{\alpha,\lambda}(t) = t,\]
which we denote by $\mathfrak{T}_\alpha$ and $\mathfrak{T}_{\alpha,\lambda}$ respectively; see also Cor.~\ref{cor:direct_comp} where these times are introduced for a generic kernel $\kappa$. It is easy to calculate that $\mathfrak{T}_\alpha = (\alpha \Gamma(\alpha))^{1/(\alpha-1)}$, while $\mathfrak{T}_{\alpha,\lambda}$ cannot be given in explicit form.

\begin{theorem}\label{thm:mgf_comparison}
Let $\alpha \in (\tfrac{1}{2},1)$, $\rho < 0$ and let $\Phi_1(t,u)$ and $\Phi_\alpha(t,u)$ be the moment generating functions of a non-rough and a rough Heston model, which are calibrated to the same forward variance curve $\xi$. Then 
\[\Phi_1(t,u) \le \Phi_\alpha(t,u)\]
holds 
\begin{itemize}
\item[(a)] for all $u \le \lambda/(\rho \eta)$ and $t \le \mathfrak{T}_\alpha$, and
\item[(b)] for all $u \in (\lambda/(\rho \eta), 0]$ and $t \le \mathfrak{T}_{\alpha,\lambda}$.
\end{itemize}
\end{theorem}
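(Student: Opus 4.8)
The plan is to deduce both inequalities from a pointwise ordering of the Riccati--Volterra solutions $\psi_1,\psi_\alpha$, via the variance-curve representation \eqref{eq:mgf_alternative}. Introduce the ``mean-reversion-free symbol''
\[
g(u,w):=R(u,w)+\lambda w=\tfrac12 u(u-1)+\rho\eta u\,w+\tfrac{\eta^2}{2}w^2 ,
\]
so that \eqref{eq:mgf_alternative} becomes $\log\Phi_\beta(t,u)=\int_0^t\xi(t-s)\,g(u,\psi_\beta(s,u))\,ds$ for $\beta\in\{1,\alpha\}$ (whenever the left side is finite), where both calibrated models share the same non-negative $\xi$ but use the $\beta$-dependent solution $\psi_\beta$. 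Since $w\mapsto g(u,w)$ is increasing on $[0,\infty)$ for all $u\le 0$ when $\rho<0$ (its $w$-derivative $\rho\eta u+\eta^2 w$ being non-negative there), I would first reduce the statement, in each regime, to
\[
0\le\psi_1(s,u)\le\psi_\alpha(s,u)\qquad\text{for }s\in[0,t];
\]
this reduction also handles the finiteness bookkeeping, since if $\Phi_\alpha(t,u)=\infty$ there is nothing to show, while if $\Phi_\alpha(t,u)<\infty$ then $\psi_\alpha(t,u)<\infty$, the comparison forces $\psi_1(t,u)<\infty$, and \eqref{eq:mgf_alternative} is valid for both models.

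For part (a) this is immediate: by Lemma~\ref{lem:caseABCD} the condition $u\le\lambda/(\rho\eta)$ is exactly case (A), where $\psi_1,\psi_\alpha\ge 0$, and Corollary~\ref{cor:direct_comp}(a), applied to the power-law kernel $\kappa_\alpha$ (whose fixed point is $\mathfrak{T}_\alpha$), gives $\psi_1(t,u)\le\psi_\alpha(t,u)$ for $t\le\mathfrak{T}_\alpha$.

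For part (b), $u\in(\lambda/(\rho\eta),0)$ lies in case (B) or (C), where $R(u,\cdot)$ has a decreasing stretch on $[0,\infty)$, so the case-(A) argument does not apply to $R$ directly. The key idea is to pass to the resolvent form: substituting $R=g-\lambda w$ into \eqref{eq:Volterra_Riccati} and using the resolvent identity $\kappa_\alpha-r_{\alpha,\lambda}*\kappa_\alpha=\tfrac1\lambda r_{\alpha,\lambda}$ (and $r_{1,\lambda}=\lambda e^{-\lambda\cdot}$ in the non-rough case), a short computation gives
\[
\psi_\alpha(\cdot,u)=\tfrac1\lambda\,r_{\alpha,\lambda}*g(u,\psi_\alpha(\cdot,u)),\qquad\psi_1(\cdot,u)=e^{-\lambda\cdot}*g(u,\psi_1(\cdot,u)),
\]
i.e.\ both functions solve Volterra--Riccati equations with the \emph{same} symbol $g$ and with kernels $\tfrac1\lambda r_{\alpha,\lambda}$ resp.\ $e^{-\lambda\cdot}$, whose primitives are precisely $L_{\alpha,\lambda}$ resp.\ $\tfrac1\lambda(1-e^{-\lambda\cdot})$; both kernels satisfy Assumption~\ref{ass:kernel} (in particular $r_{\alpha,\lambda}$ is non-negative and decreasing, as one sees from \eqref{eq:resolvent} and the monotonicity of $E_{\alpha,\alpha}(-\cdot)$, or from complete monotonicity of the resolvent of $\lambda\kappa_\alpha$). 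The decisive point is that $g$ \emph{always} falls in case (A) for $u<0$ — since $g(u,0)=\tfrac12 u(u-1)>0$ and $\partial_w g(u,0)=\rho\eta u>0$ — so all of Section~3 applies verbatim with $R$ replaced by $g$. Let $\widetilde\psi(\cdot,u)$ solve $\partial_t\widetilde\psi=g(u,\widetilde\psi)$, $\widetilde\psi(0,u)=0$ (non-negative, strictly increasing). From $e^{-\lambda\cdot}\le 1$ and $g(u,\psi_1)\ge 0$ I obtain $\psi_1(t,u)\le\int_0^t g(u,\psi_1(s,u))\,ds$, hence $\psi_1(t,u)\le\widetilde\psi(t,u)$ by the elementary ODE comparison principle; and the case-(A) part of Theorem~\ref{thm:comparison}, applied to the kernel $\tfrac1\lambda r_{\alpha,\lambda}$ with $R$ replaced by $g$, gives $\widetilde\psi(L_{\alpha,\lambda}(t),u)\le\psi_\alpha(t,u)$. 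Since $L_{\alpha,\lambda}(t)\ge t$ for $t\le\mathfrak{T}_{\alpha,\lambda}$ (the concavity argument from the proof of Corollary~\ref{cor:direct_comp}) and $\widetilde\psi$ is increasing, I conclude
\[
\psi_1(t,u)\le\widetilde\psi(t,u)\le\widetilde\psi\big(L_{\alpha,\lambda}(t),u\big)\le\psi_\alpha(t,u),\qquad t\le\mathfrak{T}_{\alpha,\lambda},
\]
which is what was needed; the endpoint $u=0$ is trivial since $\Phi_\beta(t,0)\equiv 1$.

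The step I expect to be the main obstacle is the resolvent reformulation in part (b): realising that the ``$\lambda$-free'' symbol $g=R+\lambda w$ is the natural object because it is automatically in case (A) for $u<0$ — this is what restores the monotonicity a comparison argument requires, even though $R$ itself has its non-monotone dip in this range — and then verifying that the transformed kernels $\tfrac1\lambda r_{\alpha,\lambda}$ and $e^{-\lambda\cdot}$ still satisfy Assumption~\ref{ass:kernel}, the only non-trivial point being that $r_{\alpha,\lambda}$ is decreasing. After that, the proof is a concatenation of the comparison statements already in place.
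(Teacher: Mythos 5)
Your proof is correct and follows the same route as the paper's: part (a) via Corollary~\ref{cor:direct_comp}(a), and part (b) via the resolvent reformulation $\psi_\alpha = \tfrac{1}{\lambda}r_{\alpha,\lambda}\star R_0(u,\psi_\alpha)$ so that the symbol $R_0 = g$ is always in case (A) for $u \le 0$, then applying the comparison principle for the kernel $\tfrac{1}{\lambda}r_{\alpha,\lambda}$. Your write-up is in fact slightly more explicit than the paper in part (b), where you introduce the intermediate ODE solution $\widetilde\psi$ and verify $\psi_1 \le \widetilde\psi$ before chaining through $\widetilde\psi(L_{\alpha,\lambda}(t),u) \le \psi_\alpha(t,u)$ — a step the paper compresses into ``an application of Corollary~\ref{cor:direct_comp}''.
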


This theorem allows the direct comparison of the moment generating functions of rough and non-rough Heston models for small enough times $t$ and negative $u$. To extend the result to all $t$, we have to make a monotonicity assumption on the forward variance curve and use the time-changes introduced in \eqref{eq:KL}.

\begin{corollary}\label{cor:mgf_comparison}
Let the assumptions of Theorem~\ref{thm:mgf_comparison} hold. In addition, assume that the forward variance curve is flat or increasing. Then
\begin{itemize}
\item[(a)] for all $u \le \lambda/(\rho \eta)$ it holds that 
\[\Phi_1\Big(t \wedge K_\alpha(t),u\Big) \le \Phi_\alpha(t,u)\]
\item[(b)] for all $u \in (\lambda/(\rho \eta),0]$ it holds that 
\[\Phi_1\Big(t \wedge L_{\alpha,\lambda}(t),u\Big) \le \Phi_\alpha(t,u),\]
with $K_\alpha$ and $L_{\alpha,\lambda}$ as in \eqref{eq:KL}.
\end{itemize}
\end{corollary}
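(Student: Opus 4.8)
The plan is to combine Theorem~\ref{thm:mgf_comparison} with the monotonicity of the time-changes $K_\alpha$ and $L_{\alpha,\lambda}$ and a monotonicity property of the moment generating function in the forward variance curve. First I would recall the alternative representation \eqref{eq:mgf_alternative} of the mgf in the calibrated model, which expresses $\Phi_\alpha(t,u)$ as an exponential of a convolution of $\xi$ with a (nonnegative, in the relevant $u$-regime) integrand. From Theorem~\ref{thm:comparison}, in case (A) (i.e. $u\le\lambda/(\rho\eta)$) we have $\psi_1(K_\alpha(s),u)\le\psi_\alpha(s,u)$, and likewise in case (B) with $\overline\psi_1$; the key point is that $R(u,\cdot)+\lambda\,(\cdot)$ evaluated along these solutions stays nonnegative, so that the integrand in \eqref{eq:mgf_alternative} is a nonnegative, increasing function of $s$ on the relevant interval. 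Since $\xi$ is assumed flat or increasing, $\xi(t-s)$ is a decreasing function of $s$; pairing a decreasing weight with a comparison between increasing integrands is exactly the structure already exploited in the proof of Theorem~\ref{thm:comparison}, and this is what lets the time-change pass through the integral.

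The core of the argument is a reduction: for $t\le\mathfrak T_\alpha$ (resp.\ $t\le\mathfrak T_{\alpha,\lambda}$) we have $t\le K_\alpha(t)$ (resp.\ $t\le L_{\alpha,\lambda}(t)$) by concavity of the time-change and the fixed-point property, so $t\wedge K_\alpha(t)=t$ (resp.\ $t\wedge L_{\alpha,\lambda}(t)=t$) and the claimed inequality is precisely Theorem~\ref{thm:mgf_comparison}(a) (resp.\ (b)). Hence the only new content is the range $t>\mathfrak T_\alpha$ (resp.\ $t>\mathfrak T_{\alpha,\lambda}$), where $t\wedge K_\alpha(t)=K_\alpha(t)$ (resp.\ $=L_{\alpha,\lambda}(t)$), using that $K_\alpha,L_{\alpha,\lambda}$ lie below the diagonal beyond their fixed points. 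There I would show directly that
\[
\Phi_1\big(K_\alpha(t),u\big)\le \Phi_\alpha(t,u),\qquad t>\mathfrak T_\alpha,
\]
by applying the monotone time-change argument to the representation \eqref{eq:mgf_alternative}: writing $\Phi_\alpha(t,u)=\exp\big(\int_0^t \xi(t-s) g_\alpha(s,u)\,ds\big)$ with $g_\alpha(s,u):=R(u,\psi_\alpha(s,u))+\lambda\psi_\alpha(s,u)\ge0$ increasing in $s$, and analogously for the non-rough model after the substitution $\sigma=K_\alpha(s)$, $d\sigma=\kappa_\alpha(s)\,ds$, one compares $\int_0^t\xi(t-s)g_\alpha(s,u)\,ds$ with $\int_0^{K_\alpha(t)}\xi(K_\alpha(t)-\sigma)g_1(\sigma,u)\,d\sigma$. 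The inequality $g_1(K_\alpha(s),u)\le g_\alpha(s,u)$ follows from Theorem~\ref{thm:comparison} together with monotonicity of $w\mapsto R(u,w)+\lambda w$ on the relevant range; the weight $\xi$ being increasing handles the distortion introduced by $K_\alpha$ contracting time. The case (b) argument is identical with $\xi$'s calibration done through the resolvent kernel $r_{\alpha,\lambda}$, so that the natural time-change is $L_{\alpha,\lambda}$ rather than $K_\alpha$; one uses representation \eqref{eq:fw_rough_Heston}--\eqref{eq:mgf_alternative} in exactly the same way.

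The main obstacle I anticipate is verifying the sign and monotonicity of the integrand $g_\alpha(s,u)=R(u,\psi_\alpha(s,u))+\lambda\psi_\alpha(s,u)$ uniformly over the relevant $u$-range and over all $s$, including $s>\mathfrak T_\alpha$ where the simple ``$t\le K_\alpha(t)$'' shortcut is no longer available; here one must use the finer bounds of Theorem~\ref{thm:comparison} on $\psi_\alpha$ itself (it stays in the region where $\psi_\alpha\ge0$ in case (A), and between $w_0(u)$-type thresholds in case (B)) to control the quadratic $R(u,\cdot)+\lambda(\cdot)$. A secondary technical point is justifying that the monotone-time-change comparison for the convolution integrals is legitimate when $\xi$ is only assumed increasing (not strictly so) and of the regularity required to make \eqref{eq:mgf_alternative} valid; this should follow by an approximation argument or directly from the proof technique of Theorem~\ref{thm:comparison}, since that proof already accommodates nonnegative decreasing kernels paired with increasing integrands.
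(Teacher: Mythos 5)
There is a genuine gap in the step that goes beyond $\mathfrak{T}_\alpha$. You reduce (correctly) to showing $\Phi_1(K_\alpha(t),u)\le\Phi_\alpha(t,u)$ for $t>\mathfrak{T}_\alpha$, and you propose to do this by a single change of variables $\sigma=K_\alpha(s)$ over the whole interval $[0,t]$, claiming that ``the weight $\xi$ being increasing handles the distortion introduced by $K_\alpha$ contracting time.'' That claim does not survive inspection: after the substitution the left-hand side becomes
\[
\int_0^t \xi\bigl(K_\alpha(t)-K_\alpha(s)\bigr)\, g_1(K_\alpha(s),u)\,\kappa_\alpha(s)\,ds,
\]
and the Jacobian $\kappa_\alpha(s)=s^{\alpha-1}/\Gamma(\alpha)$ blows up as $s\to 0$, while the concavity bound $K_\alpha(t)-K_\alpha(s)\le\kappa_\alpha(s)(t-s)$ only yields $K_\alpha(t)-K_\alpha(s)\le t-s$ when $\kappa_\alpha(s)\le 1$, i.e.\ $s\ge\mathfrak{T}_\alpha$. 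For $s<\mathfrak{T}_\alpha$ the combination $\xi(K_\alpha(t)-K_\alpha(s))\kappa_\alpha(s)$ can exceed $\xi(t-s)$, so the pointwise comparison fails and the increasingness of $\xi$ alone does not save you. Your stated worry (monotonicity of $g_\alpha$ in $s$) is a red herring; that property is not needed and is not where the difficulty lies.

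The paper's proof handles exactly this issue by splitting the integral at $\mathfrak{T}_\alpha$. On $[\mathfrak{T}_\alpha,t]$ it performs the substitution $\sigma=K_\alpha(s)$ and uses $\kappa_\alpha(s)\le 1$ together with $K_\alpha(t)-K_\alpha(s)\le t-s$, $\xi$ increasing, $\psi_1(K_\alpha(s),u)\le\psi_\alpha(s,u)$ (Theorem~\ref{thm:comparison}) and monotonicity of $R_0(u,\cdot)$. On $[0,\mathfrak{T}_\alpha]$ it does \emph{not} substitute; it instead uses the direct comparison $\psi_1(s,u)\le\psi_\alpha(s,u)$ from Corollary~\ref{cor:direct_comp} (valid precisely for $s\le\mathfrak{T}_\alpha$) together with $K_\alpha(t)-s\le t-s$ and $\xi$ increasing. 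Your proposal is missing this decomposition, and without it the argument does not go through; with it, the rest of your outline (reduction for $t\le\mathfrak{T}_\alpha$, case (b) via the resolvent kernel and $L_{\alpha,\lambda}$) is in line with the paper.
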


\begin{proof}[Proof of Theorem~\ref{thm:mgf_comparison}]
Our starting point is the representation \eqref{eq:mgf_alternative} of the moment generating function in a (rough or non-rough) Heston model calibrated to a forward variance curve $\xi$. From this representation, it is clear that the statement 
\begin{equation}\label{eq:mgf_comp_generic}
\Phi_1(t',u) \le \Phi_\alpha(t,u)
\end{equation}
for some $t,t' \in \Rplus$ is equivalent to 
\begin{equation}\label{eq:mgf_comp_explicit}
\int_0^{t'} \xi(t'-s) R_0(u,\psi_1(s,u)) ds  \le \int_0^{t} \xi(t-s) R_0(u,\psi_\alpha(s,u)) ds,
\end{equation}
where we set
\begin{equation}\label{eq:R0}
R_0(u,w) = R(u,w) + \lambda w = \frac{1}{2}(u^2- u) + \rho \eta u w + \frac{\eta^2}{2}w^2.
\end{equation}
From Corollary~\ref{cor:direct_comp}a we obtain that $\psi_1(s,u) \le \psi_\alpha(s,u)$ for all $s \le \mathfrak{T}_\alpha$ and $u \le \lambda / (\rho \eta)$. Since $w \mapsto R_0(u,w)$ is increasing for positive arguments, \eqref{eq:mgf_comp_explicit} follows with $t'=t$ and part (a) of the Theorem is shown.\\
For $u \in (\lambda/(\rho \eta),0]$, we are in the domain of case (B) or (C). Instead of using Corollary~\ref{cor:direct_comp}b (which does not allow direct comparison with the non-rough Heston model) we transform the Volterra-Riccati integral equation \eqref{eq:Volterra_Riccati} using the resolvent kernel $r_{\alpha,\lambda}$ from \eqref{eq:resolvent}. Using the convolution notation $f \star g = \int_0^t f(t-s)g(s)ds$, the resolvent kernel is characterized by the property
\[\lambda \kappa_\alpha - r_{\alpha,\lambda} = \lambda r_{\alpha,\lambda} \star \kappa_\alpha,\]
see e.g. \cite[Ch.~2]{gripenberg1990volterra}. Convolving $\psi_\alpha$ (and suppressing its dependency on $u$) with $r_{\alpha,\lambda}$, we obtain
\[r_{\alpha,\lambda} \star \psi_\alpha =  r_{\alpha,\lambda} \star \kappa_\alpha \star R(u,\psi_\alpha) = \kappa_\alpha \star R(u,\psi_\alpha) - \tfrac{1}{\lambda} r_{\alpha,\lambda} \star R(u,\psi_\alpha).\]
Subtracting this from the Volterra integral equation $\psi_\alpha = \kappa_\alpha \star R(u,\psi_\alpha)$ we obtain
\begin{equation}\label{eq:volterra_resolvent}
\psi_\alpha(t,u) = \tfrac{1}{\lambda} \int_0^t r_{\alpha,\lambda}(t-s)R_0(u,\psi_\alpha(s,u))ds,
\end{equation}
another Volterra integral equation for $\psi_\alpha$, now involving the kernel $\tfrac{1}{\lambda} r_{\lambda,\alpha}$. This kernel satisfies Assumption~\ref{ass:kernel} and an application of Corollary~\ref{cor:direct_comp} yields that $\psi_1(s,u) \le \psi_\alpha(s,u)$ for all $s \le \mathfrak{T}_{\alpha,\lambda}$. Note that the domain of case (A) has to be determined relative to $R_0(u,w)$, which now includes all $u \le 0$. The remaining proof of part (b) follows by repeating the arguments of part (a).
\end{proof}

\begin{proof}[Proof of Corollary~\ref{cor:mgf_comparison}]
Assume $u \le \lambda / (\rho \eta)$ and observe that
\[t \wedge K_\alpha(t) = \begin{cases} t, &\qquad t < \mathfrak{T}_\alpha\\ K_\alpha(t) &\qquad t \ge \mathfrak{T}_\alpha.\end{cases}\]
Thus, for $t < \mathfrak{T}_\alpha$ the claim of the Corollary is already covered by Theorem~\ref{thm:mgf_comparison} and it remains to treat the case $t \ge \mathfrak{T}_\alpha$. From the concavity of $K_\alpha$, it follows that
\begin{equation}\label{eq:K_concave}
K_\alpha(t) - K_\alpha(r) \le \kappa_\alpha(r) (t-r) \le t-r
\end{equation}
for all $\mathfrak{T}_\alpha \le r \le t$; note that $\kappa_\alpha(r) \le 1$ for any such $r$. Moreover, from Theorem~\ref{thm:comparison} we know that $\psi_1(K_\alpha(t),u) \le \psi_\alpha(t,u)$ for all $t \ge 0$. Thus,
\begin{align*}
&\int_{\mathfrak{T}_\alpha}^{K_\alpha(t)} \xi(K_\alpha(t) - s) R_0(u,\psi_1(s,u)) ds=  \\ &\int_{\mathfrak{T}_\alpha}^{t} \xi(K_\alpha(t) - K_\alpha(s)) R_0(u,\psi_1(K_\alpha(s),u)) \kappa_\alpha(s) ds \le \\
&\int_{\mathfrak{T}_\alpha}^{t} \xi(t - s) R_0(u,\psi_\alpha(s,u)) ds 
\end{align*}
where we have used \eqref{eq:K_concave} and the assumption that $\xi$ is increasing in the last inequality. Combining this estimate with 
\[\int_0^{\mathfrak{T}_\alpha} \xi(K_\alpha(t) - s) R_0(u,\psi_1(s,u)) \le \int_0^{\mathfrak{T}_\alpha} \xi(t - s) R_0(u,\psi_\alpha(s,u)) \]
part (a) follows. The proof of part (b) is analogous, replacing $K_\alpha$ by $L_{\alpha,\lambda}$ and using \eqref{eq:volterra_resolvent} as in the proof of Theorem~\ref{thm:mgf_comparison}.
\end{proof}

\section{Comparison of critical moments}
For the rough Heston model $S$ with kernel $\kappa_\alpha$, $\alpha \in (1/2,1]$, the \textit{lower} resp.{} \textit{upper critical moments} are defined by 
\begin{subequations}\label{crit_mom_alpha}
\begin{align}
	  u_{\alpha}^-(t) &:= \inf\{u < 0: \EE[S_t^u] < \infty \}, \qquad t>0,\\
	  u_{\alpha}^+(t) &:= \sup\{u > 1: \EE[S_t^u] < \infty \}, \qquad t>0.
\end{align}
\end{subequations}
It is well-understood that these critical moments encode important information on the tail behavior of the marginal distributions of $S$. Moreover, the critical moments can be written in terms of moment explosion times as
\begin{subequations}\label{crit_mom_t}
\begin{align}
	  u_{\alpha}^-(t) &:= \inf\{u < 0: t < T_\alpha^*(u) \}, \qquad t>0,\\
	  u_{\alpha}^+(t) &:= \sup\{u > 1: t < T_\alpha^*(u) \}, \qquad t>0.
\end{align}
\end{subequations}
This suggests that under suitable conditions on $u \mapsto T_\alpha^*(u)$ the mappings $t \mapsto u_\alpha^\pm(t)$ are its piecewise inverse functions. In the case $\alpha = 1$ this is indeed the case, made precise in the following Lemma, which can be derived by elementary calculus from representation \eqref{eq:T_explicit} of $T_1^*(u)$:
\begin{lemma}\label{lem:T_inversion}
Let $\rho <0$ and let $d_\pm$ be defined as in \eqref{eq:dpm}. The function $u \mapsto T_*(u)$ is a strictly increasing continuous function from $(-\infty,d_-)$ onto $(0,\infty)$ and a strictly decreasing continuous function from $(d_+,\infty)$ onto $(0,\infty)$. Its inverse functions are given by $t \mapsto u_1^-(t)$ and $t \mapsto u_1^+(t)$ on the respective domains, and hence
\begin{equation}\label{eq: explosion_critmom_relation_1}
T_1^*(u_1^\pm(t)) = t, \qquad \forall t > 0.
\end{equation}
\end{lemma}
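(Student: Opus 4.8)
The plan is to read off all assertions from the explicit formula~\eqref{eq:T_explicit} together with the case classification in Lemma~\ref{lem:caseABCD}. For $\rho<0$ the ray $(-\infty,d_-)$ lies entirely in cases (A)--(B), and the ray $(d_+,\infty)$ in case (B); on both of them $\Delta(u)<0$ (they lie outside $[d_-,d_+]$) and $c_1(u)=\tfrac12u(u-1)>0$. Hence on both rays $T_1^*(u)$ is given by the first (arctangent) branch of~\eqref{eq:T_explicit}, a finite, strictly positive, continuous function of $u$. I would then (i) rewrite that branch as a product of two manifestly monotone factors, (ii) read off continuity and the boundary limits, and (iii) deduce bijectivity and identify the two inverses with $u_1^\pm$.

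For step (i) I would use the identity $c_2(u)^2-4\Delta(u)=2\eta^2c_1(u)$ (immediate from~\eqref{eq:discriminant} and~\eqref{eq:R}) to introduce the angle $\vartheta(u):=\arcsin\!\big(c_2(u)/(\eta\sqrt{2c_1(u)})\big)\in(-\tfrac\pi2,\tfrac\pi2)$, which then satisfies $\tan\vartheta(u)=c_2(u)/\big(2\sqrt{-\Delta(u)}\big)$ and $\cos\vartheta(u)=2\sqrt{-\Delta(u)}/\big(\eta\sqrt{2c_1(u)}\big)$. Substituting into~\eqref{eq:T_explicit} yields $T_1^*(u)=\tfrac{\sqrt2}{\eta\sqrt{c_1(u)}}\,\Psi(\vartheta(u))$, with $\Psi(\vartheta):=(\tfrac\pi2-\vartheta)/\cos\vartheta$. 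An elementary computation shows $\Psi$ is strictly decreasing on $(-\tfrac\pi2,\tfrac\pi2)$: the numerator of $\Psi'$ is $N(\vartheta):=(\tfrac\pi2-\vartheta)\sin\vartheta-\cos\vartheta$, and $N'(\vartheta)=(\tfrac\pi2-\vartheta)\cos\vartheta\ge0$ with $N(\tfrac\pi2)=0$, so $N<0$. For the angle itself, I would check that $\tfrac{d}{du}\sin\vartheta(u)$ has the sign of $2c_1c_2'-c_2c_1'=u(\lambda-\tfrac12\rho\eta)-\tfrac\lambda2$, which is negative for $u<u_0$ and positive for $u>u_0$, where $u_0:=\lambda/(2\lambda-\rho\eta)\in(0,\tfrac12)$. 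On $(-\infty,d_-)$ we have $u<d_-<0<u_0$, so $\vartheta$ is strictly decreasing, $\Psi(\vartheta(\cdot))$ strictly increasing, while $c_1$ is strictly decreasing ($c_1'=u-\tfrac12<0$) so $1/\sqrt{c_1}$ is strictly increasing; the product is strictly increasing. On $(d_+,\infty)$ we have $u>d_+>1>u_0$, so $\vartheta$ is strictly increasing and $\Psi(\vartheta(\cdot))$ strictly decreasing, while $c_1$ is strictly increasing so $1/\sqrt{c_1}$ is strictly decreasing; the product is strictly decreasing.

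For step (ii), continuity on each open ray is clear. As $u\to\mp\infty$ one has $\sin\vartheta(u)\to\mp\rho$, so $\vartheta(u)$ stays bounded away from $\pm\tfrac\pi2$ (here $|\rho|<1$ is used) while $1/\sqrt{c_1(u)}\to0$; hence $T_1^*(u)\to0$. As $u\uparrow d_-$ or $u\downarrow d_+$, $\Delta(u)\to0$ forces $\sin\vartheta(u)\to-1$ (note $c_2(d_\pm)<0$, because $\lambda/(\rho\eta)<d_-<0<d_+$), so $\cos\vartheta(u)\to0^+$ and $\Psi(\vartheta(u))\to\infty$; hence $T_1^*(u)\to\infty$. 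Therefore $T_1^*$ restricts to a strictly increasing continuous bijection $(-\infty,d_-)\to(0,\infty)$ and to a strictly decreasing continuous bijection $(d_+,\infty)\to(0,\infty)$. For step (iii), $T_1^*(u)=\infty$ for every $u\in[d_-,d_+]$ (Lemma~\ref{lem:caseABCD} and the third branch of~\eqref{eq:T_explicit}), so for each $t>0$ one has $\{u<0:\,t<T_1^*(u)\}=\big((T_1^*|_{(-\infty,d_-)})^{-1}(t),\,0\big)$; the infimum of this set is $(T_1^*|_{(-\infty,d_-)})^{-1}(t)$, which by~\eqref{crit_mom_t} is precisely $u_1^-(t)$, and symmetrically $u_1^+(t)=(T_1^*|_{(d_+,\infty)})^{-1}(t)$. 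Applying $T_1^*$ then gives $T_1^*(u_1^\pm(t))=t$, i.e.~\eqref{eq: explosion_critmom_relation_1}.

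The only delicate point is strict monotonicity on the right ray $(d_+,\infty)$. On the left ray one can bypass calculus entirely: by the remark after Lemma~\ref{lem_3.1}, $T_1^*(u)=\int_0^\infty \mathrm d\zeta/R(u,\zeta)$ in cases (A)--(B), and there $\partial_uR(u,\zeta)=u-\tfrac12+\rho\eta\zeta<0$ for every $u<d_-<\tfrac12$ and $\zeta\ge0$, so the (positive) integrand increases pointwise in $u$ and hence so does $T_1^*$. On the right ray the integrand is no longer monotone in $u$, which is why the decomposition $T_1^*=\tfrac{\sqrt2}{\eta\sqrt{c_1}}\,\Psi(\vartheta)$ into two decreasing factors is genuinely needed; its only nontrivial ingredient is the discriminant identity $c_2^2-4\Delta=2\eta^2c_1$, which makes both factors transparent.
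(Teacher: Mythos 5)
Your argument is correct, and it carries out exactly the derivation the paper invokes but does not spell out: the paper's justification for Lemma~\ref{lem:T_inversion} is the single phrase ``can be derived by elementary calculus from representation~\eqref{eq:T_explicit},'' so your write-up is filling a genuine gap rather than rederiving a given proof. All the intermediate claims check out: for $\rho<0$ the rays $(-\infty,d_-)$ and $(d_+,\infty)$ lie strictly outside $[d_-,d_+]$, so $\Delta(u)<0$ there and only the arctangent branch of~\eqref{eq:T_explicit} is active (the log branch is never reached when $\rho<0$, since case~(A) forces $u\le\lambda/(\rho\eta)<d_-$ and hence $\Delta<0$); the identity $c_2^2-4\Delta=2\eta^2c_1$ follows at once from~\eqref{eq:discriminant} and justifies the substitution $\sin\vartheta=c_2/(\eta\sqrt{2c_1})$; the resulting factorization $T_1^*=\tfrac{\sqrt2}{\eta\sqrt{c_1}}\,\Psi(\vartheta)$ with $\Psi(\vartheta)=(\tfrac\pi2-\vartheta)/\cos\vartheta$ does split $T_1^*$ into two strictly monotone, strictly positive factors on each ray (the computations $2c_1c_2'-c_2c_1'=u(\lambda-\tfrac12\rho\eta)-\tfrac\lambda2$ and $N'(\vartheta)=(\tfrac\pi2-\vartheta)\cos\vartheta$ are correct); the boundary limits use $|\rho|<1$ and $c_2(d_\pm)<0$, both of which hold under the standing assumptions; and the passage from strict monotonicity plus the $T_1^*=\infty$ plateau on $[d_-,d_+]$ to the identification of the inverse with $u_1^\pm$ via~\eqref{crit_mom_t} is clean.

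One point worth flagging as a genuine bonus rather than a gap: your closing observation explains \emph{why} the factorization into $\tfrac{\sqrt2}{\eta\sqrt{c_1}}$ and $\Psi(\vartheta)$ is needed. On $(-\infty,d_-)$ one can indeed read strict monotonicity directly off $T_1^*(u)=\int_0^\infty d\zeta/R(u,\zeta)$ since $\partial_uR(u,\zeta)=u-\tfrac12+\rho\eta\zeta<0$ uniformly in $\zeta\ge0$; on $(d_+,\infty)$ this fails because $\partial_uR$ changes sign in $\zeta$, and the two-factor decomposition is what restores a transparent sign argument. This is a more explicit and arguably more robust route than the paper's unelaborated appeal to ``elementary calculus,'' where naively differentiating the arctangent branch of~\eqref{eq:T_explicit} in $u$ produces an expression whose sign is not immediately evident on the right ray.
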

We remark that $d_\pm$ are precisely the boundaries between case (B) and (C) and that $T_*(u) = \infty$ for all $u \in [d_-,d_+]$. In the \textit{rough} Heston model ($\alpha < 1$) it is currently only known (from \cite{gerhold2018moment}) that $u \mapsto T_\alpha^*(u)$ are monotone functions (not necessarily in the strict sense) on the same domains as $T_1^*(u)$. For our purposes, however, the following property will be good enough: Directly from \eqref{crit_mom_t}, it follows that $u < u_\alpha^+(t)$ implies $t < T_\alpha^*(u)$ and $u > u_\alpha^-$ implies $t < T_\alpha^*(u)$. By contraposition, we obtain
\begin{equation}\label{eq:implication_T}
t \ge T_\alpha^*(u) \quad \Longrightarrow \quad \begin{cases} u \le u_\alpha^-(t) &\quad \text{if} \; u < d_-\\ u \ge u_\alpha^+(t) &\quad \text{if} \; u > d_+.\end{cases}
\end{equation}

In case (B), Theorem~\ref{thm:heston_upper_bound}, the key comparison principle for the moment explosion time $T_\alpha^*(u)$, is based on $\overline T_1^*(u)$ rather than on $T_1^*(u)$. Therefore, we also define
\begin{align}
	 \overline u_1^+(t) &:= \sup\{u > 1: t < \overline T_1^*(u)\}, \qquad t>0, \\ 
	\overline u_1^-(t) &:= \inf\{u < 0: t < \overline T_1^*(u)\}, \qquad t>0.
\end{align}
Note that there is no stochastic model for which $\overline u_1^\pm(t)$ represent the critical moments and therefore we refer to them as \textit{critical pseudo-moments}. In analogy to Lemma~\ref{lem:T_inversion}, the following can be derived by elementary calculus from \eqref{eq:Tbar}:
\begin{lemma}
Let $\rho < 0$, let $d_\pm$ be defined as in \eqref{eq:dpm} and set
\begin{equation}\label{eq:Tfrak}
\mathfrak{T}_\text{crit} := \overline{T}_1^*\left(\frac{\lambda}{\rho \eta}\right)  = \frac{|\rho| \pi}{\sqrt{\lambda(\lambda - \rho \eta)}}.
\end{equation}
The function $u \mapsto \overline T_*(u)$ is a strictly increasing continuous function from $(\lambda/(\rho \eta),d_-)$ onto $(\mathfrak{T}_\text{crit},\infty)$ and a strictly decreasing continuous function from $(d_+,\infty)$ onto $(0,\infty)$. Its inverse functions are given by $t \mapsto \overline u_1^-(t)$ and $t \mapsto \overline u_1^+(t)$ on the respective domains, and hence
\begin{equation}\label{eq:explosion_critmom_relation_2}
\begin{split}
\overline T_1^*(\overline u_1^-(t)) &= t, \qquad \forall t > \mathfrak{T}_\text{crit},\\
\overline T_1^*(\overline u_1^+(t)) &= t, \qquad \forall t > 0.
\end{split}
\end{equation}
\end{lemma}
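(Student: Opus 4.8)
The plan is to read off everything from the explicit formula \eqref{eq:Tbar}. Write $\delta(u) := \sqrt{-\Delta(u)}$ and $\phi(u) := -c_2(u)/(2\delta(u))$; throughout case (B) one has $\Delta(u) < 0$ and $c_2(u) < 0$ (Lemma~\ref{lem:caseABCD}), so $\delta(u),\phi(u) > 0$ and \eqref{eq:Tbar} becomes
\begin{equation*}
\overline T_1^*(u) = \frac{1}{\delta(u)}\left(\frac{\pi}{2} + \phi(u)\right).
\end{equation*}
I would then: (i) show that on each of the two connected components $(\lambda/(\rho\eta),d_-)$ and $(d_+,\infty)$ of the case-(B) region both factors $\delta^{-1}$ and $\tfrac{\pi}{2}+\phi$ are positive and monotone in the same direction, so that $\overline T_1^*$ is strictly monotone and, since $\delta>0$, continuous there; (ii) compute the one-sided limits of $\overline T_1^*$ at the endpoints of these intervals to identify the images; and (iii) check that the $\inf$ and $\sup$ defining $\overline u_1^-$ and $\overline u_1^+$ single out exactly the inverses of these two branches, so that \eqref{eq:explosion_critmom_relation_2} follows on applying $\overline T_1^*$ to both sides.

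For (i) I treat the two factors separately. Since $4\Delta(u) = (\rho\eta u - \lambda)^2 - \eta^2(u^2-u)$ is a downward-opening parabola in $u$ with roots $d_- < d_+$, its vertex lies in $(d_-,d_+)$, so $-\Delta$, hence $\delta$, is strictly decreasing on $(-\infty,d_-)$ and strictly increasing on $(d_+,\infty)$; therefore $\delta^{-1}$ is strictly increasing on $(\lambda/(\rho\eta),d_-)$ and strictly decreasing on $(d_+,\infty)$. For $\phi$ I use $\phi(u)^{-2} = \eta^2(u^2-u)/c_2(u)^2 - 1 =: h(u) - 1$ — well-defined and $>0$ on case (B) because $\eta^2(u^2-u) - c_2(u)^2 = -4\Delta(u) > 0$ — together with the elementary computation $h'(u) = \eta^2\bigl[(\rho\eta - 2\lambda)u + \lambda\bigr]/c_2(u)^3$. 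On case (B) one has $c_2(u) < 0$, while the affine factor $(\rho\eta - 2\lambda)u + \lambda$ vanishes at $u_0 := \lambda/(2\lambda - \rho\eta) \in (0,1)$, being positive for $u < u_0$ and negative for $u > u_0$. Hence $h' < 0$ on $(\lambda/(\rho\eta),d_-) \subset (-\infty,0)$ and $h' > 0$ on $(d_+,\infty) \subset (u_0,\infty)$, so $\phi = (h-1)^{-1/2}$ is strictly increasing on the first interval and strictly decreasing on the second. A product of two positive functions monotone in the same sense is monotone in that sense, whence $\overline T_1^*$ is strictly increasing on $(\lambda/(\rho\eta),d_-)$ and strictly decreasing on $(d_+,\infty)$.

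For (ii): at $u=\lambda/(\rho\eta)$ one has $c_2=0$, hence $\phi=0$ and $\delta = \sqrt{\lambda(\lambda-\rho\eta)}/(2|\rho|)$, so $\overline T_1^*(\lambda/(\rho\eta)) = \tfrac{\pi}{2}/\delta = |\rho|\pi/\sqrt{\lambda(\lambda-\rho\eta)} = \mathfrak{T}_\text{crit}$. As $u\uparrow d_-$ or $u\downarrow d_+$, $\Delta(u)\to 0$ so $\delta(u)\to 0^+$, while $c_2$ stays strictly negative, so $\phi(u)\to+\infty$ and $\overline T_1^*(u)\to+\infty$; as $u\to+\infty$, $\delta(u)\to\infty$ and $\phi(u)^2 = c_2(u)^2/(\eta^2(u^2-u)-c_2(u)^2) \to \rho^2/(1-\rho^2)$ stays bounded, so $\overline T_1^*(u)\to 0$. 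Combined with (i) this gives the bijections of $(\lambda/(\rho\eta),d_-)$ onto $(\mathfrak{T}_\text{crit},\infty)$ and of $(d_+,\infty)$ onto $(0,\infty)$.

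For (iii): for $u\le\lambda/(\rho\eta)$ (case (A)) the increasing lower envelope $\overline R(u,\cdot)$ coincides with $R(u,\cdot)$ on $[0,\infty)$, so $\overline T_1^*(u) = T_1^*(u)$, which by Lemma~\ref{lem:T_inversion} is strictly increasing and takes values in $(0,\mathfrak{T}_\text{crit}]$ on $(-\infty,\lambda/(\rho\eta)]$; while for $u\in[d_-,d_+]$ (cases (C) and (D)) one has $\overline R(u,0) = R(u,w_0(u)) \le 0$, so $\overline\psi_1(\cdot,u)$ is affine and non-increasing and $\overline T_1^*(u) = +\infty$. Hence, for $t>\mathfrak{T}_\text{crit}$, the set $\{u<0:\,t<\overline T_1^*(u)\}$ is the open interval with endpoints the $(\lambda/(\rho\eta),d_-)$-branch value $(\overline T_1^*)^{-1}(t)$ and $0$, so its infimum is that branch value; and for every $t>0$ the set $\{u>1:\,t<\overline T_1^*(u)\}$ is the open interval with endpoints $1$ and the $(d_+,\infty)$-branch value $(\overline T_1^*)^{-1}(t)$, so its supremum is that branch value. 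This identifies $\overline u_1^-$ and $\overline u_1^+$ as the stated inverse functions, and applying $\overline T_1^*$ to both sides yields \eqref{eq:explosion_critmom_relation_2}. The genuinely delicate step is the monotonicity on $(d_+,\infty)$ in (i): differentiating \eqref{eq:Tbar} there directly produces terms of opposite sign and is inconclusive, which is why I factor $\overline T_1^*$ as $\delta^{-1}\cdot(\tfrac{\pi}{2}+\phi)$ and handle the monotonicity of $\phi$ through the auxiliary ratio $h$.
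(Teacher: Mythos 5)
Your proposal is correct, and there is nothing in the paper to compare it to: the paper dismisses the lemma as something that ``can be derived by elementary calculus from \eqref{eq:Tbar}'' and gives no argument. You have supplied that calculus, and the nontrivial observation that makes it go through — the one the paper's phrasing glosses over — is exactly the one you flag at the end: on $(d_+,\infty)$ the numerator $-c_2(u)$ and the denominator $2\delta(u)$ of $\phi(u)$ are both increasing, so neither a direct quotient-monotonicity argument nor brute differentiation of \eqref{eq:Tbar} settles the sign, and one really does need something like your auxiliary ratio $h(u)=\eta^2(u^2-u)/c_2(u)^2$ whose derivative has a single clean sign change at $u_0=\lambda/(2\lambda-\rho\eta)\in(0,1)$. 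Two further points you handled correctly and that are easy to overlook: (1) the defining sets for $\overline u_1^\pm$ range over all $u<0$ resp. $u>1$, not just case~(B), so one must extend $\overline T_1^*$ there — your identifications $\overline T_1^*=T_1^*$ on case~(A) (since $w_0(u)\le 0$ makes $\overline R=R$ on the relevant half-line) and $\overline T_1^*=\infty$ on cases~(C),(D) (since $\overline R(u,0)=R(u,w_0(u))\le 0$ gives a non-increasing affine $\overline\psi_1$) are both needed and both right; (2) the endpoint value $\overline T_1^*(\lambda/(\rho\eta))=\pi/(2\delta)=|\rho|\pi/\sqrt{\lambda(\lambda-\rho\eta)}$, which also coincides with $T_1^*$ there since $c_2=0$ at that point, so the two formulas match continuously across the (A)/(B) boundary. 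I verified the computation $h'(u)=\eta^2\big[(\rho\eta-2\lambda)u+\lambda\big]/c_2(u)^3$ and the limiting behavior $\phi(u)^2\to\rho^2/(1-\rho^2)$ as $u\to\infty$; everything checks out.
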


We are now prepared to state our main comparison result on critical moments.

\begin{theorem}\label{thm:crit_mom_bound}Let $\rho <0$ and set 
\begin{equation}\label{eq:Tcrit_prime}
\mathfrak{T}'_\text{crit} := (\alpha \Gamma(\alpha) \mathfrak{T}_\text{crit} )^{1/\alpha} = \left(\frac{\alpha \Gamma(\alpha) |\rho| \pi}{\sqrt{\lambda(\lambda - \rho \eta)}}\right)^{1/\alpha}.
\end{equation}
Then the critical moments of the rough Heston model satisfy
\begin{subequations}\label{eq:u_estimates}
\begin{align}
u_\alpha^-(t) &\ge u_1^-\left(\frac{t^{\alpha}}{\alpha \Gamma(\alpha)} \right) \qquad \forall\, t \in (0, \mathfrak{T}'_\text{crit}] \label{eq:u_estimate_a}\\
u_\alpha^-(t) &\ge \overline u_1^-\left(\frac{t^{\alpha}}{\alpha \Gamma(\alpha)} \right) \qquad \forall\, t \in (\mathfrak{T}'_\text{crit},\infty)\\
u_\alpha^+(t) &\le \overline u_1^+\left(\frac{t^{\alpha}}{\alpha \Gamma(\alpha)} \right) \qquad \forall\, t \in (0,\infty).
\end{align}
\end{subequations}
For any $t \le \mathfrak{T}'_\alpha$ the inequalities also remain valid with $\tfrac{t^{\alpha}}{\alpha \Gamma(\alpha)}$ replaced by $t$.
 \end{theorem}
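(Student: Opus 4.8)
The plan is to obtain all three estimates in \eqref{eq:u_estimates} from one common mechanism: combine the moment-explosion bound of Theorem~\ref{thm:heston_upper_bound} with the inversion lemmas for $T_1^*$ and $\overline T_1^*$, and then convert an inequality between explosion times into an inequality between critical moments via \eqref{eq:implication_T}. Throughout I write $s := K_\alpha(t) = t^\alpha/(\alpha\Gamma(\alpha))$, so that $\alpha\Gamma(\alpha) s = t^\alpha$ and $K_\alpha(\mathfrak{T}'_\text{crit}) = \mathfrak{T}_\text{crit}$; for fixed $t>0$ I introduce an auxiliary moment order $u^* := \iota(s)$, where $\iota$ is the relevant inverse function, and the three points to verify are (i) that $u^*$ sits in the case regime (A) or (B) so that Theorem~\ref{thm:heston_upper_bound} applies, (ii) that this forces $T_\alpha^*(u^*) \le t$, and (iii) that \eqref{eq:implication_T} then upgrades $t \ge T_\alpha^*(u^*)$ to the asserted bound on $u_\alpha^\pm(t)$.

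For \eqref{eq:u_estimate_a} I take $u^* = u_1^-(s)$. Since $T_1^*$ and $\overline T_1^*$ coincide at the (A)/(B) boundary $u = \lambda/(\rho\eta)$ with common value $\mathfrak{T}_\text{crit}$ (remark after Lemma~\ref{lem:Tbar}), and $u \mapsto T_1^*(u)$ is strictly increasing on $(-\infty, d_-)$ with inverse $u_1^-$ (Lemma~\ref{lem:T_inversion}), the hypothesis $t \le \mathfrak{T}'_\text{crit}$, equivalently $s \le \mathfrak{T}_\text{crit}$, forces $u^* \le \lambda/(\rho\eta)$, i.e.\ case (A) by Lemma~\ref{lem:caseABCD}. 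Theorem~\ref{thm:heston_upper_bound} then gives $T_\alpha^*(u^*) \le (\alpha\Gamma(\alpha)T_1^*(u^*))^{1/\alpha} = (\alpha\Gamma(\alpha)s)^{1/\alpha} = t$, and because $u^* \le \lambda/(\rho\eta) < d_-$ the implication \eqref{eq:implication_T} yields $u_\alpha^-(t) \ge u^* = u_1^-(s)$, as claimed.

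The remaining two estimates run along the same lines through the case-(B) branch of Theorem~\ref{thm:heston_upper_bound}. For the second one I set $u^* = \overline u_1^-(s)$: by the inversion lemma for $\overline T_1^*$ the condition $t > \mathfrak{T}'_\text{crit}$ (i.e.\ $s > \mathfrak{T}_\text{crit}$) places $u^* \in (\lambda/(\rho\eta), d_-)$, hence in case (B), so $T_\alpha^*(u^*) \le (\alpha\Gamma(\alpha)\overline T_1^*(u^*))^{1/\alpha} = t$ and $u^* < d_-$ delivers $u_\alpha^-(t) \ge u^*$. For the third I set $u^* = \overline u_1^+(s) \in (d_+, \infty)$, which holds for every $s > 0$ and again is case (B); the same computation gives $T_\alpha^*(u^*) \le t$, and $u^* > d_+$ together with \eqref{eq:implication_T} gives $u_\alpha^+(t) \le u^* = \overline u_1^+(s)$. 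For the last assertion, when $t$ is below the fixed point $\mathfrak{T}_\alpha$ of $K_\alpha$ (Corollary~\ref{cor:direct_comp}) the strict concavity of $K_\alpha$ and $K_\alpha(\mathfrak{T}_\alpha) = \mathfrak{T}_\alpha$ give $K_\alpha(t) \ge t$; since $u_1^-$ and $\overline u_1^-$ are increasing and $\overline u_1^+$ is decreasing, replacing $s = K_\alpha(t)$ by the smaller argument $t$ only weakens each inequality, so it continues to hold.

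I do not expect a genuinely hard step; the only real work is the case-bookkeeping in point (i) — checking that each auxiliary order $u^*$ really lands in case (A) or (B), never in (C)/(D) where $T_\alpha^*(u^*) = \infty$ and Theorem~\ref{thm:heston_upper_bound} is vacuous — and keeping track of strictness at the boundary $u = \lambda/(\rho\eta)$ and at the threshold $\mathfrak{T}_\text{crit}$. What makes this clean is precisely that the ranges of $u_1^-$, $\overline u_1^-$ and $\overline u_1^+$ consist entirely of case-(A)/(B) moment orders, so the appropriate branch of Theorem~\ref{thm:heston_upper_bound} is always the one that applies.
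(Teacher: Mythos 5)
Your proof is correct and follows essentially the same route as the paper: identify the auxiliary order $u^* = u_1^-(s)$, $\overline u_1^-(s)$ or $\overline u_1^+(s)$, verify it lies in case (A) or (B), invoke Theorem~\ref{thm:heston_upper_bound} together with the inversion relations \eqref{eq: explosion_critmom_relation_1} and \eqref{eq:explosion_critmom_relation_2} to deduce $T_\alpha^*(u^*) \le t$, and finish with \eqref{eq:implication_T} and the monotonicity of the critical (pseudo-)moments for the last claim. Your explicit case-bookkeeping (checking the ranges of $u_1^-$, $\overline u_1^-$, $\overline u_1^+$ land in (A)/(B)) makes the argument slightly more transparent than the paper's terse ``the other two inequalities are shown analogously'', but it is the same proof.
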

 \begin{proof}
First, observe that $u_1^-(t^\alpha/(\alpha \Gamma(\alpha)))$ is in the domain of case (A) if and only if 
\[\frac{t^\alpha}{\alpha \Gamma(\alpha)} \le T_1^*\left(\frac{\lambda}{\rho \eta}\right) = \mathfrak{T}_\text{crit},\]
which is easily transformed into
\[t \le \mathfrak{T}'_\text{crit} = \left(\frac{\alpha \Gamma(\alpha) |\rho| \pi}{\sqrt{\lambda(\lambda - \rho \eta)}}\right)^{1/\alpha}.\]
For any such $t$ we obtain, using Theorem~\ref{thm:heston_upper_bound} and \eqref{eq: explosion_critmom_relation_1}, that
 \begin{eqnarray*}
 	T_{\alpha}^*\left(u_1^-\left(\frac{t^{\alpha}}{\alpha \Gamma(\alpha)} \right)\right) &\overset{Thm.~\ref{thm:heston_upper_bound}}{\underset{\text{}}{\le}}&  \left(\alpha \Gamma(\alpha) T_1^*\left(u_1^-\left(\frac{t^{\alpha}}{\alpha \Gamma(\alpha)} \right)\right)\right)^{1/\alpha} \\
 	&\overset{\eqref{eq: explosion_critmom_relation_1}}{\underset{\text{}}{=}}& \left(\alpha \Gamma(\alpha) \left( \frac{t^{\alpha}}{\alpha \Gamma(\alpha)} \right) \right)^{1/ \alpha} \\
 	&=& t.
 \end{eqnarray*}
 By \eqref{eq:implication_T}, this implies $u \le u_\alpha^-(t)$, showing the first inequality of \eqref{eq:u_estimates}. The other two inequalities are shown analogously, but  -- owing to the fact that case (B) applies -- the critical pseudo-moments $\overline u_1^\pm(t)$ have to be used instead of $u_1^\pm(t)$. The last claim follows from the fact that $t \le t^\alpha/(\alpha \Gamma(\alpha))$ for all $t \le \mathfrak{T}_\alpha$ and the monotonicity of $u_1^\pm(.)$ and $\overline u_1^\pm(.)$.
 \end{proof}

	\section{Applications to Implied Volatility}\label{sec:implied}
	As known from the work of Roger Lee \cite{lee2004moment}, moment explosions and critical moments are closely related to the shape of the implied volatility smile for deep in-the-money or out-of-the-money options. In this section we will apply Lee's moment formula to our results and compare the smile's asymptotic steepness in the rough and classic Heston model.	
		
	For any given strike $K$ of a European option with maturity $T$, let $x = \log\left(\frac{K}{S_0}\right)$ denote the log-moneyness. Let $\sigma_{\text{iv}}(T,x)$ be the associated implied Black-Scholes volatility and define the \textit{asymptotic implied volatility slope} as 
	\begin{equation}\label{eq:AIVS}
	AIVS^{\pm}(T) = \limsup_{x \to \pm \infty} \sigma_{\text{iv}}^2(T,x)/|x|.
	\end{equation}
	Note that the superscript $\pm$ refers to the left ($-$) and right ($+$) wing of the smile respectively. We also remark that in most models of practical interest, such as the Heston model, the '$\limsup$' can be replaced by a genuine limit, e.g. by applying the theory of regularly varying functions; see \cite{benaim2009regular}. For the rough Heston model, however, it is currently an open question whether the $\limsup$ in \eqref{eq:AIVS} can be replaced by a genuine limit.\\
	The connection between critical moments and the asymptotic implied volatility slope is given by Lee's moment formula:
			
	\begin{proposition}[\cite{lee2004moment}] \label{prop:Lee}
	For all $T > 0$ it holds that
	\begin{equation*}
		AIVS^{-}(T) = \frac{\varsigma( - u^-(T) )}{T},
	\end{equation*}
	and
	\begin{equation*}
		AIVS^{+}(T) = \frac{\varsigma(  u^+(T)-1 )}{T},
	\end{equation*}
	where $ \varsigma(y) = 2 - 4(\sqrt{y^2+y} - y) $ and $u^{\pm}(T)$ are the critical moments.
\end{proposition}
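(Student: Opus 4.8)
This is precisely Roger Lee's moment formula \cite{lee2004moment}, so strictly speaking it suffices to cite that reference; for completeness I sketch the argument, which decomposes into a model-independent ``tail exponent $\leftrightarrow$ option-price decay'' part and a purely Black--Scholes part, and which treats the two wings symmetrically. \emph{Step 1 (critical moment as a logarithmic decay rate).} Writing $C_T(K) = \EE[(S_T-K)^+]$ for the undiscounted call price and $x = \log(K/S_0)$ for log-moneyness, the first step is to establish
\begin{equation*}
u^+(T) - 1 \;=\; \sup\{p \ge 0 : \EE[S_T^{1+p}] < \infty\} \;=\; \liminf_{x \to \infty} \frac{-\log C_T(S_0 e^x)}{x}.
\end{equation*}
The ``$\le$'' direction is Chebyshev's inequality: on $\{S_T>K\}$ one has $S_T \le S_T^{1+p}K^{-p}$, hence $C_T(K) \le \EE[S_T\,\mathbf{1}_{\{S_T>K\}}] \le K^{-p}\EE[S_T^{1+p}]$. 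The ``$\ge$'' direction uses the Fubini identity $\EE[S_T^{1+p}] = (\text{finite boundary term}) + p(p+1)\int_0^\infty K^{p-1}C_T(K)\,dK$, so that a decay $C_T(K) = O(K^{-p-\varepsilon})$ forces the moment to be finite. The left wing is obtained verbatim after replacing $S_T$ with $1/S_T$ and calls with puts $P_T$ as $x \to -\infty$, giving $-u^-(T) = \sup\{q \ge 0 : \EE[S_T^{-q}] < \infty\}$ equal to $\liminf_{x\to -\infty}(-\log P_T(S_0 e^x))/|x|$.

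\emph{Step 2 (Black--Scholes log-asymptotics).} Substituting $C_T(S_0 e^x) = C_{BS}(x,\sigma_{\mathrm{iv}}(T,x))$ and setting $\beta(x) := \sigma_{\mathrm{iv}}^2(T,x)\,T/x$, a computation with the Gaussian tail estimate $N(-z)\sim \varphi(z)/z$ (where $N$ is the standard normal c.d.f.\ and $\varphi$ its density) shows that, whenever $\beta(x)$ stays in a compact subset of $(0,\infty)$,
\begin{equation*}
\frac{-\log C_{BS}(x,\sigma_{\mathrm{iv}}(T,x))}{x} \;-\; g\big(\beta(x)\big) \;\longrightarrow\; 0, \qquad g(\beta) := \frac{1}{2\beta} + \frac{\beta}{8} - \frac{1}{2},
\end{equation*}
where $g$ is continuous, nonnegative, strictly decreasing on $(0,2]$ and strictly increasing on $[2,\infty)$, with $g(2)=0$. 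One also checks the a priori bound $\limsup_{x\to\infty}\beta(x) \le 2$: since $C_T(K)\to 0$ as $K\to\infty$ while $C_{BS}(x,\sqrt{(2+\varepsilon)x/T})\to S_0$, and $C_{BS}$ is strictly increasing in $\sigma$, one has $\beta(x) < 2+\varepsilon$ for all large $x$.

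\emph{Step 3 (combine and invert).} Set $\bar\beta := \limsup_{x\to\infty}\beta(x) \in (0,2]$, so that $AIVS^{+}(T) = \bar\beta/T$. Because $g$ is continuous and decreasing on $(0,2]$, Steps 1 and 2 give $u^+(T)-1 = \liminf_{x} g(\beta(x)) = g(\bar\beta)$. Solving the quadratic $g(\bar\beta)=p$ for the root lying in $(0,2]$ gives $\bar\beta = 2 + 4p - 4\sqrt{p^2+p} = 2 - 4\big(\sqrt{p^2+p}-p\big) = \varsigma(p)$ with $p = u^+(T)-1$, i.e.\ $AIVS^{+}(T) = \varsigma(u^+(T)-1)/T$. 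Running the identical argument on the put side yields $AIVS^{-}(T) = \varsigma(-u^-(T))/T$.

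\textbf{Main obstacle.} The delicate point is the ``$\ge$'' direction of Step 1 combined with the interplay between $\liminf$ and $\limsup$: the exponent equivalence only pins down the decay rate along subsequences, so the Black--Scholes expansion of Step 2 must be applied uniformly enough in $x$ along a subsequence realising $\bar\beta$, and the degenerate regimes $\beta(x)\to 0$ or $\beta(x)\to\infty$ (vanishing or exploding implied variance) must be excluded separately --- this is exactly where the a priori bound $\bar\beta\le 2$ and the monotonicity of $g$ are essential. Once those technicalities are in place, the algebra of Steps 2--3 is routine.
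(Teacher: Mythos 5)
The paper does not prove this proposition; it simply cites it as Lee's moment formula from \cite{lee2004moment}, so there is no internal proof to compare against. Your sketch faithfully reproduces Lee's original two-part argument (option-price decay $\leftrightarrow$ moment finiteness, then the Black--Scholes log-asymptotics with $g(\beta)=\frac{1}{2\beta}+\frac{\beta}{8}-\frac{1}{2}$ and the a priori bound $\bar\beta\le 2$), and the algebra inverting $g$ to obtain $\varsigma$ is correct. Two small points worth tightening if this were to be written out in full: the identity $\EE[S_T^{1+p}]=p(p+1)\int_0^\infty K^{p-1}C_T(K)\,dK$ actually has no boundary term (integrate $S^{1+p}=p(p+1)\int_0^\infty K^{p-1}(S-K)^+\,dK$ and take expectations), and the degenerate case $\bar\beta=0$ (equivalently $u^+(T)=\infty$) should be noted separately, where both sides of the formula vanish since $\varsigma(y)\to 0$ as $y\to\infty$.
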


Applying our comparison results for critical moments to Proposition \ref{prop:Lee} we obtain the following result.
\begin{theorem}\label{thm:implied_vola_bound}
	Let $\rho < 0$ and let $AIVS_{\alpha}^\pm (T)$ and $AIVS_{1}^\pm(T)$ be the asymptotic implied volatility slope in the rough resp. classic Heston model for maturity $T>0$. Then
	\begin{equation*}
		AIVS_{\alpha}^- (T) \ge \frac{T^{\alpha -1}}{\alpha \Gamma(\alpha )} AIVS_1^-\left(\frac{T^{\alpha}}{\alpha \Gamma(\alpha)}\right)
	\end{equation*}
	for all $T \le \mathfrak{T}'_\text{crit}\,$, with $\mathfrak{T}'_\text{crit}$ as in \eqref{eq:Tcrit_prime}.
\end{theorem}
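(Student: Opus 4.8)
The idea is to feed the critical-moment comparison of Theorem~\ref{thm:crit_mom_bound} into Lee's moment formula (Proposition~\ref{prop:Lee}), the only analytic input being an elementary monotonicity property of the function $\varsigma(y) = 2 - 4(\sqrt{y^2+y} - y)$.

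First I record the relevant structural facts. The lower critical moments satisfy $u_\alpha^-(t) \le 0$ and $u_1^-(t) \le 0$ for every $t > 0$ by definition, so $-u_\alpha^-(\cdot)$ and $-u_1^-(\cdot)$ take values in $[0,\infty)$ and $\varsigma$ can be evaluated on them, as in Proposition~\ref{prop:Lee}. On $[0,\infty)$ the function $\varsigma$ is smooth and strictly decreasing: one computes
\[
\varsigma'(y) = 4\left(1 - \frac{2y+1}{2\sqrt{y^2+y}}\right) < 0, \qquad y > 0,
\]
since $(2y+1)^2 = 4y^2 + 4y + 1 > 4y^2 + 4y = \bigl(2\sqrt{y^2+y}\bigr)^2$. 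Consequently the map $u \mapsto \varsigma(-u)$ is increasing on $(-\infty,0]$.

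Next I apply the critical-moment bound. By the definition \eqref{eq:Tcrit_prime} of $\mathfrak{T}'_\text{crit}$ one has $T \le \mathfrak{T}'_\text{crit}$ if and only if $\tfrac{T^{\alpha}}{\alpha\Gamma(\alpha)} \le \mathfrak{T}_\text{crit}$, and in particular the hypothesis $T \le \mathfrak{T}'_\text{crit}$ places $t := T$ in the range $(0,\mathfrak{T}'_\text{crit}]$ required by inequality \eqref{eq:u_estimate_a} of Theorem~\ref{thm:crit_mom_bound}. That inequality gives
\[
u_\alpha^-(T) \;\ge\; u_1^-\!\left(\frac{T^{\alpha}}{\alpha\Gamma(\alpha)}\right).
\]
Since both sides are nonpositive and $u \mapsto \varsigma(-u)$ is increasing on $(-\infty,0]$, this yields
\[
\varsigma\bigl(-u_\alpha^-(T)\bigr) \;\ge\; \varsigma\!\left(-u_1^-\!\left(\frac{T^{\alpha}}{\alpha\Gamma(\alpha)}\right)\right).
\]

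Finally I substitute into Lee's formula. Applying Proposition~\ref{prop:Lee} to the $\alpha$-rough Heston model at maturity $T$ gives $\varsigma(-u_\alpha^-(T)) = T\cdot AIVS_\alpha^-(T)$, while applying it to the classic Heston model ($\alpha=1$) at maturity $\tfrac{T^{\alpha}}{\alpha\Gamma(\alpha)}$ gives $\varsigma\bigl(-u_1^-(\tfrac{T^{\alpha}}{\alpha\Gamma(\alpha)})\bigr) = \tfrac{T^{\alpha}}{\alpha\Gamma(\alpha)}\,AIVS_1^-\!\bigl(\tfrac{T^{\alpha}}{\alpha\Gamma(\alpha)}\bigr)$. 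Inserting both identities into the displayed inequality and dividing by $T$ produces
\[
AIVS_\alpha^-(T) \;\ge\; \frac{T^{\alpha-1}}{\alpha\Gamma(\alpha)}\,AIVS_1^-\!\left(\frac{T^{\alpha}}{\alpha\Gamma(\alpha)}\right),
\]
as claimed. The only points requiring care are the sign bookkeeping for the (nonpositive) critical moments and checking that the two order-reversals — passing from $u^-$ to $-u^-$ and then applying the decreasing $\varsigma$ — compose to an increasing map; both are routine.
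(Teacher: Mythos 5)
Your proof is correct and takes essentially the same route as the paper: apply the critical-moment bound \eqref{eq:u_estimate_a} from Theorem~\ref{thm:crit_mom_bound}, exploit the monotonicity of $\varsigma$, and substitute into Lee's moment formula. The only difference is that you verify $\varsigma' < 0$ explicitly, whereas the paper simply asserts it; the sign bookkeeping and the equivalence $T \le \mathfrak{T}'_\text{crit} \Leftrightarrow T^\alpha/(\alpha\Gamma(\alpha)) \le \mathfrak{T}_\text{crit}$ are both handled correctly.
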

\begin{remark}
This results shows that for small maturities the slope of left-wing implied volatility in the rough Heston model is dramatically steeper than in the non-rough Heston model. This complements known results on small-time behavior of the at-the-money skew in rough models, which explodes at the same rate, cf. \cite{fukasawa2017short}. 
\end{remark}
\begin{proof} Since $T \le \mathfrak{T}'_\text{crit}\,$, we can apply \eqref{eq:u_estimate_a} from Theorem~\ref{thm:crit_mom_bound} to estimate the lower critical moment. Since the function $\varsigma$ is strictly decreasing on $\RR_{ \ge 0}$, a straightforward application of Proposition~\ref{prop:Lee} yields
	\begin{align*}
		AIVS_{\alpha}^-(T) &= \frac{\varsigma \left(- u_{\alpha}^-(T) \right)}{T} \\
		&\ge \frac{\varsigma \left(- u_{1}^-(\frac{T^{\alpha}}{\alpha \Gamma(\alpha)}) \right)}{T} \\
		&= \frac{T^{\alpha -1}}{\alpha \Gamma(\alpha )} \frac{\varsigma \left(- u_{1}^-(\frac{T^{\alpha}}{\alpha \Gamma(\alpha)}) \right)}{\frac{T^{\alpha}}{\alpha \Gamma(\alpha)}} \\
		&= \frac{T^{\alpha -1}}{\alpha \Gamma(\alpha )} AIVS_1^-\left(\frac{T^{\alpha}}{\alpha \Gamma(\alpha)} \right). \qedhere
	\end{align*}
\end{proof}
Theorem~\ref{thm:implied_vola_bound}, which is non-asymptotic in $T$, can be complemented by another result, which is asymptotic in $T$, but also contains information on the right-wing implied volatility slope. Here and below, we use the notation
\begin{align*}
f(t) \sim g(t) \quad &\Longleftrightarrow \quad \lim_{t \to 0} \frac{f(t)}{g(t)} = 1,\\
f(t) \gtrsim g(t) \quad &\Longleftrightarrow \quad \lim_{t \to 0} \frac{f(t)}{g(t)} \ge 1,
\end{align*}
and apply it also to other limits (e.g. $t \to \infty$) when indicated.

\begin{theorem}\label{thm:AIVS_smallT}
Let $\rho <0$ and set
\[C_\pm = \pi  - 2\arctan\left(\frac{\pm \rho}{\sqrt{1 - \rho^2}}\right), \qquad  D = \pi - \frac{2\rho}{\sqrt{1 - \rho^2}}.\]
In the classic Heston model, the limits $AIVS_1^\pm(0) := \lim_{T \downarrow 0} AIVS_1^\pm(T)$ exist and are given by
\begin{equation}
AIVS_1^\pm(0) = \frac{\eta \sqrt{1 - \rho^2}}{2 C_\pm}. 
\end{equation}
In the rough Heston model, it holds that
\begin{subequations}\label{eq:AIVS_small_t}
\begin{align}
AIVS_\alpha^-(T) &\gtrsim \frac{T^{\alpha -1}}{\alpha \Gamma(\alpha )} AIVS_1^-(0) \qquad (\text{as }T \to 0) \label{eq:AIVS_small_t-}\\
AIVS_\alpha^+(T) &\gtrsim \frac{C_+}{D} \frac{T^{\alpha -1}}{\alpha \Gamma(\alpha )} AIVS_1^+(0) \qquad (\text{as }T \to 0). \label{eq:AIVS_small_t+}
\end{align}
\end{subequations}
\end{theorem}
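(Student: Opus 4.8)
Throughout, write $AIVS_1^{\pm}(0):=\lim_{T\downarrow 0}AIVS_1^{\pm}(T)$ for the quantities to be identified. The plan is to feed the critical-moment comparisons of Section~6 into Lee's formula (Proposition~\ref{prop:Lee}), after first pinning down the \emph{exact} small-maturity asymptotics in the classic Heston model. Everything hinges on the elementary expansion
\[
\varsigma(y) = 2 - 4\left(\sqrt{y^2 + y} - y\right) = \frac{1}{2y} + O(y^{-2}) \qquad (y \to \infty),
\]
which follows from $\sqrt{y^2+y} - y = \tfrac12 - \tfrac{1}{8y} + O(y^{-2})$. By Proposition~\ref{prop:Lee}, the behaviour of $AIVS_1^{\pm}(T)$ as $T \to 0$ is therefore governed by that of the critical moments $u_1^{\pm}(T)$, equivalently by that of $T_1^{*}(u)$ as $u \to \mp\infty$.

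First I would work out the asymptotics of $T_1^{*}$. Starting from the arctangent branch of \eqref{eq:T_explicit} and using $\Delta(u)$ from \eqref{eq:discriminant}, a direct computation gives $-\Delta(u) = \tfrac{\eta^2(1-\rho^2)}{4}u^2 + O(|u|)$, hence $\sqrt{-\Delta(u)} \sim \tfrac{\eta\sqrt{1-\rho^2}}{2}|u|$, while $\tfrac{c_2(u)}{2\sqrt{-\Delta(u)}} \to \tfrac{\pm\rho}{\sqrt{1-\rho^2}}$ as $u \to \pm\infty$; consequently $T_1^{*}(u) \sim C_{\pm}/(\eta\sqrt{1-\rho^2}\,|u|)$ as $u \to \pm\infty$, with $C_\pm$ as in the statement. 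Since $u \mapsto T_1^{*}(u)$ is continuous and strictly monotone from $(-\infty, d_{-})$ resp.\ $(d_{+},\infty)$ onto $(0,\infty)$ (Lemma~\ref{lem:T_inversion}), inverting this equivalence yields $-u_1^{-}(T) \sim C_{-}/(\eta\sqrt{1-\rho^2}\,T)$ and $u_1^{+}(T) \sim C_{+}/(\eta\sqrt{1-\rho^2}\,T)$ as $T \to 0$. Substituting into Proposition~\ref{prop:Lee}, using the $\varsigma$-expansion and $\varsigma(u_1^{+}(T)-1) \sim \varsigma(u_1^{+}(T))$, gives
\[
AIVS_1^{\pm}(T) \sim \frac{1}{2\,|u_1^{\pm}(T)|\,T} \sim \frac{\eta\sqrt{1-\rho^2}}{2C_{\pm}} \qquad (T \to 0),
\]
which establishes the existence of $AIVS_1^{\pm}(0)$ and its stated value.

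For the left wing, \eqref{eq:AIVS_small_t-} is then immediate: Theorem~\ref{thm:implied_vola_bound} gives $AIVS_{\alpha}^{-}(T) \ge \tfrac{T^{\alpha-1}}{\alpha\Gamma(\alpha)}AIVS_1^{-}\!\big(\tfrac{T^{\alpha}}{\alpha\Gamma(\alpha)}\big)$ for $T \le \mathfrak{T}'_\text{crit}$, and since $\tfrac{T^{\alpha}}{\alpha\Gamma(\alpha)} \to 0$, continuity of $AIVS_1^{-}$ at $0$ (just shown) makes the right-hand side $\sim \tfrac{T^{\alpha-1}}{\alpha\Gamma(\alpha)}AIVS_1^{-}(0)$. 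For the right wing, Theorem~\ref{thm:implied_vola_bound} is not available (case (B) holds for $u > d_{+}$), so instead I would combine the bound $u_{\alpha}^{+}(t) \le \overline u_1^{+}\!\big(\tfrac{t^{\alpha}}{\alpha\Gamma(\alpha)}\big)$ from Theorem~\ref{thm:crit_mom_bound} with the fact that $\varsigma$ is strictly decreasing on $\RR_{\ge 0}$ to obtain $AIVS_{\alpha}^{+}(T) \ge \tfrac1T\,\varsigma\!\big(\overline u_1^{+}(\tfrac{T^{\alpha}}{\alpha\Gamma(\alpha)})-1\big)$. Repeating the asymptotic analysis for $\overline T_1^{*}$ from \eqref{eq:Tbar} — where the arctangent is replaced by its argument, so that the limiting constant is $D = \pi - \tfrac{2\rho}{\sqrt{1-\rho^2}}$ rather than $C_{+}$ — one gets $\overline T_1^{*}(u) \sim D/(\eta\sqrt{1-\rho^2}\,u)$ and, by \eqref{eq:explosion_critmom_relation_2} and inversion, $\overline u_1^{+}(t) \sim D/(\eta\sqrt{1-\rho^2}\,t)$ as $t \to 0$. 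Plugging in $t = \tfrac{T^{\alpha}}{\alpha\Gamma(\alpha)}$ and using $\varsigma(y) \sim \tfrac{1}{2y}$ gives $AIVS_{\alpha}^{+}(T) \gtrsim \tfrac{\eta\sqrt{1-\rho^2}}{2D}\tfrac{T^{\alpha-1}}{\alpha\Gamma(\alpha)} = \tfrac{C_{+}}{D}\tfrac{T^{\alpha-1}}{\alpha\Gamma(\alpha)}AIVS_1^{+}(0)$, i.e.\ \eqref{eq:AIVS_small_t+}.

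The main obstacle is purely bookkeeping: one must verify that the relation ``$\sim$'' is preserved under inversion of the asymptotically monotone maps $T_1^{*}$ and $\overline T_1^{*}$, and under composition with $\varsigma$ and with the power time-change $T \mapsto \tfrac{T^{\alpha}}{\alpha\Gamma(\alpha)}$; this is routine given the monotonicity/continuity statements of Lemma~\ref{lem:T_inversion} and of the inversion lemma behind \eqref{eq:explosion_critmom_relation_2}, together with the fact that ``$\ge$'' combined with ``$\sim$'' yields ``$\gtrsim$''. The conceptual point worth stressing is that the loss of the factor $C_{+}/D < 1$ on the right wing (note $\arctan x > x$ for $x<0$, so $C_{+} < D$ when $\rho<0$) is exactly the price of having to route the comparison through the critical pseudo-moments $\overline u_1^{+}$ — equivalently, of only having the \emph{increasing lower envelope} comparison of Theorem~\ref{thm:comparison}(b) available in case (B) rather than a direct one.
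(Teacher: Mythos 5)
Your proposal is correct and follows essentially the same route as the paper: derive the $|u|\to\infty$ asymptotics of $T_1^*(u)$ and $\overline T_1^*(u)$ from \eqref{eq:T_explicit} and \eqref{eq:Tbar}, invert to get the small-$t$ behavior of the critical (pseudo-)moments, and feed them through Lee's formula together with the expansion $\varsigma(y)\sim 1/(2y)$ and the comparison estimates of Theorem~\ref{thm:crit_mom_bound}. The only cosmetic difference is that on the left wing you invoke Theorem~\ref{thm:implied_vola_bound} directly, whereas the paper runs the same computation through \eqref{eq:u_estimate_a} as it does on the right wing; these are equivalent since Theorem~\ref{thm:implied_vola_bound} is itself a corollary of Theorem~\ref{thm:crit_mom_bound}.
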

\begin{remark}
This result shows that as $T \to 0$ the right-wing asymptotic implied volatility slope of the rough Heston model explodes at the same power-law rate as the left-wing asymptotic implied volatility slope. We remark that the constant $C_+/D$ which distinguishes the estimates at the left and the right wing, is always within $(0,1)$, given that $\rho < 0$.
\end{remark}
\begin{proof}We first analyze the behavior of $T_1^*(u)$ and  $\overline T_1^*(u)$ as $|u| \to \infty$. To this end note that it follows from \eqref{eq:discriminant} that $\Delta(u) <  0$ for $|u|$ large enough and that 
\[\lim_{u \to \pm \infty} \frac{c_2(u)}{2 \sqrt{-\Delta(u)}} = \frac{\pm \rho}{\sqrt{1 - \rho^2}}.\]
Inserting into \eqref{eq:T_explicit}, we obtain
\[T_1^*(u) \sim |u|^{-1}\frac{C_\pm}{\eta \sqrt{1 - \rho^2}}  \qquad (\text{as }u \to \pm \infty),\]
and from \eqref{eq:Tbar}, we obtain
\[\overline T_1^*(u) \sim u^{-1}\frac{D}{\eta \sqrt{1 - \rho^2}} \qquad (\text{as }u \to \infty).\]
The critical (pseudo-)moments are the piecewise inverse functions of the moment explosion times, and hence
\begin{align*}
u_1^\pm(t) &\sim -t^{-1} \frac{C_\pm}{\eta \sqrt{1 - \rho^2}} \qquad (\text{as }t \to 0)\\
\overline u_1^+(t) &\sim t^{-1} \frac{D}{\eta \sqrt{1 - \rho^2}} \qquad (\text{as }t \to 0).
\end{align*}
To obtain the small-time behaviour of the asymptotic implied volatility slope, it remains to insert these relations into Lee's moment formula. First, note that
\[\lim_{\epsilon \to 0} \frac{\varsigma(1/\epsilon)}{\epsilon}  = \frac{1}{2}.\]
Hence, with focus on the right wing, we conclude that for the classic Heston model
\begin{equation}\label{eq:AIVS_interm}
\lim_{T \to 0} AIVS_1^+(T) = \lim_{T \to 0} \frac{\varsigma(u_1^+(T) - 1)}{T} = \frac{\eta \sqrt{1 - \rho^2}}{2C_+},
\end{equation}
and similarly at the left wing. For the rough Heston model we estimate with Theorem~\ref{thm:crit_mom_bound} (again at the right wing) 
\begin{align*}
\lim_{T \to 0} \frac{\alpha \Gamma(\alpha)}{T^{\alpha-1}} AIVS_\alpha^+(T) &= \lim_{T \to 0}  \frac{\alpha \Gamma(\alpha)}{T^{\alpha}} \varsigma(u_\alpha^+(T) - 1) \\
&\ge \lim_{T \to 0}  \frac{\alpha \Gamma(\alpha)}{T^{\alpha}} \varsigma\left(\overline u_1^+\left(\frac{T^\alpha}{\alpha \Gamma(\alpha)}\right) - 1\right) \\
&= \lim_{T \to 0} \alpha \Gamma(\alpha) \left\{T^\alpha \left(\overline u_1^+\left(\frac{T^\alpha}{\alpha \Gamma(\alpha)}\right) - 1\right) \right\}^{-1} \cdot 
\frac{1}{2} \\
&= \frac{\eta \sqrt{1 - \rho^2}}{2 D}.
\end{align*}
Comparison with \eqref{eq:AIVS_interm} yields \eqref{eq:AIVS_small_t+}. The calculation on the left wing uses $-u_1^-$ instead of $\overline u_1^+ - 1$ and gives
\[\lim_{T \to 0} \frac{\alpha \Gamma(\alpha)}{T^{\alpha-1}} AIVS_\alpha^- (T) \ge  \frac{\eta \sqrt{1 - \rho^2}}{2 C_-},\]
completing the proof.
\end{proof}

\section{Numerical Illustration}
In this section we graphically illustrate and compare the bounds of moment explosions (Thm.~\ref{thm:heston_upper_bound}) and of the asymptotic implied volatility slope (Thms.~\ref{thm:implied_vola_bound} and \ref{thm:AIVS_smallT}) for a concrete choice of the rough Heston model's parameters. We set
\begin{align*}
	\rho &= -0.8, \\
	\lambda &= 2, \\
	\eta &= 0.2,
\end{align*}
and
\begin{equation*}
	\alpha = 0.6,
\end{equation*}
which corresponds to a Hurst parameter of $H=0.1$, which is close to the volatility-based estimate of \cite{gatheral2018volatility}.

\subsection{Moment explosion times}
To provide a better readability, we use the following notations:
\begin{itemize}
	\item  $T_{\text{KM}}^+(u) = \begin{cases} 
 \left( \alpha \Gamma( \alpha) T_1^*(u) \right)^{1/ \alpha} & u \text{ sat. case (A)} \\
\left( \alpha \Gamma( \alpha) \overline T_1^*(u) \right)^{1/ \alpha} & u \text{ sat. case (B)}
\end{cases},
$  \\denotes the combined upper bounds of $T_{\alpha}^*(u)$, introduced in Theorem~\ref{thm:heston_upper_bound},
	\item  $T_{\text{GGP}}^+(u)$ and $T_{\text{GGP}}^-(u)$ denote the upper and lower bound of $T_{\alpha}^*(u)$, introduced in Theorem 4.1 and 4.2 in \cite{gerhold2018moment},
	\item  $T^*_{\alpha,\text{aprx}}(u)$ denotes the approximation of the explosion time $T_{\alpha}^*(u)$, computed by Algorithm 7.5 in \cite{gerhold2018moment}, which is valid for $u$ in case (A).
\end{itemize}

Figure \ref{fig:bounds_alpha0.6} shows a comparison of the moment explosion bounds and the approximation $T^*_{\alpha,\text{aprx}}(u)$ in the given setting. It can be seen that the bound $T_{\text{KM}}^+(u)$ is tighter than $T_{\text{GGP}}^+(u)$ on both sides of $u=0$. Numerical experiments confirm that this relation persists in a large range of parameters, except in very close proximity to the boundary case $\alpha = 0.5$.

\begin{figure}[htbp]
	\includegraphics[width=0.75\textwidth]{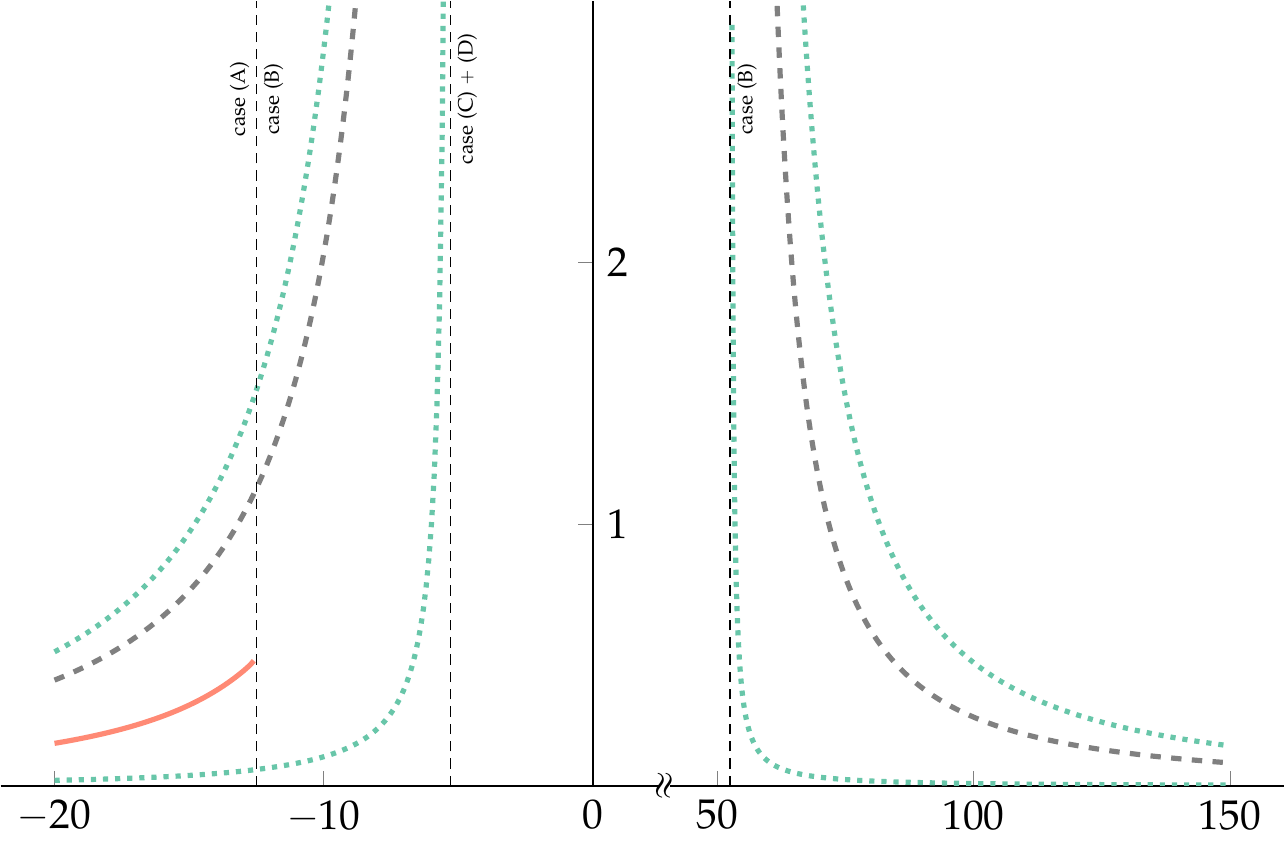}
		\caption{Bounds of the moment explosion time in the rough Heston model for $u \in [-20,150]$ and $\alpha = 0.6$. Grey dashed curves: Our combined upper bound $T_{\text{KM}}^+(u)$, green dotted curves: $T_{\text{GGP}}^+(u)$ and $T_{\text{GGP}}^-(u)$ from \cite{gerhold2018moment}, red solid curve: approximation $T^*_{\alpha,\text{aprx}}(u)$ of true explosion time, valid for $u \le \tfrac{\lambda}{\rho \eta} = -12.5$.}\label{fig:bounds_alpha0.6}
\end{figure}

\subsection{Implied volatility asymptotics} 
In Figures~\ref{fig:vola_neg} and \ref{fig:vola_pos} we illustrate the bounds for the asymptotic implied volatility slope from Theorems~\ref{thm:implied_vola_bound} and \ref{thm:AIVS_smallT}. The bounds shown in the plots are generated as follows: First, we use $T_1^*(u)$ as function of $u$ to compute the critical moments $u^{\pm}_{1}(t)$ of the classic Heston model by numerical root finding. Afterwards we use Lee's moment formula to determine $AIVS^\pm_1(T)$, the asymptotic implied volatility slope in the classic Heston model. The bounds of the rough Heston implied volatility slope $AIVS^\pm_\alpha$ from Theorems~\ref{thm:implied_vola_bound} and \ref{thm:AIVS_smallT} are then computed from $AIVS^\pm_1$. On the left wing of the smile (where case (A) applies for small $T$), we additionally compute an approximation of $AIVS^-_\alpha(T)$, by applying the same procedure to the approximate explosion time $T^*_{\alpha,\text{aprx}}(u)$. 
	
	\begin{figure}[htbp]
\includegraphics[width=0.75\textwidth]{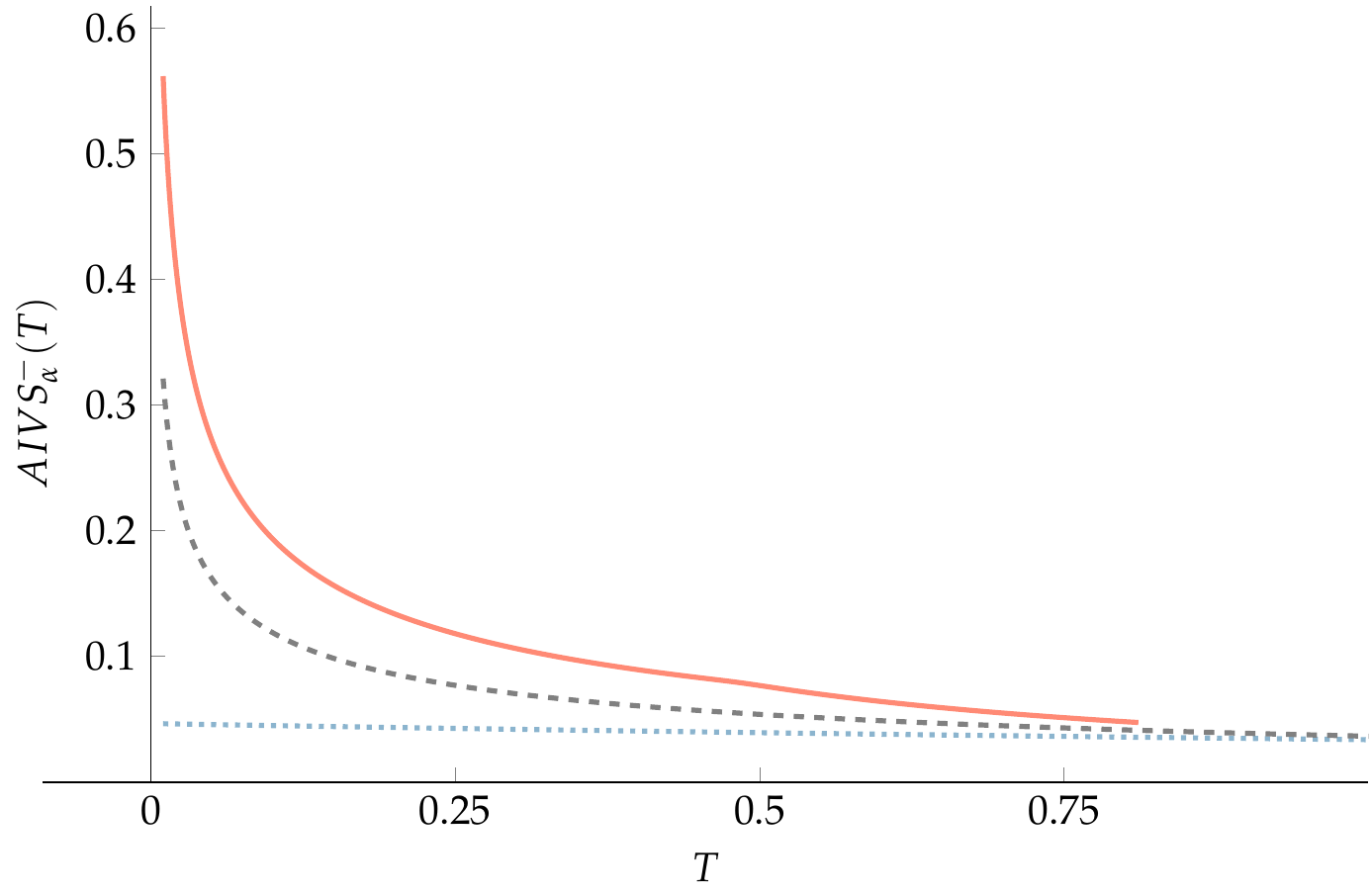}
	\caption{Left-wing asymptotic implied volatility slope ($AIVS^-$) in the classic Heston model (blue dotted), the lower bound from Theorem~\ref{thm:implied_vola_bound} for the rough Heston model (grey dashed) with $\alpha = 0.6$, and an approximation of $AIVS_{\alpha}^-(T)$, derived from $T^*_{\alpha,\text{aprx}}(u)$ (red solid),  valid in case (A), i.e., up to $T^*_{\alpha,\text{aprx}}(\lambda/(\rho \eta)) \approx 0.81$.}\label{fig:vola_neg}
\end{figure}

\begin{figure}[htbp]
	\includegraphics[width=0.75\textwidth]{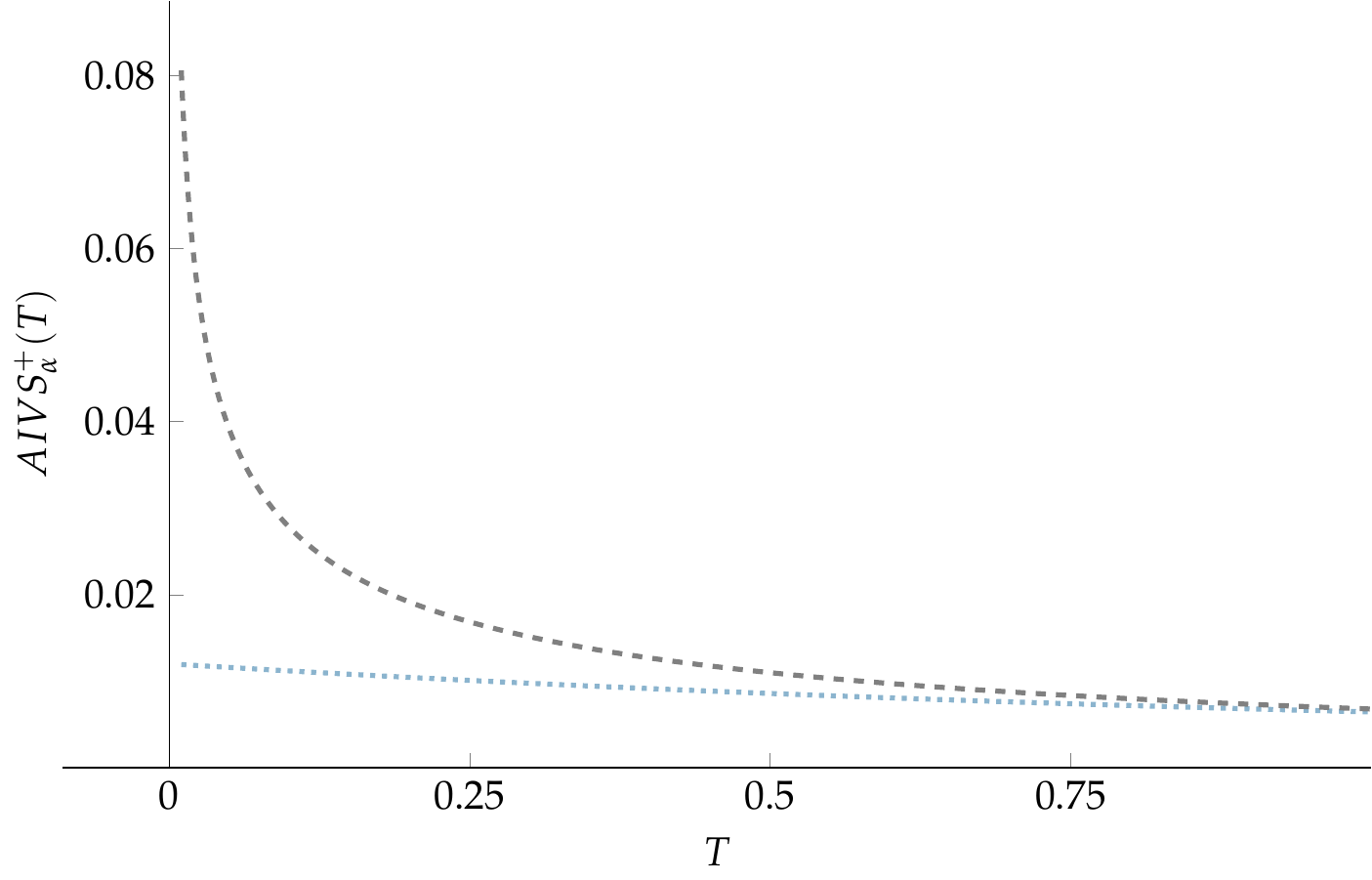}
		\caption{Right-wing asymptotic implied volatility slope ($AIVS^+$) in the classic Heston model (blue dotted), and asymptotic lower bound \eqref{eq:AIVS_small_t+} from Theorem~\ref{thm:AIVS_smallT} for the rough Heston model (grey dashed) with $\alpha = 0.6$.}\label{fig:vola_pos}
	\end{figure}

\bibliographystyle{alpha}
\bibliography{refs}

\newcommand{\etalchar}[1]{$^{#1}$}
\begin{thebibliography}{JKRM13}

\bibitem[AJLP17]{jaber2017affine}
Eduardo Abi~Jaber, Martin Larsson, and Sergio Pulido.
\newblock Affine {V}olterra processes.
\newblock {\em arXiv:1708.08796}, 2017.

\bibitem[AP07]{andersen2007moment}
Leif B.~G. Andersen and Vladimir~V. Piterbarg.
\newblock Moment explosions in stochastic volatility models.
\newblock {\em Finance and Stochastics}, 11(1):29--50, Jan 2007.

\bibitem[Bat96]{bates1996jumps}
David~S. Bates.
\newblock Jumps and stochastic volatility: Exchange rate processes implicit in
  deutsche mark options.
\newblock {\em The Review of Financial Studies}, 9(1):69--107, 1996.

\bibitem[BCC97]{bakshi1997empirical}
Gurdip Bakshi, Charles Cao, and Zhiwu Chen.
\newblock Empirical performance of alternative option pricing models.
\newblock {\em Journal of Finance}, 52(5):2003--49, 1997.

\bibitem[BF09]{benaim2009regular}
Shalom Benaim and Peter Friz.
\newblock Regular variation and smile asymptotics.
\newblock {\em Mathematical Finance: an International Journal of Mathematics,
  Statistics and Financial Economics}, 19(1):1--12, 2009.

\bibitem[BFG{\etalchar{+}}19]{bayer2019short}
Christian Bayer, Peter~K Friz, Archil Gulisashvili, Blanka Horvath, and
  Benjamin Stemper.
\newblock Short-time near-the-money skew in rough fractional volatility models.
\newblock {\em Quantitative Finance}, 19(5):779--798, 2019.

\bibitem[BFL09]{benaim2009black}
Shalom Benaim, Peter Friz, and Roger Lee.
\newblock On black-scholes implied volatility at extreme strikes.
\newblock {\em Frontiers in Quantitative Finance: Volatility and Credit Risk
  Modeling}, pages 19--45, 2009.

\bibitem[B{\"u}h06]{buehler2006volatility}
Hans B{\"u}hler.
\newblock {\em Volatility Markets -- Consistent modeling, hedging and practical
  implementation}.
\newblock PhD thesis, TU Berlin, 2006.

\bibitem[BW98]{browder1998ordinary}
Andrew Browder and Wolfgang Walter.
\newblock {\em Ordinary Differential Equations}.
\newblock Graduate Texts in Mathematics. Springer New York, 1998.

\bibitem[ER18]{eleuch2018perfect}
Omar {El Euch} and Mathieu Rosenbaum.
\newblock Perfect hedging in rough heston models.
\newblock {\em The Annals of Applied Probability}, 28(6):3813--3856, 2018.

\bibitem[ER19]{eleuch2019characteristic}
Omar {El Euch} and Mathieu Rosenbaum.
\newblock The characteristic function of rough heston models.
\newblock {\em Mathematical Finance}, 29(1):3--38, 2019.

\bibitem[FGS19]{forde2019small}
Martin Forde, Stefan Gerhold, and Benjamin Smith.
\newblock Small-time and large-time smile behaviour for the rough heston model.
\newblock {\em preprint}, 2019.

\bibitem[FKR10]{friz2010encyclopedia}
Peter Friz and Martin Keller-Ressel.
\newblock Moment explosions.
\newblock In Rama Cont, editor, {\em Encyclopedia of Quantitative Finance}.
  Wiley, 2010.

\bibitem[FTW19]{fukasawa2019volatility}
Masaaki Fukasawa, Tetsuya Takabatake, and Rebecca Westphal.
\newblock Is volatility rough?
\newblock {\em arXiv:1905.04852}, 2019.

\bibitem[Fuk17]{fukasawa2017short}
Masaaki Fukasawa.
\newblock Short-time at-the-money skew and rough fractional volatility.
\newblock {\em Quantitative Finance}, 17(2):189--198, 2017.

\bibitem[FZ17]{forde2017asymptotics}
Martin Forde and Hongzhong Zhang.
\newblock Asymptotics for rough stochastic volatility models.
\newblock {\em SIAM Journal on Financial Mathematics}, 8(1):114--145, 2017.

\bibitem[Gat06]{gatheral2006volatility}
Jim Gatheral.
\newblock {\em The Volatility Surface: A Practitioner's Guide}.
\newblock Wiley Finance. Wiley, 2006.

\bibitem[GGP18]{gerhold2018moment}
Stefan Gerhold, Christoph Gerstenecker, and Arpad Pinter.
\newblock Moment explosions in the rough heston model.
\newblock arXiv:1801.09458, 2018.

\bibitem[GJR18]{gatheral2018volatility}
Jim Gatheral, Thibault Jaisson, and Mathieu Rosenbaum.
\newblock Volatility is rough.
\newblock {\em Quantitative Finance}, 18(6):933--949, 2018.

\bibitem[GJRS18]{guennoun2018asymptotic}
Hamza Guennoun, Antoine Jacquier, Patrick Roome, and Fangwei Shi.
\newblock Asymptotic behavior of the fractional heston model.
\newblock {\em SIAM Journal on Financial Mathematics}, 9(3):1017--1045, 2018.

\bibitem[GKR19]{gatheral2018affine}
Jim Gatheral and Martin Keller-Ressel.
\newblock Affine forward variance models.
\newblock {\em Finance \& Stochastics}, 2019.
\newblock doi: 10.1007/s00780-019-00392-5.

\bibitem[GLS90]{gripenberg1990volterra}
G.~Gripenberg, S.~O. Londen, and O.~Staffans.
\newblock {\em Volterra Integral and Functional Equations}.
\newblock Encyclopedia of Mathematics and its Applications. Cambridge
  University Press, 1990.

\bibitem[Hes93]{heston1993closed}
Steven~L. Heston.
\newblock A closed-form solution for options with stochastic volatility with
  applications to bond and currency options.
\newblock {\em Review of Financial Studies}, 6:327--343, 1993.

\bibitem[HMS11]{haubold2011mittag}
Hans~J Haubold, Arak~M Mathai, and Ram~K Saxena.
\newblock {M}ittag-{L}effler functions and their applications.
\newblock {\em Journal of Applied Mathematics}, 2011.

\bibitem[Hoh98]{hoh1998symbolic}
Walter Hoh.
\newblock A symbolic calculus for pseudo-differential operators generating
  feller semigroups.
\newblock {\em Osaka journal of mathematics}, 35(4):789--820, 1998.

\bibitem[JKRM13]{jacquier2013large}
Antoine Jacquier, Martin Keller-Ressel, and Aleksandar Mijatović.
\newblock Large deviations and stochastic volatility with jumps: asymptotic
  implied volatility for affine models.
\newblock {\em Stochastics}, 85(2):321--345, 2013.

\bibitem[KR11]{keller-ressel2011moment}
Martin Keller-Ressel.
\newblock Moment explosions and long-term behavior of affine stochastic
  volatility models.
\newblock {\em Mathematical Finance}, 21(1):73--98, 2011.

\bibitem[KRLP18]{keller-ressel2018affine}
Martin Keller-Ressel, Martin Larsson, and Sergio Pulido.
\newblock Affine rough models.
\newblock {\em arXiv:1812.08486}, 2018.

\bibitem[KST06]{kilbas2006theory}
Anatolii~Aleksandrovich Kilbas, Hari~Mohan Srivastava, and Juan~J. Trujillo.
\newblock {\em Theory and Applications of Fractional Differential Equations,
  Volume 204 (North-Holland Mathematics Studies)}.
\newblock Elsevier Science, New York, USA, 2006.

\bibitem[Lee04]{lee2004moment}
Roger~W. Lee.
\newblock The moment formula for implied volatility at extreme strikes.
\newblock {\em Mathematical Finance}, 14(3):469--480, 2004.

\end{thebibliography}
\end{document}